\numberwithin{equation}{section}
\newtheorem*{rep@theorem}{\rep@title}
\newcommand{\newreptheorem}[2]{
\newenvironment{rep#1}[1]{
 \def\rep@title{#2 \ref{##1}}
 \begin{rep@theorem}}
 {\end{rep@theorem}}}
\newcommand{\ceil}[1]{\ensuremath{\lceil #1 \rceil}}
\newcommand{\bN}{{\mathbb{N}}}
\newcommand{\bR}{{\mathbb{R}}}
\newcommand{\Rdst}{{\mathbb{R}^d}}
\newcommand{\Rst}{{\mathbb{R}}}
\newcommand{\Nst}{{\mathbb{N}}}
\newcommand{\Rtdst}{{\mathbb{R}^{2d}}}
\newcommand{\norm}[1]{\lVert#1\rVert}
\newcommand{\bC}{{\mathbb{C}}}
\newcommand{\Bignorm}[1]{\Bigl\|#1\Bigr\|}
\newcommand{\bignorm}[1]{\bigl\lVert#1\bigr\rVert}
\newcommand{\abs}[1]{\ensuremath{\left| #1 \right| }}
\newcommand{\comms}[1]{\mbox{}}
\newcommand{\gabspace}{\mathcal{V}}
\newcommand{\wigfg}{W(f,g)}
\newtheorem{lemma}{Lemma}[section]
\newtheorem{theorem}[lemma]{Theorem}
\newtheorem{coro}[lemma]{Corollary}
\newtheorem{prop}[lemma]{Proposition}
\newtheorem{rem}[lemma]{Remark}
\newtheorem{remark}[lemma]{Remark}
\newtheorem{definition}[lemma]{Definition}
\def\rd{\bR^d}
\def\rdd{{\bR^{2d}}}
\subjclass{}
\keywords{}
\thanks{
L.\ D.\ A.\ was supported by the Austrian Science Fund (FWF): START-project FLAME (”Frames and Linear Operators for
Acoustical Modeling and Parameter Estimation”, Y 551-N13),
and by FWF P. 31225-N32.
K.\ G.\ was
supported in part by the  project P31887-N32 
of the Austrian Science Fund (FWF), 
J. L. R. gratefully acknowledges support from the Austrian Science Fund (FWF):P 29462-N35, and from the WWTF grant INSIGHT (MA16-053).}
\begin{document}
\title{Harmonic analysis in phase space and finite Weyl-Heisenberg ensembles}
\author{Lu\'{\i}s Daniel Abreu}
\address{Acoustics Research Institute\\
Austrian Academy of Sciences\\
Wohllebengasse 12-14, Vienna, 1040, Austria.}
\email{labreu@kfs.oeaw.ac.at}

\author{Karlheinz Gr\"{o}chenig}
\address{Faculty of Mathematics, University of Vienna, Oskar-Morgenstern-Platz 1, 1090 Vienna, Austria.}
\email{karlheinz.groechenig@univie.ac.at}

\author{Jos\'{e} Luis Romero}
\address{Faculty of Mathematics, University of Vienna, Oskar-Morgenstern-Platz 1, 1090 Vienna, Austria\\and\\Acoustics Research Institute, Austrian Academy of Science, Wohl\-leben\-gasse 12-14, 1040 Vienna, Austria.}
\email{jose.luis.romero@univie.ac.at, jlromero@kfs.oeaw.ac.at}
 
\date{}

\begin{abstract}
Weyl-Heisenberg ensembles are translation-invariant determinantal point processes 
on $\mathbb{R}^{2d}$ associated with
the Schr\"odinger representation of the Heisenberg group, and include as
examples the Ginibre ensemble and the polyanalytic ensembles, which model
the higher Landau levels in physics. We introduce finite versions of the
Weyl-Heisenberg ensembles and show that they behave analogously to the
finite Ginibre ensembles. More specifically, guided by the observation that the Ginibre ensemble
with $N$ points is asymptotically close to the restriction of the infinite Ginibre ensemble
to the disk of area $N$, we define finite WH ensembles as adequate finite approximations
of the restriction of infinite WH ensembles to a given domain $\Omega$. We provide a precise rate for the convergence of the corresponding one-point intensities to the indicator function of $\Omega$, as $\Omega$ is dilated and the process is rescaled proportionally (thermodynamic regime). The construction and analysis rely neither on explicit formulas nor on the asymptotics for orthogonal polynomials, but rather on phase-space methods.

Second, we apply our construction to study the pure finite Ginibre-type polyanalytic ensembles,
which model finite particle systems in a single Landau level, and are
defined in terms of complex Hermite polynomials. On a technical level, we
show that finite WH ensembles provide an approximate model for finite
polyanalytic Ginibre ensembles, and we quantify the corresponding deviation.
By means of this asymptotic description, we derive estimates for
the rate of convergence of the one-point intensity of polyanalytic
Ginibre ensembles in the thermodynamic limit. 
\end{abstract}

\maketitle

\section{Introduction}

\subsection{Weyl-Heisenberg ensembles}

We study the class of determinantal point processes on $\mathbb{R}^{2d}$
whose correlation kernel is given as
\begin{equation}  \label{eq:l1}
{K^g}((x,\xi),(x^{\prime },\xi ^{\prime }))=\int_{\mathbb{R}^{d}}e^{2\pi
i(\xi^{\prime } -\xi)t}g(t-x^\prime) \overline{g(t-x )}dt
\end{equation}
for some non-zero (normalized) function $g\in L^2(\bR^d )$ and $(x,\xi),
(x^{\prime },\xi^{\prime }) \in \mathbb{R}^{2d}$. These determinantal point
processes are called Weyl-Heisenberg ensembles (WH ensembles) and have been
introduced recently in \cite{APRT}. They form a large class of
translation-invariant hyperuniform point processes \cite{TorStil, PHRE2009,
ghosh1}.

The prototype of a Weyl-Heisenberg ensemble is the complex Ginibre ensemble.
Choosing $g$ in~\eqref{eq:l1} to be the Gaussian $g(t)=2^{1/4}e^{-\pi t^{2}}$
and writing $z=x+i\xi, z^{\prime }= x^{\prime }+i\xi^{\prime }$, the
resulting kernel is then

\begin{equation}  \label{eq_intro_kg}
{K^g}(z,z^{\prime})=e^{i\pi (x^{\prime }\xi^{\prime }-x\xi )} e^{-\frac{\pi}{2}%
(\left| z \right|  ^{2}+\left| z^{\prime }\right|  ^{2})} e^{\pi \overline{z}
z^{\prime }},
\qquad 
z=x+i\xi, \, z^{\prime }= x^{\prime }+i\xi^{\prime }.
\end{equation}
Modulo conjugation with a phase factor, this is essentially the kernel of
the \emph{infinite Ginibre ensemble} $K_\infty (z,z^{\prime})= e^{-\frac{\pi}{2}
(\left\vert z\right\vert ^{2}+\left\vert z^{\prime }\right\vert ^{2})}e^{\pi
z\overline{z^{\prime }}}$. Another important class of examples arises by
choosing $g$ to be a Hermite function. In this case one obtains a pure
polyanalytic Ginibre ensemble ~\cite{SHIRAI, APRT}, which models the
electron density in a single (pure) higher Landau level (see Section \ref%
{sec_pure_pol} for some background).

The Ginibre ensemble with kernel $K_\infty$ arises as limit of corresponding
processes with $N$ points, whose kernels
\begin{equation}
K_{N}(z,z^{\prime})= e^{-\frac{\pi}{2} (\left\vert z\right\vert ^{2}+\left\vert
z^{\prime }\right\vert ^{2})}\sum_{j=0}^{N-1}\frac{\left( \pi z\overline{%
z^{\prime }}\right) ^{j} }{j!},  \label{finite}
\end{equation}
are obtained simply by truncating the expansion of the exponential $e^{\pi z%
\overline{z^{\prime }}}$. It is not obvious how to obtain the analogous
finite-dimensional process for a general Weyl-Heisenberg ensemble %
\eqref{eq:l1}, because for most choices of $g\in L^{2}({\mathbb{R}}^d)$
there is no treatable explicit formula available for ${K^g}$. We present a
canonical construction of finite Weyl-Heisenberg ensembles and show that
they enjoy properties similar to the finite Ginibre ensemble. The
construction and analysis is based on spectral theory of Toeplitz-like
operators and harmonic analysis of phase space.

The abstract construction is instrumental to study the asymptotic properties
of a particularly important class of finite-dimensional determinantal point
processes, namely the finite pure polyanalytic Ginibre ensembles, which
model the electron density in higher Landau levels. This is an example where
the Plancherel-Rotach asymptotics of the basis functions are not available.
Moreover, the relevant polynomials do not satisfy the classical three-term
recurrence relations which are used in Riemann-Hilbert type methods \cite%
{DeiftRH, DKMVZ}. We develop a new approach based on spectral methods and harmonic analysis in phase space and show that the finite WH ensembles
associated with a Hermite function are asymptotically close to finite polyanalytic ensembles.
Thus, our analysis of the finite polyanalytic ensembles has two steps: (i)
the abstract construction of finite WH ensembles and their thermodynamic
limits; (ii) the comparison of the finite WH ensembles associated with
Hermite functions and the finite pure polyanalytic ensembles.

\subsection{Planar Hermite ensembles}

The complex Hermite polynomials are given by
\begin{equation}
H_{j,r}(z,\overline{z})=\left\{
\begin{tabular}{l}
${\sqrt{\frac{r!}{j!}}\pi ^{\frac{j-r}{2}}z^{j-r}L_{r}^{j-r}\left( \pi
\left\vert z\right\vert ^{2}\right), \qquad j>r \geq 0,}$ \\
${\left( -1\right) ^{r-j}\sqrt{\frac{j!}{r!}}\pi ^{\frac{r-j}{2}}\overline{ z%
}^{r-j}L_{j}^{r-j}\left( \pi \left\vert z\right\vert ^{2}\right), \qquad 0
\leq j\leq r}$,%
\end{tabular}
\right.  \label{ComplexHermite}
\end{equation}
where $L_{r}^\alpha $ denotes the Laguerre polynomial
\begin{align}  \label{eq_lag}
&L_{j}^{\alpha}(x)=\sum\limits_{i=0}^{j}(-1)^{i}\binom{j+\alpha }{j-i}\frac{%
x^{i}}{i!}, \qquad x\in {\mathbb{R}}, \qquad j \geq 0, j+\alpha \geq 0.
\end{align}
Complex Hermite polynomials satisfy the doubly-indexed orthogonality
relation
\begin{equation*}
\int_{\mathbb{C}}H_{j,r}(z,\overline{z})\overline{H_{j\prime ,r\prime }(z,%
\overline{z})}e^{-\pi \left\vert z\right\vert ^{2}}dz={\delta }_{jj\prime }{%
\delta }_{rr\prime },
\end{equation*}
and form an orthonormal  basis of  $L^{2}\left( \mathbb{C}, e^{-\pi
\left\vert z\right\vert ^{2}}\right)$ \cite{AbrGr2012} \footnote{%
Perelomov \cite{Perelomovbook} mentions that \eqref{ComplexHermite} has been
used by Feynman and Schwinger as the explicit expression for the matrix
elements of the displacement operator in Bargmann-Fock space.}.

The complex Hermite polynomials form a complete set of eigenfunctions of the  Landau operator 
\begin{equation}
L_{z}:=-\partial _{z}\partial _{\overline{z}}+\pi \overline{z}\partial _{%
\overline{z}}   \label{2.1.3}
\end{equation}%
acting on the Hilbert space $L^{2}( \mathbb{C}, e^{-\pi
\left\vert z\right\vert ^{2}}) $. The Landau operator is the
Schr\"odinger operator that models the behavior of an electron in $\mathbb{R}^2$ in a
constant magnetic field perpendicular to the $\mathbb{C}$-plane.
The spectrum of $L_{z}$, i.e., the set of possible energy levels, is given by $\sigma (L_{z})=\{r\pi :r=0,1,2,\ldots \}
$ and the  eigenspace associated with the eigenvalue $r\pi $ is called the \emph{%
 Landau level of order }$r$. For the minimal energy $r=0$, i.e., 
the ground state, the eigenspace is the classical Fock space, for
$r>0$,   the eigenspaces are spanned by the orthonormal basis $\{
H_{j,r}: j\in \bN \}$.  The Landau levels are key for the
mathematical formulation of the integer quantum Hall effect
discovered by von Klitzing \cite{Nobel}.

We will consider a variety of ensembles associated with the
complex Hermite polynomials. 

\begin{definition}
Let $J \subseteq\bN_{0}\times \bN_{0}$. The planar Hermite ensemble based on
$J$ is the determinantal point process\ with the correlation kernel
\begin{equation}
K(z,z^{\prime})= e^{-\frac{\pi}{2} ({\left\vert z\right\vert ^{2}+\left\vert
z^{\prime }\right\vert ^{2}})}\sum_{j,r\in J}H_{j,r} \left(z,\overline{z}
\right) \overline{ H_{j,r} \left( z^{\prime },\overline{z^{\prime }} \right)}%
.  \label{complexhermiteensembles}
\end{equation}
\end{definition}

Complex Hermite polynomials are an example of \emph{polyanalytic functions}
- that is, polynomials in $\overline{z}$ with analytic coefficients (see
Section \ref{app_poly}). While most classes of orthogonal polynomials
satisfy a three-term recurrence relation - which puts them in the scope of
Riemann-Hilbert type techniques \cite{DeiftRH,DKMVZ} - the complex Hermite
polynomials satisfy instead a system of doubly-indexed recurrence relations
\cite{Ismail,Ghanmi}.

Several important determinantal point processes arise as special cases of (%
\ref{complexhermiteensembles}). First, since $H_{j,0}(z,\overline{z}%
)=(\pi^{j}/j!)^{\frac{1}{2}}z^{j}$, the set $J=\{0,\ldots,N-1\}\times \{0\}$
in \eqref{complexhermiteensembles} leads to the kernel of the Ginibre
ensemble \eqref{finite}. A second important example arises for $J :=
\{(j,r): 0 \leq j \leq n-1, r=m-n+j\}$ with $n,m \in \mathbb{N}$. The
corresponding one-point intensity is a radial version of the marginal
probability density function of the unordered eigenvalues of a complex
Gaussian Wishart matrix after the change of variables $t \to \pi \left| z
\right|  ^2$, see, e.g. \cite[Theorem 2.17]{verdu}. Thirdly, choosing $%
J=\{0,\ldots,N-1\}\times \{0,\ldots,q-1\}$ one obtains the polyanalytic
Ginibre ensemble introduced by Haimi and Hedenmalm \cite{HendHaimi}. The
polyanalytic Ginibre ensemble gives the probability distribution of a system
composed by several Landau levels. The case of more general interaction potentials has been investigated in \cite{HendHaimi,HaiHen2}, by considering
polyanalytic Ginibre ensembles with general weights. These investigations parallel the ones of weighted Ginibre ensembles \cite{AHM0,AMH1,AMH2}.

We are particularly interested in
finite versions of the infinite pure polyanalytic ensembles defined by
Shirai \cite{SHIRAI}.
The infinite ensembles  are defined by  the reproducing kernels of an  eigenspace of the
Landau operator~\eqref{2.1.3} which is  given by   
\begin{equation*}
K_{r}(z,z^{\prime })=L_{r}^{0}(\pi \left\vert z-z^{\prime }\right\vert
^{2})e^{\pi z\overline{w}-\tfrac{\pi }{2}(\left\vert z\right\vert
^{2}+\left\vert z^{\prime }\right\vert ^{2})}=e^{-\frac{\pi }{2}(\left\vert
z\right\vert ^{2}+\left\vert z^{\prime }\right\vert ^{2})}\sum_{j=0}^{\infty
}H_{j,r}\left( z,\overline{z}\right) \overline{H_{j,r}\left( z^{\prime },%
\overline{z^{\prime }}\right) }\text{.}
\end{equation*}
Here the second identity follows from the fact that  $\left\{
  H_{j,r}\left( z,\overline{z}\right) \right\} _{j\in \mathbb{N}%
}$ spans  the $rth$  eigenspace of the Landau operator.
The corresponding finite pure polyanalytic ensembles can now be defined as
planar Hermite ensembles with $J=\{0,\ldots,N-1\}\times \{r\}$. In
analogy to \eqref{finite},
the finite \emph{$(r,N)$-pure polyanalytic ensemble} is the determinantal
point process with correlation kernel
\begin{equation}  \label{purekernel}
K_{r,N}(z,z^{\prime})= e^{-\frac{\pi}{2} (\left\vert z\right\vert ^{2}+\left\vert
z^{\prime }\right\vert ^{2})} \sum_{j=0}^{N-1} H_{j,r} \left(z,\overline{z}
\right) \overline{H_{j,r}\left(z^{\prime },\overline{z^{\prime }}\right)}.
\end{equation}
While pure polyanalytic ensembles describe individual Landau levels, their finite counterparts model a finite number of particles confined to a certain disk (for example, as the result of a radial potential). In this article, we prove the following theorem, which supports this interpretation, and provides a rate of convergence for the one-point intensity related to each Landau level.

\begin{theorem}
\label{asy_pure} Let $\rho_{r,N}(z)=K_{r,N}(z,z)$ be the one-point intensity
of the finite $(r,N)$-pure polyanalytic Ginibre ensemble. Then, for each $r>0$,
\begin{equation}
\rho_{r,N} \Big(\sqrt{\tfrac{N}{\pi}} \, \cdot \Big)\longrightarrow 1_{%
\mathbb{D}},
\end{equation}
in $L^{1}({\mathbb{R}^{2}})$, as $N\longrightarrow +\infty $. Moreover,
\begin{equation}
\Bignorm{ \rho_{r,N}-1_{\mathbb{D}_{\sqrt{N/\pi}}} } _{1}\leq C_{r} \sqrt{N}.
\label{est1.1}
\end{equation}
\end{theorem}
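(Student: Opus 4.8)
The plan is to reduce the statement, rate included, to a single estimate on the eigenvalues of a localization (Toeplitz) operator, and to obtain that estimate by a Landau/Weyl-law argument, exactly in the spirit of the finite WH ensemble analysis. Write $f_j(z):=\lvert H_{j,r}(z,\overline z)\rvert^{2}e^{-\pi\lvert z\rvert^{2}}\geq 0$, so $\rho_{r,N}=\sum_{j=0}^{N-1}f_j$, put $R:=\sqrt{N/\pi}$ and $\lambda_j:=\int_{\mathbb{C}}1_{\mathbb{D}_R}f_j\,dA\in[0,1]$. I would use only two elementary facts: orthonormality of the complex Hermite polynomials, i.e. $\int_{\mathbb{C}}f_j\,dA=1$, which gives $\sum_{j\ge0}\lambda_j=\lvert\mathbb{D}_R\rvert=\pi R^{2}=N$; and $\sum_{j\ge0}f_j(z)=K_r(z,z)=L_r^{0}(0)=1$ for every $z$ (diagonal of the $r$th Landau-level reproducing kernel), which gives $1-\rho_{r,N}=\sum_{j\ge N}f_j\ge0$ pointwise. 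Splitting $\mathbb{C}$ at the circle $\lvert z\rvert=R$ then gives the exact identity
\begin{equation*}
\bignorm{\rho_{r,N}-1_{\mathbb{D}_R}}_{1}=\int_{\mathbb{D}_R}\bigl(1-\rho_{r,N}\bigr)+\int_{\mathbb{C}\setminus\mathbb{D}_R}\rho_{r,N}=\sum_{j\ge N}\lambda_j+\sum_{j<N}(1-\lambda_j)=2\sum_{j\ge N}\lambda_j ,
\end{equation*}
the last equality using $\sum_{j\ge0}\lambda_j=N$. Hence it suffices to prove $\sum_{j\ge N}\lambda_j\le C_r R$; the rescaled convergence then follows from the change of variables $z=Rw$, whose Jacobian $R^{-2}=\pi/N$ turns this into $\norm{\rho_{r,N}(\sqrt{N/\pi}\,\cdot)-1_{\mathbb{D}}}_{1}\le C_r\pi/\sqrt N\to0$.

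To bound $\sum_{j\ge N}\lambda_j$ I would identify $(\lambda_j)_{j\ge0}$ with the spectrum of the Toeplitz operator $M^{h_r}_{\mathbb{D}_R}$ that compresses multiplication by $1_{\mathbb{D}_R}$ to the reproducing kernel space with kernel $K^{h_r}$, where $h_r$ is the $r$th Hermite function. That space is invariant under phase-space rotations (since $h_r$ is, up to a unimodular factor, fixed by the fractional Fourier transforms) and $1_{\mathbb{D}_R}$ is radial, so $M^{h_r}_{\mathbb{D}_R}$ commutes with rotations; as each rotation-weight subspace is one-dimensional and spanned by the normalized function $e^{-\pi\lvert z\rvert^{2}/2}H_{j,r}(z,\overline z)$, these are precisely the eigenfunctions, with eigenvalue $\int_{\mathbb{D}_R}f_j\,dA=\lambda_j$. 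This is exactly the step in which the finite WH ensemble with a Hermite window serves as the abstract model for the finite pure polyanalytic ensemble. Now the identity
\begin{equation*}
\sum_{j\ge0}\lambda_j(1-\lambda_j)=\operatorname{tr}\Bigl(M^{h_r}_{\mathbb{D}_R}-\bigl(M^{h_r}_{\mathbb{D}_R}\bigr)^{2}\Bigr)=\int_{\mathbb{D}_R}\int_{\mathbb{C}\setminus\mathbb{D}_R}\lvert K^{h_r}(z,w)\rvert^{2}\,dw\,dz ,
\end{equation*}
which uses only $K^{h_r}(z,z)=1$ and the reproducing property, together with the closed form $\lvert K^{h_r}(z,w)\rvert^{2}=L_r^{0}(\pi\lvert z-w\rvert^{2})^{2}e^{-\pi\lvert z-w\rvert^{2}}$ (immediate from the formula for $K_r$ recalled above) and a standard boundary-layer estimate — the double integral localizes to an $O(R)$-area collar of $\partial\mathbb{D}_R$ — yields $\sum_{j\ge0}\lambda_j(1-\lambda_j)\le C_r R$, with $C_r$ governed by a fixed moment of the Laguerre weight $L_r^{0}(u)^{2}e^{-u}$ and hence growing only polynomially in $r$. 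Note that no orthogonal-polynomial asymptotics enter: only the exact formula for $K^{h_r}$ and its rapid decay are used.

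The final step is the elementary lemma: if $(\lambda_j)\subset[0,1]$ is nonincreasing with $\sum_j\lambda_j=N\in\mathbb{N}$ and $\sum_j\lambda_j(1-\lambda_j)\le B$, then, splitting at $N^{*}:=\#\{j:\lambda_j\ge1/2\}$, one has $\sum_{j\ge N^{*}}\lambda_j\le2B$ and $\sum_{j<N^{*}}(1-\lambda_j)\le2B$, hence $\lvert N-N^{*}\rvert\le4B$ and $\sum_{j\ge N}\lambda_j\le6B$; applying it with $B=C_rR$ completes the proof. The main obstacle is precisely the monotonicity hypothesis of this lemma: it cannot be dropped (if $\lambda_j=1$ along an arithmetic progression of length $N$, then $\sum_j\lambda_j(1-\lambda_j)=0$ while $\sum_{j\ge N}\lambda_j\sim N/2$), so one must show that the natural complex-Hermite indexing of the pure polyanalytic ensemble agrees with the decreasing rearrangement of the Toeplitz spectrum, i.e. that $j\mapsto\lambda_j$ is nonincreasing — or at least nonincreasing for $j\ge j_0(r)$, which already suffices after absorbing the $O_r(1)$ leading terms into $C_r$. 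For $r=0$ this is the classical monotonicity of the regularized incomplete Gamma function in its shape parameter; for general $r$ I would prove it from the explicit representation $\lambda_j=\tfrac{r!}{j!}\int_0^{\pi R^{2}}u^{j-r}L_r^{j-r}(u)^{2}e^{-u}\,du$ by a single-crossing argument for the associated densities, using the contiguous recurrences for Laguerre polynomials. Carrying the $r$-dependence through both this monotonicity argument and the boundary-layer estimate is what produces the constant $C_r$, rather than an absolute constant, in \eqref{est1.1}.
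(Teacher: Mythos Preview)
Your reduction $\|\rho_{r,N}-1_{\mathbb{D}_R}\|_1 = 2\sum_{j\ge N}\lambda_j$ and the variance bound $\sum_j \lambda_j(1-\lambda_j)\le C_r R$ via the boundary--layer estimate are both correct, and the elementary lemma converting the variance bound into a tail bound under monotonicity is fine. But the argument stalls at exactly the point you yourself flag as the main obstacle: you need $j\mapsto\lambda_j$ (with $j$ the complex--Hermite index, not the eigenvalue rank) to be eventually nonincreasing, with a threshold $j_0(r)$ independent of $R$. You propose a single--crossing argument via Laguerre contiguous recurrences, but do not carry it out, and it is not clear it works. The paper explicitly exhibits (Figure~\ref{fig_val}, Remark~\ref{rem_mult_1}) that for $r=1$ the eigenvalues $\mu^1_{0,R}$ and $\mu^1_{1,R}$ cross as functions of $R$; whether all such order reversals are confined to indices $j<j_0(r)$ uniformly in $R$ is precisely what you would have to establish, and the relevant densities $u\mapsto \tfrac{r!}{j!}u^{j-r}[L_r^{j-r}(u)]^2 e^{-u}$ are multimodal for $r\ge1$, so stochastic ordering is not automatic.

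The paper avoids this question by a genuinely different mechanism. Rather than ordering the $\mu^r_{j,R}$, it compares them to the Gaussian eigenvalues $\mu^0_{j,R}$ --- which \emph{are} monotone in $j$ (incomplete Gamma functions, Corollary~\ref{coro_ident_3}) --- via a change--of--window estimate $\|H^{h_r}_{D_R}-H^{h_0}_{D_R}\|_{S^1}\le C_r\,|\partial D_R|_1$ (Theorem~\ref{th_change}), proved through the Weyl calculus and an $M^1$ trace--class bound. This yields $\|\mu^r_{\cdot,R}-\mu^0_{\cdot,R}\|_{\ell^1}\lesssim R$ directly, so the Hermite--indexed tail $\sum_{j\ge N}\mu^r_{j,R}$ is within $C_r R$ of the $r=0$ tail, which is then handled by the accumulated--spectrogram estimate (Theorem~\ref{th_quant_one}, Lemma~\ref{lemma_one_point_I}). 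Your route, if the eventual--monotonicity claim could be supplied, would be more self--contained (no pseudodifferential machinery, no appeal to \cite{APR}); the paper's route trades that for needing monotonicity only in the classical $r=0$ case. As written, the monotonicity step is the gap.
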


The convergence rate in Theorem \ref{asy_pure} is independent of the
energy level $r$ of the Landau operator. It is known to be sharp for
the first Landau level $r=0$, and we believe that \eqref{est1.1} is
also sharp for all Landau levels $r\in \bN $. 
\footnote{The first Landau level is also called \emph{ground level}
because it corresponds to the lowest energy.}
\footnote{See also \cite[Proposition 14]{Meckes}
and \cite{Equidistribution}, where it is pointed out that
the sharp rate for the ground level also follows from pointwise estimates for Bergman kernels \cite{Tian}.}

In statistical terms, \eqref{est1.1} means that the number of points of the $(r,N)$-pure polyanalytic Ginibre ensemble that belong to a certain domain
$A \subseteq \mathbb{C}$, $n_{r,N}(A)$, satisfies
\begin{align}
\label{eq_ff}
\mathbb{E}\{ n_{r,N}(A) \} = \abs{\mathbb{D}_{\sqrt{N/\pi}} \cap A} + \mathcal{O}\big(\sqrt{N}\big).
\end{align}
Theorem \ref{asy_pure} supports and validates the interpretation of finite pure polyanalytic ensembles as models for $N$ particles confined to a disk by giving asymptotics for the first order statistics \eqref{eq_ff} that indeed show concentration on the disk area $N$,
up to an error comparable to the perimeter of that disk. In addition, \eqref{eq_ff} implies that, after proper rescaling, the particles are, in expectation, asymptotically equidistributed on the disk. This statistical description is consistent with the notion of a \emph{filling factor} of each Landau level - that is, a certain limit to the number of particles that each level can accommodate. The incremental saturation of each individual Landau level, corresponding to incremental energy levels, is part of the mathematical
description of the integer quantum Hall effect discovered by von Klitzing \cite{Nobel}. 
(The integer quantum Hall effect is not to be confused with the fractional quantum Hall effect, whose mathematical formulation
is related to the Laughlin's wave function \cite{Laughlin} and the
so-called beta-ensembles \cite{BEY, BBNY}.)

As a first step towards a description of finite pure polyanalytic ensembles,
we introduce a general construction of finite versions of Weyl-Heisenberg
ensembles that may be of independent interest.

\subsection{Finite Weyl-Heisenberg ensembles}

\label{sec_intro_cons} The construction of finite WH ensembles relies on
methods from harmonic analysis on phase space \cite{folland89, MR2226126},
and on the spectral analysis of phase-space Toeplitz operators. Write $%
z=(x,\xi) \in {\bR^{2d}}, z^{\prime }= (x^{\prime },\xi ^{\prime }) \in {\bR%
^{2d}} $ for a point in phase space and
\begin{equation}  \label{eq:l2}
\pi (z)f(t):=e^{2\pi i\xi t}f(t-x)
\end{equation}
for the phase-space shift by $z$. Then the kernel in \eqref{eq:l1} is given
by
\begin{equation}  \label{eq:l3}
{K^g}(z,z^{\prime }) = \langle \pi (z^{\prime })g, \pi (z)g\rangle.
\end{equation}
Let us now describe the construction of the finite point processes
associated with the kernel ${K^g}$. For normalized $g\in L^2(\bR^d )$, $%
\|g\|_2 = 1$, the integral operator with kernel ${K^g}$, i.e., $F
\mapsto \int_{\mathbb{R}^{2d}} {K^g}(z,z^{\prime })
F(z^{\prime }) dz^{\prime }$,  is an orthogonal
projection (see for example ~\cite[Chapter 1]{folland89}, \cite[Chapter 9]%
{Charly}). Consequently, the  range of this projection  is a \emph{reproducing kernel Hilbert
  space} $\mathcal{V}_g
\subseteq L^2({\bR^{2d}} )$  with the explicit description
\begin{equation*}
\mathcal{V}_g =\big\{F \in L^2({\bR^{2d}} ): F(z) = \langle f, \pi (z)
g\rangle, \mbox{ for } f \in L^2(\bR^d)\,\big\} \subseteq {L^{2}({\mathbb{R}%
^{2d}})}. 
\end{equation*}
Thus every $F\in \mathcal{V} _g$ is a phase-space representation of a
function $f$ defined on the configuration space $\mathbb{R}^d$.

\emph{Step 1: Concentration as a smooth restriction}. Let $\mathcal{X}^g$ be
a WH ensemble (with correlation kernel ${K^g}$) and let $\Omega \subseteq %
\Rtdst$ be a measurable set. The restriction of $\mathcal{X}^g$ to $\Omega$
is a determinantal point process (DPP) $\mathcal{X}^g_{|\Omega}$ with
correlation kernel
\begin{align}  \label{eq_intro_ker_o}
{K^g}_{|\Omega}(z,z^{\prime }) = 1_\Omega(z) {K^g}(z,z^{\prime })
1_\Omega(z^{\prime }).
\end{align}
An expansion of the kernel ${K^g}_{|\Omega}$ can be obtained as follows. We
consider the \emph{Toeplitz operator} on $\mathcal{V}_g$ defined by
\begin{equation}
  \label{Toe0}
M_{\Omega }^{g}F(z) =\int_{\Omega }F(z^{\prime \prime
}){K^g}(z,z^{\prime \prime 
})\,dz^{\prime \prime} \, .
\end{equation}
Since $F(z^{\prime \prime} ) = \int _{\rdd } F(z^\prime )
K^g(z^{\prime \prime},z^\prime) \, dz^\prime$ for $F \in \mathcal{V}_g$, $M_{\Omega
}^{g} $ can be expressed as an integral operator
\begin{align}
 M_{\Omega }^{g}F(z) &=\int_{\mathbb{R}^{2d}}F(z^{\prime \prime}) \, 1_\Omega (z^{\prime \prime
 }){K^g}(z,z^{\prime \prime })\,dz^{\prime \prime } \\
&= \int_{\Rtdst} F(z^{\prime }) \left[ \int_{\Rtdst} {K^g} (z,z^{\prime
\prime }) 1_\Omega (z^{\prime \prime }) {K^g} (z^{\prime \prime },z^{\prime
}) \, dz^{\prime \prime }\right] \, dz^{\prime }.  \label{Toe}
\end{align}
 By definition \eqref{Toe0}, $M_{\Omega }^{g}$ acts on a function $F
 \in \mathcal{V}_g$ by multiplication by $1_\Omega$, followed by
 projection onto  $\mathcal{V}_g$.
On the other hand, if $F \in \mathcal{V}_g^\perp$, then the expression
in \eqref{Toe} vanishes. Thus, the formula in \eqref{Toe} defines the
extension of $M_{\Omega }^{g}$ to $L^2(\mathbb{R}^{2d})$ that is $0$
on $L^2(\mathbb{R}^{2d}) \ominus \mathcal{V}_g$. For $\Omega \subseteq
{\bR^{2d}}$ of finite measure, $M^g_\Omega $ is a compact positive
(self-adjoint) operator on $L^2({\bR^{2d}} 
)$; see for example ~\cite{cogr03,ro12}. By the spectral theorem, $M^g_\Omega$ is
diagonalized by an orthonormal set $\{p_{g,j}^{\Omega }: j \in \bN \}
\subseteq \mathcal{V}_g$ of
eigenfunctions, with corresponding eigenvalues $\lambda _j =
\lambda_{j}^{\Omega }$ (ordered non-increasingly):
\begin{equation}  \label{eq_eigenexp}
M_{\Omega}^{g} = \sum_{j \geq 1} \lambda_{j}^{\Omega } \, p_{g,j}^{\Omega }
\otimes p_{g,j}^{\Omega }.
\end{equation}
The key property is that the eigenfunctions $p_{g,j}^{\Omega }$ are \emph{%
doubly-orthogonal}: since 
\begin{align*}
\left<M_{\Omega}^{g} F,F\right>  = \int_\Omega
\left| F(z) \right|  ^2 dz, \qquad F \in L^2(\mathbb{R}^{2d}),
\end{align*}
it follows that
\begin{align*}
\left<p_{g,j}^{\Omega },p_{g,j^{\prime }}^{\Omega }\right> _{L^2(\Omega)}
=\left<M_{\Omega}^{g} p_{g,j}^{\Omega},p_{g,j^{\prime }}^{\Omega }\right>
_{L^2(\Rtdst)} = \lambda_{j}^{\Omega } \delta_{j,j^{\prime }},
\end{align*}
and consequently the restricted kernel has the orthogonal expansion
\begin{align}  \label{eq_intro_ker_o2}
{K^g}_{|\Omega}(z,z^{\prime }) = \sum_{j \geq 1} \left(
p_{g,j}^{\Omega}(z)1_\Omega(z) \right) \cdot \left( \overline{%
p_{g,j}^{\Omega}(z^{\prime })}1_\Omega(z^{\prime }) \right);
\end{align}
see Section \ref{sec_duality} for details. Note that in %
\eqref{eq_intro_ker_o2}, the functions $p_{g,j}^{\Omega}(z)1_\Omega(z)$ are
not normalized. In fact,
\begin{align}  \label{eq:o11}
\int_\Omega \left| p_{g,j}^{\Omega}(z) \right|  ^2 dz = \lambda_{j}^{\Omega
}.
\end{align}
Thus, while in \eqref{eq_intro_ker_o2} the basis functions are restricted to
the domain $\Omega$, the expansion of the Toeplitz operator \eqref{eq_eigenexp} involves the non-truncated functions $p_{g,j}^{\Omega}(z)$ weighted by the measure of their concentration on $\Omega$ \eqref{eq:o11}. We call the
DPP with correlation kernel corresponding to \eqref{Toe} the \emph{concentration} of the full WH ensemble to $\Omega$ and denote it by $\mathcal{X}^{g,\mathrm{con}}_{\Omega}$. This process is thus a smoother
variant of the restricted process $\mathcal{X}^{g}_{|\Omega}$,
because it involves the (smooth) functions $p_{g,j}^{\Omega}(z)$ instead of their truncations $p_{g,j}^{\Omega}(z) 1_\Omega(z)$, which may have discontinuities along $\partial \Omega$. The
construction of DPPs from the spectrum of self-adjoint operators has been
suggested in \cite{BO1,BO2} as an analogue of the construction of DPPs from
the spectral measure of a group. In a related work \cite{Neretin}, a
combination of methods from operator theory and representation theory has been used to show that a
DPP is the spectral measure for an explicit commutative group of
Gaussian operators in the fermionic Fock space.

\emph{Step 2: Spectral truncation}. Since $\left<M_{\Omega}^{g} F,F\right>  =
\int_\Omega \left| F \right|  ^2$, by the min-max principle,
\begin{align}  \label{eq_minimax}
\lambda^\Omega_j = \max \left\{\int_\Omega \left| F(z) \right|  ^2 dz: %
\norm{F}_2=1, F \in \mathcal{V}_g, F \perp p_{g,1}^{\Omega }, \ldots,
p_{g,j-1}^{\Omega} \right\}.
\end{align}
Thus, the eigenvalues $\lambda_{j}^{\Omega }$
describe the best possible simultaneous phase-space concentration of
waveforms within $\Omega$. In particular, \eqref{eq_minimax} implies
that 
\begin{align*}
0 \leq \lambda^\Omega_j \leq 1, \qquad j \geq 1.
\end{align*}

It is well-known that there are $\approx \left| \Omega \right|$ 
eigenvalues $\lambda_{j}^{\Omega }$ that are close to $1$. As a
precise statement
we cite the following \emph{Weyl-type law}: for any $\delta \in (0,1)$,
\begin{equation}  \label{eq:l4}
\left\vert \#\{j:\lambda _{j}^{\Omega }>1-\delta \}-\left\vert \Omega
\right\vert \right\vert \leq \max \left\{\frac{1}{\delta}, \frac{1}{1-\delta} \right\} C_{g} \left| \partial \Omega \right|
_{2d-1},
\end{equation}
where $\left| \partial \Omega \right|  _{2d-1}$ is the perimeter of $\Omega$
(the surface measure of its boundary), and $C_{g}$ is a constant
depending explicitly on $g$. See for instance \cite[Proposition
~3.4]{AGR} or~\cite{DFN02}.  The dependence of the constant $C_g$ on
$g$  is made explicit below  in \eqref{eq_mst}. 

\begin{figure}[tbp]
\centering
\subfigure{
\includegraphics[scale=1]{./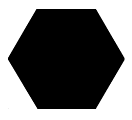}
} \subfigure
{\ \includegraphics[scale=0.3]{./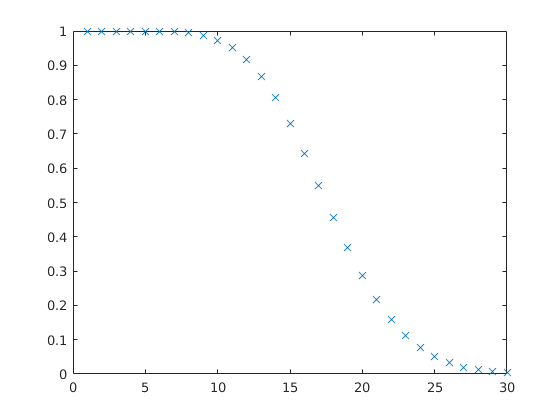} }
\caption{A plot of the eigenvalues of the Toeplitz operator $M^g_\Omega$,
with $g$ a Gaussian window and $\Omega$ of area $\approx$ 18.}
\label{fig_dom}
\end{figure}

\begin{figure}[tbp]
\centering
\includegraphics[scale=0.4]{./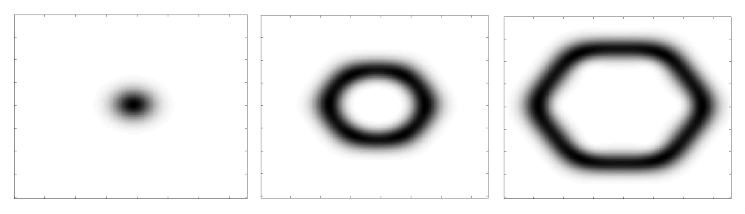}
\caption{The eigenfunctions \# 1, 7, 18 corresponding to the operator in
Fig. \protect\ref{fig_dom}}
\end{figure}

We now look into the concentrated process $\mathcal{X}^{g,\mathrm{con}%
}_{\Omega}$ introduced in Step 1. The Toeplitz operator $M^g_\Omega$ is not
a projection. However, the corresponding DPP can be realized as a random
mixture of DPP's associated with projection kernels \cite[Theorem 4.5.3]%
{DetPointRand}. Indeed, if $I_j \sim \mbox{Bernoulli}(\lambda_{j}^{\Omega })$
are independent (taking the value $1$ or $0$ with probabilities $%
\lambda_{j}^{\Omega }$ and $1-\lambda_{j}^{\Omega }$ respectively), then $%
\mathcal{X}^{g,\mathrm{con}}_{\Omega}$ is generated by the kernel
corresponding to the random operator
\begin{equation}
M_{\Omega}^{g, \mathrm{ran}} = \sum_{j \geq 1} I_j \cdot p_{g,j}^{\Omega }
\otimes p_{g,j}^{\Omega }.
\end{equation}
Precisely, this means that one first chooses a realization of the $I_j$'s
and then a realization of the DPP with the kernel above. Because of %
\eqref{eq:l4}, the first eigenvalues $\lambda_j$ are close to $1$ and thus
the corresponding $I_j$ will most likely be $1$. Similarly, for $j \gg
\left| \Omega \right|  $, the corresponding $I_j$ will most likely be $0$.
As a finite-dimensional model for WH ensembles, we propose replacing the
random Bernoulli mixing coefficients with
\begin{eqnarray}  \label{eq_outcome}
\left\{ \begin{aligned} &1, &\mbox{ for } j \leq \ensuremath{\left| \Omega
\right| }, \\ &0, &\mbox{ for } j > \ensuremath{\left| \Omega \right| }.
\end{aligned} \right.
\end{eqnarray}

\begin{definition}
\label{def_intro_wh} Let $g \in L^2(\Rdst)$ be of norm 1 - called
\emph{the window function}, let $\Omega \subseteq \Rtdst$ with non-empty interior and
finite measure and perimeter, and let $N_{\Omega }=\left\lceil \left\vert
\Omega \right\vert \right\rceil$ the least integer greater than or equal to
the Lebesgue measure of $\Omega $. The \emph{finite Weyl-Heisenberg ensemble}
is the determinantal point process\ $\mathcal{X}^g_\Omega$ with correlation
kernel \footnote{%
We do not denote this kernel by $K^g_{\Omega }$ in order to avoid a possible
confusion with the restricted kernel ${K^g}_{|\Omega}$. Note also the
notational difference between the finite ensemble $\mathcal{X}^g_\Omega$ and
the restriction of the infinite ensemble ${\mathcal{X}^g}_{|\Omega}$.}
\begin{equation*}
K_{g,\Omega }(z,z^{\prime })=\sum_{j=1}^{N_{\Omega }}p_{g,j}^{\Omega }(z)
\overline{p_{g,j}^{\Omega }(z^{\prime })}.
\end{equation*}
\end{definition}

To illustrate the construction, consider $g(t) = 2^{1/4} e^{-\pi t^2}$ and $%
\Omega = D_R = \{ z\in \bC : |z| \leq R\} $. 
The eigenfunctions of $%
M_{D_R}^g$ are explicitly given as $p_{g,j}^{D_R}(\overline{z}) = e^{\pi i x
\xi} (\pi ^{j}/j!)^{\frac{1}{2}}z^{j} e^{-\pi |z|^2/2}$, $z=x+i\xi$. They
are independent of the radius $R$ of the disk, and choosing $R$ such that $%
\left| D_R \right|  =N$, the corresponding finite WH ensemble is precisely
the finite Ginibre ensemble given by~\eqref{finite}. This well known
fact also follows as a special case from  Corollary \ref%
{coro_ident_3}.

\subsection{Scaled limits and rates of convergence}

We now discuss how finite WH ensembles behave when the number of points
tends to infinity. Let
\begin{equation*}
\rho _{g,\Omega }(z) = K_{g,\Omega }  (z,z) = \sum_{j=1}^{N_{\Omega
}}|p_{g,j}^{\Omega }(z) |^2
\end{equation*}
be the  one-point intensity of a finite Weyl-Heisenberg ensemble, so  that
\begin{equation*}
\int _D \rho _{g,\Omega } (z) dz = \mathbb{E}\left[ \mathcal{X}^g_\Omega(D) %
\right] \,
\end{equation*}
is the expected number of points to be found in $D \subseteq {\bR^{2d}}$
(see Section \ref{sec_det}). The following describes the scaled limit of the
one-point intensities.

\begin{theorem}
\label{tl1}  Let $\Omega \subset {\mathbb{R}^{2d}}$ be compact. Then the $1$%
-point intensity of the finite Weyl-Heisenberg ensemble satisfies
\begin{equation}  \label{l5}
\rho _{g,m\Omega }(m\cdot )\longrightarrow 1_{\Omega },
\end{equation}
in $L^{1}({\mathbb{R}^{2d}})$, as $m\longrightarrow +\infty $.
\end{theorem}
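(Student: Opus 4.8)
The plan is to interpolate between $\rho_{g,m\Omega}$ and $1_\Omega$ through the intensity of the \emph{concentrated} process $\mathcal{X}^{g,\mathrm{con}}_{m\Omega}$, i.e.\ the diagonal of the Toeplitz kernel \eqref{Toe}. Set
\[
\kappa_{g,\Omega}(z):=\sum_{j\ge 1}\lambda^\Omega_j\,\bigl|p^\Omega_{g,j}(z)\bigr|^2,
\]
which by \eqref{eq_eigenexp} is the diagonal of the integral kernel in \eqref{Toe}, so that $\kappa_{g,\Omega}(z)=\int_\Omega|K^g(z,z')|^2\,dz'$. From the commutation relations of the time-frequency shifts one has $|K^g(z,z')|=|\langle\pi(z'-z)g,g\rangle|$, hence $\kappa_{g,\Omega}=1_\Omega*\Phi$ with $\Phi(w):=|\langle\pi(w)g,g\rangle|^2$. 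By the orthogonality relations of the short-time Fourier transform, $\Phi\ge 0$ and $\int_{\Rtdst}\Phi=\norm{g}_2^4=1$; in particular $\int_{\Rtdst}\kappa_{g,\Omega}=|\Omega|$, which together with $\norm{p^\Omega_{g,j}}_2=1$ yields the trace identity $\sum_{j\ge 1}\lambda^\Omega_j=|\Omega|$. Replacing $\Omega$ by $m\Omega$ and rescaling, $\kappa_{g,m\Omega}(mw)=(1_\Omega*\Phi_m)(w)$ where $\Phi_m(v):=m^{2d}\Phi(mv)$ is an approximate identity, so $\norm{\kappa_{g,m\Omega}(m\cdot)-1_\Omega}_{L^1(\Rtdst)}\to 0$ as $m\to\infty$.

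It remains to bound $\norm{\rho_{g,m\Omega}(m\cdot)-\kappa_{g,m\Omega}(m\cdot)}_{L^1}$. Writing
\[
\rho_{g,\Omega}-\kappa_{g,\Omega}=\sum_{j=1}^{N_\Omega}(1-\lambda^\Omega_j)\bigl|p^\Omega_{g,j}\bigr|^2-\sum_{j>N_\Omega}\lambda^\Omega_j\bigl|p^\Omega_{g,j}\bigr|^2
\]
as a difference of two nonnegative functions, and using that the $p^\Omega_{g,j}$ are $L^2(\Rtdst)$-normalized,
\[
\bignorm{\rho_{g,\Omega}-\kappa_{g,\Omega}}_{L^1}\le\sum_{j\le N_\Omega}(1-\lambda^\Omega_j)+\sum_{j>N_\Omega}\lambda^\Omega_j .
\]
Since $\sum_{j\ge1}\lambda^\Omega_j=|\Omega|$ and $0\le N_\Omega-|\Omega|<1$, we get $\sum_{j>N_\Omega}\lambda^\Omega_j=\sum_{j\le N_\Omega}(1-\lambda^\Omega_j)-(N_\Omega-|\Omega|)\le\sum_{j\le N_\Omega}(1-\lambda^\Omega_j)$, and therefore
\[
\bignorm{\rho_{g,\Omega}-\kappa_{g,\Omega}}_{L^1}\le 2\sum_{j\le N_\Omega}(1-\lambda^\Omega_j).
\]

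To estimate the right-hand side, fix $\delta\in(0,1)$ and split the sum according to whether $\lambda^\Omega_j>1-\delta$. The indices with $\lambda^\Omega_j>1-\delta$ contribute at most $\delta N_\Omega$, while the number of $j\le N_\Omega$ with $\lambda^\Omega_j\le 1-\delta$ is at most $N_\Omega-\#\{j:\lambda^\Omega_j>1-\delta\}$, which by the Weyl-type law \eqref{eq:l4} is $\le N_\Omega-|\Omega|+\max\{\tfrac1\delta,\tfrac1{1-\delta}\}C_g\,\perimtd{\Omega}\le 1+\max\{\tfrac1\delta,\tfrac1{1-\delta}\}C_g\,\perimtd{\Omega}$. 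Thus $\sum_{j\le N_\Omega}(1-\lambda^\Omega_j)\le\delta N_\Omega+1+\max\{\tfrac1\delta,\tfrac1{1-\delta}\}C_g\,\perimtd{\Omega}$. Applying this with $m\Omega$ in place of $\Omega$, using $|m\Omega|=m^{2d}|\Omega|$, $N_{m\Omega}=m^{2d}|\Omega|+O(1)$, $\perimtd{m\Omega}=m^{2d-1}\perimtd{\Omega}$, and dividing by the Jacobian $m^{2d}$ of the rescaling,
\[
\bignorm{\rho_{g,m\Omega}(m\cdot)-\kappa_{g,m\Omega}(m\cdot)}_{L^1}\le 2\delta\,\frac{N_{m\Omega}}{m^{2d}}+\frac{2}{m^{2d}}+2\max\{\tfrac1\delta,\tfrac1{1-\delta}\}C_g\,\frac{\perimtd{\Omega}}{m}\,.
\]
Letting $m\to\infty$ gives $\limsup_m\bignorm{\rho_{g,m\Omega}(m\cdot)-\kappa_{g,m\Omega}(m\cdot)}_{L^1}\le 2\delta|\Omega|$, and since $\delta$ is arbitrary this limit is $0$. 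Combining with the first paragraph through the triangle inequality proves \eqref{l5}.

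The main obstacle is this last step: converting the \emph{sharp} spectral cutoff at $j=N_{m\Omega}$ into a quantitative bound via the \emph{soft} information in \eqref{eq:l4}. The mechanism that makes it work — choosing $\delta$ small to absorb the bulk while the perimeter term grows only like $m^{2d-1}=o(m^{2d})$ — yields qualitative convergence but no explicit rate; a $\delta$-free refinement of the same circle of ideas is what is later needed for the rate in Theorem~\ref{asy_pure}. A minor point is to make $\kappa_{g,\Omega}=1_\Omega*\Phi$ with $\int\Phi=1$ rigorous (orthogonality relations for the STFT), and to note that the finite perimeter of $\Omega$ assumed throughout Definition~\ref{def_intro_wh} enters here; for an arbitrary compact $\Omega$ only the soft fact that the plunge region $\{j:\lambda^{m\Omega}_j\in(\delta,1-\delta)\}$ has $o(m^{2d})$ elements would be required.
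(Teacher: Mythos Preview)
Your argument is correct. The paper itself does not give a self-contained proof of Theorem~\ref{tl1}; it simply invokes \cite[Theorem~1.3]{AGR} after identifying $\rho_{g,\Omega}$ with the accumulated spectrogram defined there. What you have done is reconstruct the core strategy of that reference: interpolate through the Toeplitz diagonal $\kappa_{g,\Omega}=1_\Omega*\lvert V_g g\rvert^2$, handle $\lVert\kappa_{g,m\Omega}(m\cdot)-1_\Omega\rVert_1$ by the approximate-identity property, and control $\lVert\rho_{g,\Omega}-\kappa_{g,\Omega}\rVert_1$ by the eigenvalue tail $\sum_{j\le N_\Omega}(1-\lambda_j^\Omega)$, which the Weyl-type bound \eqref{eq:l4} renders $o(m^{2d})$ after dilation. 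Your closing caveat is also accurate: the finite-perimeter hypothesis is implicit in Definition~\ref{def_intro_wh} and is what makes \eqref{eq:l4} applicable; for a general compact $\Omega$ one would replace \eqref{eq:l4} by the softer statement that the plunge region has $o(m^{2d})$ elements, which is the route taken in \cite{AGR} in that generality.
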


\begin{figure}[tbp]
\centering
\includegraphics[scale=0.5]{./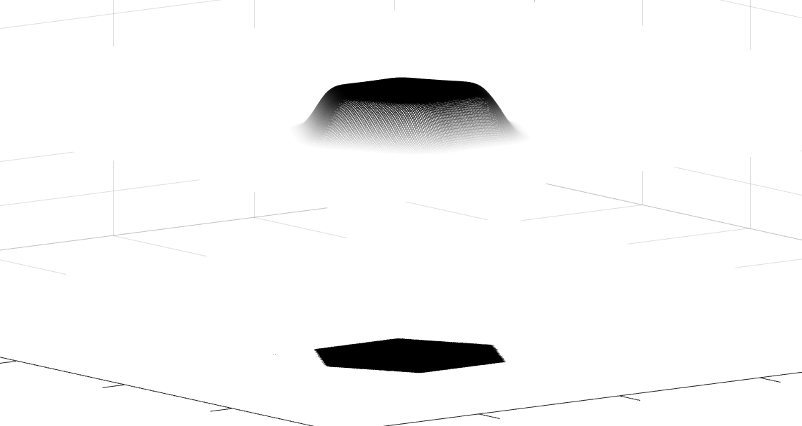}
\caption{The one-point intensity of a WH ensemble plotted over the domain in
Fig. \protect\ref{fig_dom}.}
\label{fig_onepoint}
\end{figure}

In statistical terms, the convergence in Theorem \ref{tl1} means that, as $m \longrightarrow \infty$,
\begin{equation}
\label{eq_stm}
\begin{aligned}
\tfrac{1}{m^{2d}} \mathbb{E}\left[ \mathcal{X}^g_{m\Omega}(mD) \right]
&= \tfrac{1}{m^{2d}} \int_{mD} \rho _{g,m\Omega } (z) dz
= \int_{D} \rho _{g,m\Omega } (mz) dz
\\
&\longrightarrow  \int_{D} 1_\Omega (z) dz = \abs{D \cap \Omega}.
\end{aligned}
\end{equation}

Theorem~\ref{tl1} follows immediately from \cite[Theorem 1.3]{AGR}, once the
one-point intensity $\rho_{g,\Omega}$ is recognized as the accumulated
spectrogram studied in \cite[Definition 1.2]{AGR}. We make a few
remarks as a companion to 
the illustration in Figure \ref{fig_onepoint}.

(i) When $g(t) = 2^{1/4} e^{-\pi t^2}$ and $\Omega$ is a disk of area $N$,
Theorem~\ref{tl1} follows from the circular law of the Ginibre ensemble.

(ii) The asymptotics are not restricted to disks, but hold for arbitrary
sets $\Omega $ with finite measure and also hold in arbitrary dimension, not
just for planar determinantal point processes.

(iii) The limit distribution in~\eqref{l5} is independent of the
parameterizing function $g$. This can be seen as an another instance of a
universality phenomenon~\cite{Deift, TVK, Lubinsky}.

In
view of Theorem~\ref{tl1} we will quantify the deviation of the finite WH
ensemble from its limit distribution in the $L^1$-norm, using the results in
\cite{APR}, where the sharp version of the main result in \cite{AGR} has
been obtained.

\begin{theorem}
\label{th_quant_one} Let $\rho _{g,\Omega }$ be the one-point intensity of
the finite Weyl-Heisenberg ensemble. Assume that $g$ satisfies the condition
\begin{equation}
\label{eq_mst}
\lVert g\rVert _{M^{\ast }}^{2}:=\int_{\mathbb{R}^{2d}}\left\vert
z\right\vert \left\vert \langle g, \pi (z)g\rangle \right\vert
^{2}dz<+\infty.
\end{equation}
If $\Omega$ has finite perimeter and $\left| \partial \Omega \right|
_{2d-1} \geq 1$, then
\begin{equation}  \label{l7}
\lVert \rho _{g,\Omega}-1_{\Omega}\rVert _{1}\leq C_{g} \left| \partial
\Omega \right|  _{2d-1}
\end{equation}
with a constant depending only on $\|g\|_{M^*}$.
\end{theorem}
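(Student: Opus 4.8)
The plan is to compare $\rho_{g,\Omega}$ with the one-point intensity of the \emph{concentrated} process $\mathcal{X}^{g,\mathrm{con}}_\Omega$, i.e.\ with the diagonal of the Toeplitz operator $M_\Omega^g$ (cf.\ \eqref{Toe}),
\[
\tilde\rho_{g,\Omega}(z):=\sum_{j\ge1}\lambda_j^\Omega\,\bigl|p_{g,j}^\Omega(z)\bigr|^2=\int_\Omega\bigl|{K^g}(z,z')\bigr|^2\,dz' .
\]
The idea is to bound $\norm{\tilde\rho_{g,\Omega}-1_\Omega}_1$ and $\norm{\rho_{g,\Omega}-\tilde\rho_{g,\Omega}}_1$ separately and add them by the triangle inequality. (Equivalently, one may observe that $\rho_{g,\Omega}$ is precisely the accumulated spectrogram of \cite{AGR} and quote the sharp estimate of \cite{APR}; the outline below reproves it while keeping the dependence of the constant on $\norm{g}_{\modstar}$ transparent.)

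For the first term I would use the phase-space covariance $\bigl|{K^g}(z,z')\bigr|=\bigl|\langle g,\pi(z'-z)g\rangle\bigr|$ together with Moyal's identity $\int_{\Rtdst}\bigl|\langle g,\pi(w)g\rangle\bigr|^2\,dw=\norm{g}_2^4=1$ to realize $\tilde\rho_{g,\Omega}$ as the convolution $1_\Omega*\phi$, where $\phi(w):=\bigl|\langle g,\pi(-w)g\rangle\bigr|^2$ is a probability density. Then, by Minkowski's integral inequality,
\begin{align*}
\norm{\tilde\rho_{g,\Omega}-1_\Omega}_1
&\le\int_{\Rtdst}\phi(w)\,\bignorm{1_\Omega(\cdot-w)-1_\Omega}_1\,dw
=\int_{\Rtdst}\phi(w)\,\abs{\Omega\triangle(\Omega+w)}\,dw\\
&\le\perimtd{\Omega}\int_{\Rtdst}\abs{w}\,\phi(w)\,dw
=\norm{g}_{\modstar}^2\,\perimtd{\Omega},
\end{align*}
where the last inequality uses the standard bound $\abs{\Omega\triangle(\Omega+w)}\le\abs{w}\,\perimtd{\Omega}$ for sets of finite perimeter (the total variation of $1_\Omega$ equals its perimeter).

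For the second term I would exploit that $\rho_{g,\Omega}-\tilde\rho_{g,\Omega}=\sum_{j\le N_\Omega}(1-\lambda_j^\Omega)\bigl|p_{g,j}^\Omega\bigr|^2-\sum_{j>N_\Omega}\lambda_j^\Omega\bigl|p_{g,j}^\Omega\bigr|^2$ is a difference of two nonnegative functions; since $\norm{p_{g,j}^\Omega}_{L^2(\Rtdst)}=1$, integrating yields $\norm{\rho_{g,\Omega}-\tilde\rho_{g,\Omega}}_1\le\sum_{j\le N_\Omega}(1-\lambda_j^\Omega)+\sum_{j>N_\Omega}\lambda_j^\Omega$. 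Splitting each sum according to whether $\lambda_j^\Omega>\tfrac12$ and using $1-\lambda\le2\lambda(1-\lambda)$ on $(\tfrac12,1]$ resp.\ $\lambda\le2\lambda(1-\lambda)$ on $[0,\tfrac12]$, this is bounded by
\[
2\sum_{j\ge1}\lambda_j^\Omega(1-\lambda_j^\Omega)+\#\{j\le N_\Omega:\lambda_j^\Omega\le\tfrac12\}+\#\{j>N_\Omega:\lambda_j^\Omega>\tfrac12\}.
\]
Here the trace term equals $2\operatorname{tr}\bigl(M_\Omega^g-(M_\Omega^g)^2\bigr)=2\int_\Omega\int_{\Rtdst\setminus\Omega}\abs{{K^g}(z,z')}^2\,dz'\,dz\le\perimtd{\Omega}\,\norm{g}_{\modstar}^2$, by the same convolution/perimeter computation applied to $\Omega^{c}$; and since $(\lambda_j^\Omega)_j$ is non-increasing, $\{j:\lambda_j^\Omega>\tfrac12\}$ is an initial segment, so the two counting terms together equal $\bigl|N_\Omega-\#\{j:\lambda_j^\Omega>\tfrac12\}\bigr|\le 1+2C_g\,\perimtd{\Omega}$, using $N_\Omega=\lceil\abs{\Omega}\rceil$ and the Weyl-type law \eqref{eq:l4} with $\delta=\tfrac12$. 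Adding the two estimates and using the hypothesis $\perimtd{\Omega}\ge1$ to absorb the additive constant gives \eqref{l7} with a constant depending only on $\norm{g}_{\modstar}$.

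The routine parts are the trace identity $\sum_j\lambda_j(1-\lambda_j)=\operatorname{tr}(M_\Omega^g-(M_\Omega^g)^2)$ and the attendant Hilbert--Schmidt bookkeeping. The two points I expect to need genuine care are: (i) the geometric estimate $\abs{\Omega\triangle(\Omega+w)}\le\abs{w}\,\perimtd{\Omega}$ for a general set of finite perimeter, which is precisely where the finiteness of $\norm{g}_{\modstar}$ gets consumed; and (ii), which I regard as the real obstacle, the control of the ``plunge region'' of the spectrum — bounding the number of eigenvalues $\lambda_j^\Omega$ lying far from both $0$ and $1$ — for which the Weyl-type law \eqref{eq:l4} is the essential tool.
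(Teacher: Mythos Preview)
Your proposal is correct. The paper itself does not give a self-contained proof of this theorem; it identifies $\rho_{g,\Omega}$ with the \emph{accumulated spectrogram} of \cite{AGR} and invokes the sharp estimate from \cite{APR} (see the sentence preceding the statement and the discussion opening Section~\ref{sec_wh}). What you outline is essentially the argument of \cite{APR}, as you yourself note.

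Two minor comments. First, Lemma~\ref{lemma_one_point_I} in the paper gives the \emph{exact} identity $\norm{\rho_{g,\Omega}-1_\Omega}_1 = N_\Omega - |\Omega| + 2\sum_{j>N_\Omega}\lambda_j^\Omega$, which bypasses your triangle-inequality detour through $\tilde\rho_{g,\Omega}$; of course the trace bound $\sum_j\lambda_j^\Omega(1-\lambda_j^\Omega)=\int_\Omega(1-\tilde\rho_{g,\Omega})\le\norm{g}_{\modstar}^2\,\perimtd{\Omega}$ that you compute is still the engine behind the tail estimate. Second, you flag the Weyl-type law \eqref{eq:l4} as ``the real obstacle'', but it is in fact a consequence of that same trace bound: with $M=\#\{j:\lambda_j^\Omega>\tfrac12\}$ and $|\Omega|=\sum_j\lambda_j^\Omega$, the split according to whether $\lambda_j^\Omega>\tfrac12$ gives $\bigl|M-|\Omega|\bigr|\le\sum_{j\le M}(1-\lambda_j^\Omega)+\sum_{j>M}\lambda_j^\Omega\le 2\sum_j\lambda_j^\Omega(1-\lambda_j^\Omega)$. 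So your argument is actually self-contained and does not need \eqref{eq:l4} as an external input; the genuine work is exactly the convolution/perimeter estimate you single out in point (i).
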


The condition on the window $g$ in \eqref{eq_mst} amounts to mild decay in the time and frequency variables, and is satisfied by every Schwartz function. See Sections \ref{sec_change} and \ref{sec_mod} for a discussion on closely-related function classes.
The error rate in Theorem \ref{th_quant_one} is sharp - see \cite[Theorem 1.6]{APR}. Intuitively, in %
\eqref{l7} we compare the continuous function $\rho _{g,\Omega}$ with the
characteristic function $1_{\Omega}$. Thus, along every point of the
boundary of $\Omega$ (of surface measure $\left| \partial \Omega \right|
_{2d-1}$) we accumulate a pointwise error of $\mathcal{O} (1)$, leading to a
total $L^1$-error at least of order $\left| \partial \Omega \right|  _{2d-1}$%
.

\subsection{Approximation of finite polyanalytic ensembles by WH ensembles}

The second ingredient towards the proof of Theorem \ref{asy_pure} is a
comparison result that bounds the deviation between finite pure polyanalytic
ensembles and finite WH ensembles with Hermite window functions. Before
stating the result, some preparation is required. We  consider
the following  transformation, which is usually called a  gauge
transformation, 
and the change of variables $f^*(z):=f(%
\overline{z})$, $z \in \bC^d$. Given an operator $T: L^2(\Rtdst) \to L^2(%
\Rtdst)$ we denote:
\begin{align}  \label{eq_tilde_1}
\left[\widetilde{T} f \right]^* := \overline{m} \, T(f^* \, m), \qquad
m(x,\xi) := e^{-\pi i x \xi}.
\end{align}
Hence, if $T$ has the integral kernel $K$, then $\widetilde{T}$ has the
integral kernel
\begin{align}  \label{eq_tilde_2}
\widetilde{K}(z,z^{\prime})=e^{ \pi i (x^{\prime }\xi^{\prime }- x \xi)} K \left(%
\overline{z},\overline{z^{\prime }} \right), \qquad z=x+ i \xi,\, z^{\prime
}=x^{\prime }+i\xi^{\prime }.
\end{align}
(See Section \ref{sec_det} for details).
We call the operation $K \mapsto \widetilde K$ a renormalization of the
kernel $K$. With this notation, if ${K^g}$ is the kernel in %
\eqref{eq_intro_kg} and $g$ is the Gaussian window, then $\widetilde{K}_g$
is the kernel of the infinite Ginibre ensemble. In addition, the DPP's on $%
\bC^d$ associated with the kernels $K$ and $\widetilde{K}$ are related by
the transformation $z \mapsto \overline{z}$. Now, let the window $g$ be a
Hermite function
\begin{align}  \label{eq_hermite}
h_{r}(t) = \frac{2^{1/4}}{\sqrt{r!}}\left(\frac{-1}{2\sqrt{\pi}}\right)^r
e^{\pi t^2} \frac{d^r}{dt^r}\left(e^{-2\pi t^2}\right), \qquad r \geq 0.
\end{align}
The corresponding kernel $K_{h_r}$ describes (after the renormalization
above) the orthogonal projection onto the Bargmann-Fock space of pure
polyanalytic functions of type $r$ (see Section \ref{app_poly}).

Let us consider a Toeplitz operator on $L^2(\mathbb{R}^2)$ with a circular
domain $\Omega = D_R$. By means of an argument based on phase-space
symmetries (more precisely, the symplectic covariance of Weyl's
quantization) we show in Section \ref{sec_hd} that the eigenfunctions $%
\{\widetilde p_{h_r,j}^{D_R}: j \geq 1\}$ of $\widetilde{M}^{h_r}_{D_R}$ are
the normalized complex Hermite polynomials $H_{j,r}(z,\bar{z}) e^{-\frac{\pi%
}{2} |z|^2}$. In particular, as with the Ginibre ensemble, the
eigenfunctions are independent of the radius $R$. Choosing $R$ such that $%
N_{D_R} = N$, and recalling that we order the eigenvalues of $M_{D_R}^{h_r}$
by magnitude, we obtain a map $\sigma: \Nst _0 \to \Nst_0$, such that
\begin{align*}
\widetilde p_{h_r,j}^{D_R}=H_{\sigma(j),r}(z,\bar{z}) e^{-\frac{\pi}{2}
|z|^2}.
\end{align*}
Thus, the finite WH ensemble associated with $h_r$ and $D_R$ is a planar
Hermite ensemble, with correlation kernel
\begin{equation}  \label{eq_a}
\widetilde K_{h_r,D_{R}}(z,z^{\prime})=e^{ -\frac{\pi}{2} (\left\vert z\right\vert
^{2}+\left\vert z^{\prime }\right\vert ^{2})}\sum_{j=1} ^{N_{D_{R}} }
H_{\sigma (j),r}(z,\overline{z}) \overline{H_{\sigma (j),r}(z^{\prime },%
\overline{z^{\prime }})}.
\end{equation}
Comparing the correlation kernels of the finite pure polyanalytic ensemble %
\eqref{purekernel} with the finite (renormalized) WH ensemble with a Hermite
window \eqref{eq_a}, we see that in each case different subsets of the
complex Hermite basis intervene: in one case functions are ordered according
to their Hermite index, while in the other they are ordered according to the
magnitude of their eigenvalues.

Figure \ref{fig_val} shows the eigenvalues of $\widetilde{M}^{h_1}_{D_R}$,
as a function of $R$, corresponding to the eigenfunctions $H_{0,1}(z,%
\overline{z}) e^{-\tfrac{\pi}{2}\left| z \right|  ^2}$ and $H_{1,1}(z,%
\overline{z}) e^{-\tfrac{\pi}{2}\left| z \right|  ^2}$. For small values of $%
R>0$, the eigenvalue corresponding to $H_{1,1}$ is bigger than the one
corresponding to $H_{1,0}$, and thus for small $N$, the kernels in %
\eqref{purekernel} and \eqref{eq_a} do not coincide. The following result
shows that this difference is asymptotically negligible.

\begin{theorem}
\label{th_main} Let $N \in \Nst$ and $R>0$ be such that $N_{D_R}=%
\ceil{\abs{D_R}}=N$. Let $K_{h_r, D_{R}}$ be the correlation kernel of the
finite Weyl-Heisenberg ensemble associated with the Hermite window $h_r$ and
the disk $D_{R}$, and $K_{r,N}$ the correlation kernel of the $(r,N)$-pure
polyanalytic ensemble  given by \eqref{purekernel}. Then
\begin{align*}
\bignorm{\widetilde K_{h_r, D_{R}} - K_{r,N}}_{S^1} \lesssim \left| \partial
D_{R} \right|  _{1} \asymp \sqrt{N},
\end{align*}
where $\norm{\cdot}_{S^1}$ denotes the trace-norm of the corresponding
integral operators.
\end{theorem}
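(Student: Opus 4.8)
The plan is to compare the two kernels $\widetilde K_{h_r,D_R}$ and $K_{r,N}$ by exploiting the fact that both are sums of rank-one projections $H_{j,r}\otimes H_{j,r}$ (times the Gaussian weight) over index sets that differ only in a ``boundary layer'' of indices where the eigenvalue $\lambda^{D_R}_j$ of the Toeplitz operator $\widetilde M^{h_r}_{D_R}$ crosses the threshold corresponding to the magnitude reordering $\sigma$. Concretely, write $\mathcal{E}_j := H_{j,r}(\cdot,\overline{\cdot})\,e^{-\frac{\pi}{2}|\cdot|^2}$ for the orthonormal eigenfunctions, so that $\widetilde K_{h_r,D_R} = \sum_{j=1}^{N} \mathcal{E}_{\sigma(j)} \otimes \mathcal{E}_{\sigma(j)}$ while $K_{r,N} = \sum_{j=0}^{N-1} \mathcal{E}_j \otimes \mathcal{E}_j$. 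The difference is then $\sum_{j \in A} \mathcal{E}_j \otimes \mathcal{E}_j - \sum_{j \in B} \mathcal{E}_j \otimes \mathcal{E}_j$, where $A = \{\sigma(1),\dots,\sigma(N)\} \setminus \{0,\dots,N-1\}$ and $B = \{0,\dots,N-1\} \setminus \{\sigma(1),\dots,\sigma(N)\}$ have the same cardinality $\#A = \#B$, and the $S^1$-norm is at most $\#A + \#B = 2\#A$ because each rank-one term $\mathcal{E}_j \otimes \mathcal{E}_j$ has trace-norm exactly $1$. So the task reduces to showing $\#A \lesssim |\partial D_R|_1 \asymp \sqrt{N}$.

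To bound $\#A$, I would use the Weyl-type law \eqref{eq:l4} applied to $M^{h_r}_{D_R}$ (equivalently $\widetilde M^{h_r}_{D_R}$, which is unitarily equivalent): for any fixed $\delta \in (0,1)$, the number of eigenvalues in $(1-\delta, 1)$ and the number below $\delta$ that are still ``active'' is controlled by $C_{h_r}|\partial D_R|_1$. The key structural observation is that an index $j$ belongs to $A$ precisely when $\lambda^{D_R}_{\sigma^{-1}(j)}$ is among the top $N$ eigenvalues even though $j \geq N$; since the eigenvalues depend monotonically (in the appropriate sense) and there are exactly $N_{D_R} = \lceil |D_R|\rceil = N$ of them that ``should'' be large, the indices in $A$ and $B$ all correspond to eigenvalues lying in a bounded-width window around the threshold value — i.e., bounded away from both $0$ and $1$ by a fixed amount, or else contributing to the transition region. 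Applying \eqref{eq:l4} with, say, $\delta = 1/2$ gives $\#\{j : 1/2 < \lambda^{D_R}_j < 1\} \cup \#\{j \le N : \lambda_j^{D_R} \le 1/2\} \le C\,|\partial D_R|_1$, and a short combinatorial argument shows $A$ and $B$ are contained in this transition set. Combined with $|\partial D_R|_1 = 2\pi R \asymp \sqrt{N}$ (since $\pi R^2 = |D_R| \asymp N$), this yields the claim.

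The main obstacle is the second step: rigorously pinning down that every index in the symmetric difference $A \cup B$ corresponds to an eigenvalue in the transition region of bounded width, rather than merely invoking this heuristically. One must argue that if $j \in A$ (so $j \geq N$ but $\mathcal{E}_j$ appears in the WH kernel), then $\lambda^{D_R}_{\sigma^{-1}(j)} = \lambda^{D_R}_k$ for some $k \le N$, hence this eigenvalue is at least the $N$-th largest; simultaneously, because $|D_R| \le N$, the sum of all eigenvalues (the trace of $M^{h_r}_{D_R}$) is $|D_R| \le N$, so not too many eigenvalues can be close to $1$ — in fact $\#\{j : \lambda_j^{D_R} > 1 - \delta\} \le |D_R|/(1-\delta) + \text{boundary term}$, forcing the $N$-th largest eigenvalue to be bounded away from $1$ (and the complementary counting bounds it away from $0$). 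This requires carefully combining the upper and lower Weyl estimates in \eqref{eq:l4}; once that is in place, the bound $\#A + \#B \lesssim |\partial D_R|_1$ follows, and the trace-norm estimate is immediate. The final asymptotic equivalence $|\partial D_R|_1 \asymp \sqrt N$ is routine from $\pi R^2 = |D_R| \in [N-1, N]$.
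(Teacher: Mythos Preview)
Your reduction is correct: since both kernels are finite sums of the rank-one projections $\mathcal{E}_j\otimes\mathcal{E}_j$ with $\mathcal{E}_j$ orthonormal, the trace norm of the difference is exactly $\#A+\#B=2\#A$, so the whole problem is to bound the size of the symmetric difference of index sets.

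The gap is in the second step. The Weyl law \eqref{eq:l4} counts how many eigenvalues lie above a threshold; it is a statement about the multiset $\{\lambda_j\}=\{\mu^r_{j,R}\}$ and is completely blind to the Hermite labeling $j\mapsto\mu^r_{j,R}$. But $\#A$ depends precisely on that labeling: it asks how many Hermite indices $j\ge N$ carry an eigenvalue among the $N$ largest. Any permutation of the map $j\mapsto\mu^r_{j,R}$ leaves \eqref{eq:l4} unchanged while changing $\#A$ arbitrarily, so no combination of upper and lower Weyl estimates can yield $\#A\lesssim|\partial D_R|_1$. Your parenthetical ``since the eigenvalues depend monotonically (in the appropriate sense)'' is exactly the missing ingredient, and it is false for $r>0$: as Figure~\ref{fig_val} and Remark~\ref{rem_mult_1} show, $j\mapsto\mu^r_{j,R}$ is not monotone. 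Likewise, your attempt to pin down $\lambda_N$ from the trace and from \eqref{eq:l4} only gives $\lambda_{N-C\sqrt{N}}>1/2$ and $\lambda_{N+C\sqrt{N}}\le 1/2$, not a lower bound on $\lambda_N$ itself.

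The paper supplies the missing link by a different route. Instead of bounding $\#A$ directly, it compares both projection kernels to the Toeplitz operator $M^{h_r}_{D_R}$ and reduces matters to the two tail sums $\sum_{j>N}\lambda_j(D_R,h_r)$ and $\sum_{j\ge N}\mu^r_{j,R}$. The first is handled by Theorem~\ref{th_quant_one} via Lemma~\ref{lemma_one_point_I}. The second --- the Hermite-ordered tail --- is the crux: it is controlled by the change-of-quantization estimate (Theorem~\ref{th_change}), which gives $\sum_j|\mu^r_{j,R}-\mu^0_{j,R}|=\|H^{h_r}_{D_R}-H^{h_0}_{D_R}\|_{S^1}\lesssim|\partial D_R|_1$. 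Since $\mu^0_{j,R}$ \emph{is} monotone in $j$ (Corollary~\ref{coro_ident_3}), this transfers the Hermite-ordered tail for level $r$ to the magnitude-ordered tail for level $0$, closing the argument. In short, the piece your sketch lacks is a quantitative statement that the Hermite ordering is $\ell^1$-close to the magnitude ordering, and that is exactly what Theorem~\ref{th_change} provides.
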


Since $\norm{K_{h_r, D_{R}}}_{S^1} = \norm{K_{r,N}}_{S^1} = N$, the finite
pure polyanalytic ensemble - defined by a lexicographic criterion - is
asymptotically equivalent to a finite WH ensemble - defined by optimizing
phase-space concentration. To derive Theorem \ref{th_main}, we resort to
methods from harmonic analysis on phase space. More precisely, we will use
Weyl's correspondence and account for the difference between \eqref{eq_a}
and \eqref{purekernel} as the error introduced by using two different
variants of Berezin's quantization rule (anti-Wick calculus).

\begin{figure}[tbp]
\centering
\includegraphics[scale=0.4]{./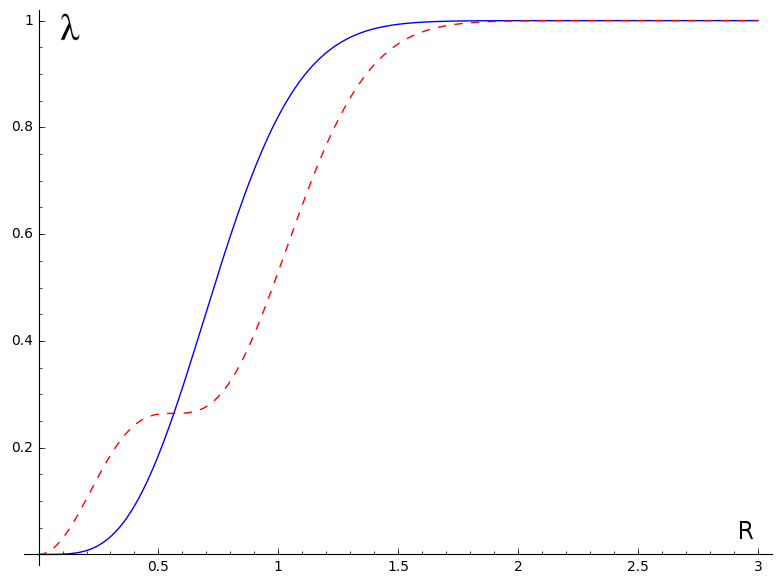}
\caption{A plot of the eigenvalues $\protect\lambda=\protect\widetilde{M}%
^{h_1}_{D_R} \left(H_{j,1}(z,\overline{z}) e^{-\tfrac{\protect\pi}{2}\left|
z \right|  ^2} \right)$, as a function of $R$, corresponding to $j=0$ (blue,
solid) and $j=1$ (red, dashed)}
\label{fig_val}
\end{figure}

Finally, Theorem \ref{asy_pure} follows by combining the comparison result
in Theorem \ref{th_main} with the asymptotics in Theorem \ref{th_quant_one}
applied to Hermite windows - see Section \ref{sec_xxx}. This argument is
reminiscent of Lubinsky's localization principle \cite{Lubinsky} that
concerns deviations between kernels of orthogonal polynomials. In the
present context, the difference between the two kernels does not stem from
an order relation between two measures, but from a permutation of the basis
functions.

\subsection{Simultaneous observability}

The independence of the eigenfunctions of $M_{D_R}^{h_r}$ of the radius $R$
yields another property of the (finite and infinite) $r$-pure polyanalytic
ensembles.

\begin{theorem}
\label{th_intro_sym} The restrictions $\{ p_{h_r,j} \big| _{D_R} : j\in \bN %
\}$ are orthogonal on $L^2(D_R)$ for all $R>0$. In the terminology of
determinantal point processes this means that the family of disks $\{D_R:
R>0\}$ is simultaneously observable for all $r$-pure polyanalytic ensembles.
\end{theorem}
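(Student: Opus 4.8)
The plan is to deduce Theorem~\ref{th_intro_sym} from two ingredients that are already in place: the \emph{double-orthogonality} of the eigenfunctions of a phase-space Toeplitz operator, and the fact, established in Section~\ref{sec_hd} via the symplectic covariance of Weyl's quantization, that the eigenfunctions of $M^{h_r}_{D_R}$ do not depend on the radius $R$.

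First I would record the double-orthogonality mechanism in the form needed here, which is marginally more general than~\eqref{eq:o11}: if $\{q_j:j\in\bN\}\subseteq\mathcal V_g$ is \emph{any} orthonormal system of eigenfunctions of a Toeplitz operator $M^g_\Omega$, say $M^g_\Omega q_j=\mu_j q_j$, then from $\langle M^g_\Omega F,F\rangle=\int_\Omega|F|^2$ for $F\in\mathcal V_g$ one gets
\[
\langle q_j,q_{j'}\rangle_{L^2(\Omega)}=\langle M^g_\Omega q_j,q_{j'}\rangle_{L^2(\Rtdst)}=\mu_j\,\delta_{j,j'},
\]
so that $\{q_j\big|_\Omega\}$ is orthogonal in $L^2(\Omega)$ (the argument does not require the eigenvalue-ordered basis). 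Next I would invoke the computation of Section~\ref{sec_hd}: the gauge-transformed operator $\widetilde M^{h_r}_{D_R}$ is diagonalized by the normalized complex Hermite functions $H_{j,r}(z,\overline z)\,e^{-\frac{\pi}{2}|z|^2}$, $j\in\bN$ — see~\eqref{eq_a} — and this orthonormal \emph{collection} is the same for every $R>0$; only the eigenvalues, and hence the ordering by magnitude, vary with $R$ (cf.\ Figure~\ref{fig_val}). Since the gauge transformation~\eqref{eq_tilde_1}--\eqref{eq_tilde_2} is conjugation by a fixed ($R$-independent) unitary operator, the operators $M^{h_r}_{D_R}$ share, for all $R$, one and the same orthonormal eigensystem $\{p_{h_r,j}:j\in\bN\}$ — where $p_{h_r,j}$ is the preimage of $H_{j,r}(z,\overline z)e^{-\frac{\pi}{2}|z|^2}$ under that unitary — which is exactly what legitimizes the notation $p_{h_r,j}$ without a superscript $D_R$.

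With this, the theorem is immediate. Fixing $R>0$ and applying the double-orthogonality identity with $g=h_r$, $\Omega=D_R$ and $q_j=p_{h_r,j}$ gives $\langle p_{h_r,j},p_{h_r,j'}\rangle_{L^2(D_R)}=\mu_j(R)\,\delta_{j,j'}$, so $\{p_{h_r,j}\big|_{D_R}:j\in\bN\}$ is an orthogonal system in $L^2(D_R)$; and since the functions $p_{h_r,j}$ themselves do not depend on $R$, one and the same orthogonal system simultaneously diagonalizes the restriction of the process to every disk $D_R$, which is the asserted simultaneous observability of $\{D_R:R>0\}$. The statement is insensitive to truncation: the full index set $j\in\bN$ handles the infinite $r$-pure polyanalytic ensemble, whose correlation kernel is the reproducing kernel of the $r$-th Landau level spanned by $\{H_{j,r}\}_{j\in\bN}$, while retaining only $j=1,\dots,N_{D_R}$ handles the finite $(r,N)$-pure polyanalytic ensembles~\eqref{purekernel}. (Equivalently, one may argue directly with the Hermite functions, using that $K\mapsto\widetilde K$ acts unitarily on $L^2(D_R)$ — because $D_R$ is invariant under $z\mapsto\overline z$ and $|m|\equiv1$ — so that the restrictions $H_{j,r}(z,\overline z)e^{-\frac{\pi}{2}|z|^2}\big|_{D_R}$ are orthogonal in $L^2(D_R)$ for all $R$.)

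I do not expect a substantial obstacle: once the two inputs above are assembled the conclusion follows formally. The only point calling for a line of verification is that the renormalization $K\mapsto\widetilde K$ respects the disk $D_R$ — i.e.\ that the unitary implementing the gauge transformation maps $L^2(D_R)$ onto itself and thus preserves orthogonality relations within it — and this is immediate from the conjugation symmetry $z\mapsto\overline z$ of the disk together with $|m|\equiv1$.
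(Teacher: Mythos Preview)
Your proposal is correct and follows essentially the same route as the paper: both arguments combine the double-orthogonality of Toeplitz eigenfunctions (Section~\ref{sec_intro_cons}/\ref{sec_duality}) with the $R$-independence of the eigensystem established in Section~\ref{sec_hd} (Proposition~\ref{prop_hn_disks}). The paper packages this via Theorem~\ref{th_gab_sym} and the restriction/Toeplitz duality of Section~\ref{sec_duality}, whereas you apply the double-orthogonality identity directly; these are the same argument organized slightly differently.
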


This recovers and slightly extends a result of Shirai \cite{SHIRAI}. As an
application, we obtain an extension of Kostlan's theorem \cite{Kostlan} on
the absolute values of the points of the Ginibre ensemble of dimension $N$.

\begin{theorem}
\label{eq_thm4} The set of absolute values of the points distributed
according to the $r$-pure polyanalytic Ginibre ensemble has the same
distribution as $\{Y_{1,r},\ldots,Y_{n,r}\}$, where the $Y_{j}$'s are
independent and have density
\begin{equation*}
f_{Y_{j}}(x):=2 \frac{\pi^{j-r+1} r!}{j!} x^{2(j-r)+1}\left[ L_{r}^{j-r}(\pi
x^2)\right]^{2}e^{-\pi x^2},
\end{equation*}
where $L_{j}^{\alpha}$ are the Laguerre polynomials of~\eqref{eq_lag}.
(Hence, $Y^2_j$ is distributed according to a generalized Gamma 
function with density $f_{Y_j^2}(x) = \frac{\pi^{j-r+1} r!}{j!} x^{j-r}\left[ L_{r}^{j-r}(\pi
x)\right]^{2}e^{-\pi x}$).
\end{theorem}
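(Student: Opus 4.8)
The plan is to derive Theorem~\ref{eq_thm4} from the general form of Kostlan's theorem for determinantal point processes. Recall the principle (see \cite{Kostlan} and \cite{DetPointRand}): if $\mathcal{X}$ is a DPP on $\bC$ whose correlation kernel $K=\sum_{j}\phi_j\otimes\phi_j$ is a finite-rank orthogonal projection, and if the nested exhausting family $\{D_R\}_{R>0}$ is \emph{simultaneously observable} for $\mathcal{X}$ — that is, the $\phi_j$ are common eigenfunctions of all the Toeplitz operators $M_{D_R}$, equivalently $\langle\phi_j,\phi_k\rangle_{L^2(D_R)}=\lambda_j(R)\,\delta_{j,k}$ — then the unordered set of moduli $\{|z|:z\in\mathcal{X}\}$ has the same law as a collection $\{Y_j\}_j$ of independent random variables with distribution functions $\mathbb{P}(Y_j\le R)=\lambda_j(R)=\int_{D_R}|\phi_j|^2$. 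So the proof will consist of (i) checking that this principle applies to the $(r,N)$-pure polyanalytic ensemble and (ii) evaluating the eigenvalues $\lambda_j$ and their derivatives.

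For step (i), observe that the kernel $K_{r,N}$ of \eqref{purekernel} is the orthogonal projection onto the span of the orthonormal system $\psi_j(z):=H_{j,r}(z,\overline z)\,e^{-\frac{\pi}{2}|z|^2}$, $0\le j\le N-1$ (orthonormality being the complex-Hermite relation quoted in the Introduction, which in particular gives $\|\psi_j\|_{L^2(\bC)}=1$). Theorem~\ref{th_intro_sym} states exactly that $\{D_R:R>0\}$ is simultaneously observable for this ensemble, so the $\psi_j$ are mutually orthogonal in each $L^2(D_R)$ and the principle applies: $\{|z|:z\in\mathcal{X}_{r,N}\}\stackrel{d}{=}\{Y_0,\dots,Y_{N-1}\}$ with the $Y_j$ independent and
\[
\mathbb{P}(Y_j\le R)=\lambda_j(R):=\int_{D_R}\bigl|H_{j,r}(z,\overline z)\bigr|^2e^{-\pi|z|^2}\,dz .
\]

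For step (ii), I would compute $\lambda_j$ by noting that $|H_{j,r}(z,\overline z)|^2$ is a radial function — immediate from \eqref{ComplexHermite} — so in polar coordinates the angular integral is trivial and
\[
\lambda_j(R)=c_{j,r}\int_0^R s^{2|j-r|+1}\,\bigl[L^{|j-r|}_{\min(j,r)}(\pi s^2)\bigr]^2\,e^{-\pi s^2}\,ds
\]
for an explicit constant $c_{j,r}$. Differentiating in $R$ gives the density $f_{Y_j}=\lambda_j'$; for $j\ge r$ this is already the closed form displayed in the theorem, and for $j<r$ one inserts the second branch of \eqref{ComplexHermite} and applies the negative-index Laguerre identity $L^{-k}_n(x)=(-1)^k\tfrac{(n-k)!}{n!}\,x^k L^{k}_{n-k}(x)$ ($0\le k\le n$) to rewrite it in the same form, with $L^{j-r}_r$ understood via that extension. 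A sanity check: $\lambda_j(0)=0$, $\lambda_j$ is nondecreasing, and $\lambda_j(\infty)=\|\psi_j\|_2^2=1$, so each $f_{Y_j}$ is a genuine probability density. The description of $Y_j^2$ in the theorem then follows from the change of variables $x=\sqrt t$.

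The substantive input is Theorem~\ref{th_intro_sym} — the $R$-independence of the eigenfunctions of the (renormalized) Toeplitz operator $\widetilde M^{h_r}_{D_R}$, hence simultaneous observability — which I take as given. Relative to it, the only delicate points are (a) confirming that the Kostlan-type statement of \cite{DetPointRand} is applicable in this unbounded, non-compact setting with the exhausting family $\{D_R\}$, and (b) unifying the two branches $j\ge r$ and $j<r$ of the complex Hermite polynomials via the Laguerre identity so as to arrive at the single formula in the statement. I expect (b) to be the step most prone to a slip, though it is in the end routine bookkeeping.
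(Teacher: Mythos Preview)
Your proposal is correct and follows essentially the same route as the paper: both arguments invoke Theorem~\ref{th_intro_sym} to get simultaneous observability of the disks $\{D_R\}$, apply the Kostlan-type principle from \cite{DetPointRand} (the paper cites \cite[Proposition 4.5.9]{DetPointRand} and works with annuli rather than distribution functions, but this is cosmetic), and then compute the radial integral of $|H_{j,r}|^2 e^{-\pi|z|^2}$, unifying the two branches of \eqref{ComplexHermite} via the same Laguerre identity you mention. Your identification of (a) and (b) as the only delicate points matches the paper's treatment exactly.
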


\subsection{Organization}

Section \ref{sec_ps} presents tools from phase-space analysis, including the
short-time Fourier transform and Weyl's correspondence. Section \ref{sec_wh}
studies finite WH ensembles and more technical variants required for the
identification of finite polyanalytic ensembles as WH ensembles with Hermite
windows. This identification is carried out in Section \ref{sec_hd} by means
of symmetry arguments. The approximate identification of finite polyanalytic
ensembles with finite WH ensembles is finished in Section \ref{sec_comp} and
gives a comparison of the processes defined by truncating the complex
Hermite expansion on the one hand, and by the abstract concentration and
spectral truncation method on the other. We explain the deviation between
the two ensembles as stemming from two different quantization rules. The
proof resorts to a Sobolev embedding for certain symbol classes known
as modulation spaces. Some of the technical details are postponed to the
appendix. Theorem \ref{asy_pure} is proved in Section \ref{sec_comp}. In
Section \ref{sec_do} we apply the symmetry argument from Section \ref{sec_hd}
to rederive the so-called simultaneous observability of polyanalytic
ensembles. We also clarify the relation between the spectral expansions of
the restriction and Toeplitz kernels. Finally, the appendix provides some
background material on determinantal point processes, a certain symbol class
for pseudo-differential operators, functions of bounded variation, and
polyanalytic spaces.

\section{Harmonic analysis on phase space}

\label{sec_ps} In this section we briefly discuss our tools. These  methods
from harmonic analysis are new in the study of determinantal point processes.

\subsection{The short-time Fourier transform}

Given a window function $g\in L^{2}({\mathbb{R}^{d}})$, the short-time
Fourier transform of $f \in L^2(\Rdst)$ is
\begin{equation}  \label{eq_stft}
V_{g}f(x,\xi )=\int_{\Rdst} f(t) \overline{g(t-x)} e^{-2\pi i \xi t} dt,
\qquad (x,\xi) \in \Rtdst.
\end{equation}
The short-time Fourier transform is closely related to the Schr\"odinger
representation of the Heisenberg group, which is implemented by the
operators
\begin{equation*}
T(x,\xi ,\tau )g(t)=e^{2\pi i\tau }e^{-\pi ix\xi }e^{2\pi i\xi t}g(t-x),
\qquad (x, \xi) \in \Rdst, \tau \in \Rst.
\end{equation*}
The corresponding representation coefficients are
\begin{equation*}
\left\langle f,T(x,\xi ,\tau )g\right\rangle =e^{-2\pi i\tau }e^{\pi ix\xi
}\left\langle f,e^{2\pi i\xi \cdot}g(\cdot-x)\right\rangle = e^{-2\pi i\tau
} e^{\pi ix\xi}V_g f(x,\xi).
\end{equation*}
As the variable
$\tau $ occuring in 
the Schr\"odinger representation is unnecessary for DPPs, we will only
use  the short-time Fourier transform. We identify a pair $(x,\xi 
)\in {\mathbb{R}^{2d}}$ with the complex vector $z=x+i\xi \in {\mathbb{C}^{d}%
}$. In terms of the phase-space shifts in \eqref{eq:l2}, the short-time
Fourier transform is $V_{g}f(z):=\left\langle f,\pi (z)g\right\rangle$. The
phase-space shifts satisfy the commutation relations
\begin{equation}
\pi (x,\xi )\pi (x^{\prime },\xi ^{\prime })=e^{-2\pi i\xi ^{\prime }x}\pi
(x+x^{\prime },\xi +\xi ^{\prime }),\qquad (x,\xi ),(x^{\prime },\xi
^{\prime })\in {\mathbb{R}^{d}}\times {\mathbb{R}^{d}},  \label{com}
\end{equation}
and the short-time Fourier transform satisfies the following \emph{%
orthogonality relations}~\cite[Proposition 1.42]{folland89} \cite[Theorem
3.2.1]{Charly},
\begin{equation}
\left\langle V_{g_{1}}f_{1},V_{g_{2}}f_{2}\right\rangle _{L^2({\bR^{2d}} )}
=\left\langle f_{1},f_{2}\right\rangle _{L^2(\bR^d )} \overline{\left\langle
g_{1},g_{2}\right\rangle }_{L^2(\bR^d )}.  \label{eq_ortrel}
\end{equation}
In particular, when $\lVert g\rVert _{2}=1$, the map $V_{g}$ is an isometry
between $L^{2}({\mathbb{R}^{d}})$ and a closed subspace of $L^{2}({\mathbb{R}%
^{2d}})$:
\begin{equation} \label{isom9}
\lVert V_{g}f\rVert _{{L^{2}({\mathbb{R}^{2d}})}}=\lVert f\rVert_{{L^{2}({%
\mathbb{R}^{d}})}},\qquad f\in {L^{2}({\mathbb{R}^{d}})}.
\end{equation}
The commutation rule (\ref{com}) implies the following formula
for the short-time Fourier transform:
\begin{equation*}
V_{g}(\pi (x,\xi )f)(x^{\prime },\xi ^{\prime })=e^{-2\pi ix(\xi ^{\prime
}-\xi )}V_{g}f(x^{\prime }-x,\xi ^{\prime }-\xi ),\qquad (x,\xi ),(x^{\prime
},\xi ^{\prime })\in {\mathbb{R}^{d}}\times {\mathbb{R}^{d}}.
\end{equation*}
Since the phase-space shift of $f$ on $\mathbb{R}^d$  corresponds to a  phase-space shift of $V_gf$ on
$\mathbb{R}^{2d}$, this formula is usually called the \emph{covariance
  property} of the short-time Fourier transform.

\subsection{Special windows}

If we choose the Gaussian function $h_{0}(t)=2^{\frac{1}{4}}e^{-\pi t^{2}}$,
$t \in \Rst$, as a window in \eqref{eq_stft}, then a simple calculation
shows that
\begin{equation}
e^{-i \pi x\xi + \frac{\pi}{2} \left\vert z\right\vert
^{2}}V_{h_{0}}f(x,-\xi )=2^{1/4} \int_{\mathbb{R}}f(t)e^{2\pi tz-\pi t^{2}-%
\frac{\pi }{2}z^{2}}dt=\mathcal{B}f(z),  \label{Bargmann}
\end{equation}
where $\mathcal{B}f(z)$ is the \emph{Bargmann transform} of $f$ \cite%
{Bargmann}, \cite[Chapter~1.6]{folland89}. The Bargmann transform $\mathcal{B%
}$ is a unitary isomorphism from $L^{2}(\bR)$ onto the Bargmann-Fock space $%
\mathcal{F}(\bC)$ consisting of all entire functions satisfying
\begin{equation}
\left\Vert F\right\Vert _{\mathcal{F}(\mathbb{C})}^{2}= \int_{\mathbb{C}%
}\left\vert F(z)\right\vert ^{2}e^{-\pi \left\vert z\right\vert
^{2}}dz<\infty .  \label{Focknorm}
\end{equation}
We now explain the relation between polyanalytic Fock spaces and
phase-space  analysis with Hermite windows $\{h_r: r\geq0\}$. The $r$-%
\emph{pure polyanalytic Bargmann transform} \cite{Abreusampling} is the map $%
\mathcal{B}^{r}:L^{2}({\mathbb{R}})\rightarrow L^{2}({\mathbb{C}},e^{-\pi
\left\vert z\right\vert ^{2}})$
\begin{equation}  \label{eq_relation}
\mathcal{B}^{r}f(z):=e^{-i\pi x\xi +\tfrac{\pi }{2}\left\vert z\right\vert
^{2}}V_{h_{r}}f(x,-\xi ),\qquad z=x+i\xi .
\end{equation}
This map defines an isometric isomorphism between $L^{2}({\mathbb{R}})$ and
the pure polyanalytic-Fock space $\mathcal{F}^{r}(\mathbb{C})$ (see Section %
\ref{sec_pure_pol}). The orthogonality relations \eqref{eq_ortrel} show that
for $r\not=r^{\prime }$, $V_{h_{r}}f_{1}$ is orthogonal to $V_{h_{r^{\prime
}}}f_{2}$ for all $f_1,f_2 \in L^{2}({\mathbb{R}})$. The relation between
phase-space  analysis and polyanalytic functions discovered in \cite%
{Abreusampling} can be understood in terms of the \emph{Laguerre connection}
\cite[Chapter 1.9]{folland89}:
\begin{equation}  \label{l9b}
V_{h_{r}} h_j (x,-\xi ) = e^{i\pi x\xi -\tfrac{\pi }{2}\left| z \right|  ^2}
H_{j,r}(z,\bar{z}),
\end{equation}
which, in terms of the polyanalytic Bargmann transform reads as
\begin{equation}  \label{l9}
\mathcal{B}^{r}h_{j}(z)=H_{j,r}(z,\bar{z}),
\end{equation}
see also \cite{Abreusampling}.

\subsection{The range of the short-time Fourier transform}

For $\lVert g\rVert _{2}=1$, the short-time Fourier transform $V_{g}$
defines an isometric map $V_{g}:{L^{2}({\mathbb{R}^{d}})}\rightarrow {L^{2}({%
\mathbb{R}^{2d}})}$ with range
\begin{equation*}
{\mathcal{V}_{g}}:=\big\{\,V_{g}f\,:\,f\in {L^{2}({\mathbb{R}^{d}})}\,\big\} %
\subseteq {L^{2}({\mathbb{R}^{2d}})}.
\end{equation*}
The adjoint of $V_{g}$ can be written formally as $V_{g}^{\ast }:L^{2}({%
\mathbb{R}^{2d}})\rightarrow L^{2}({\mathbb{R}^{d}})$,
\begin{equation*}
V_{g}^{\ast }F=
\int _{{\bR^{2d}} } F(z) \pi (z)g \, dz
,\qquad t\in {\mathbb{R}^{d}} \, ,
\end{equation*}
where the integral is to be taken as a vector-valued integral. 
The orthogonal projection $P_{\mathcal{V}_{g}}:L^{2}({\mathbb{R}^{2d}}
)\rightarrow {\mathcal{V}_{g}}$ is then $P_{\mathcal{V}_{g}}=V_{g}V_{g}^{
\ast }$. Explicitly, $P_{\mathcal{V}_{g}}$ is the integral operator
\begin{equation*}
P_{\mathcal{V}_{g}}F(z)=\int_{\mathbb{R}^{2d}} {K^g}(z,z^{\prime })
F(z^{\prime }) dz^{\prime },\qquad z=(x,\xi )\in {\mathbb{R}^{2d}},
\end{equation*}
where the \emph{reproducing kernel} ${K^g}$ is given by \eqref{eq:l1}. Every
function $F\in {\mathcal{V}_{g}}$ is continuous and satisfies the
reproducing formula $F(z)=\int _{{\bR^{2d}} } F(z^{\prime }){K^g}%
(z,z^{\prime })dz^{\prime }$.

\subsection{Metaplectic rotation}

We will make use of a rotational symmetry argument in phase space. Let $%
R_{\theta }:= \big[%
\begin{smallmatrix}
\cos (\theta ) & -\sin (\theta ) \\
\sin (\theta ) & \cos (\theta )%
\end{smallmatrix}
\big]$ denote the rotation by the angle $\theta \in {\mathbb{R}}$. The \emph{%
metaplectic rotation} is the operator given in the Hermite basis $%
\left\{h_{r}:r\geq 0\right\} $ by
\begin{equation}
{\mu(R_\theta)}f=\sum_{r\geq 0}e^{i r\theta }\left\langle
f,h_{r}\right\rangle h_{r},\qquad f\in L^{2}(\mathbb{R}) \, ,
\label{rotation}
\end{equation}
in particular, $\mu (R_\theta ) h_r = e^{i r\theta } h_r$. The standard and
metaplectic rotations are related by
\begin{equation}
V_{g}f(R_{\theta }(x,\xi ))=e^{\pi i(x\xi -x{^{\prime }}\xi {^{\prime }}%
)}V_{\mu(R_{-\theta}) g}\mu(R_{-\theta}) f(x,\xi ),\mbox{ where }(x{^{\prime
}},\xi {^{\prime }})=R_{\theta }(x,\xi ).  \label{eq_cov}
\end{equation}
This formula is a special case of the symplectic covariance of the
Schr\"odinger representation; see \cite[Chapters 1 and 2]{folland89}, \cite[%
Chapter 9]{Charly}, or \cite[Chapter 15]{dego11}) for background and proofs.

\subsection{Time-frequency localization and Toeplitz operators}

\label{sec_toep} Let us consider $g$ with $\lVert g\rVert _{2}=1$. For $m\in
L^{\infty }({\mathbb{R}^{2d}})$, the \emph{Toeplitz operator} $M_{m}^{g}:{%
\mathcal{V}_{g}}\rightarrow {\mathcal{V}_{g}}$ is
\begin{equation*}
M_{m}^{g}F:=P_{\mathcal{V}_{g}}(m\cdot F),\qquad F\in {\mathcal{V}_{g}},
\end{equation*}
and its integral kernel at a point $(z,z')$ is given by 
\begin{align}\label{eq_ker_new}
K_m(z,z^\prime)= \int_{\Rtdst} {K^g} (z,z^{\prime
\prime }) m (z^{\prime \prime }) {K^g} (z^{\prime \prime },z^{\prime
}) \, dz^{\prime \prime }.
\end{align}
When $m=1_\Omega$, the last expression coincides with \eqref{Toe}. (The operator $M_{m}^{g}$
is defined on $\mathcal{V}_g$; the kernel in \eqref{eq_ker_new} represents the
extension of $M_{m}^{g}$ to $L^2(\Rtdst)$ that is $0$ on $\mathcal{V}_g^\perp
$.) Clearly, $\lVert M_{m}^{g}\rVert _{{\mathcal{V}_{g}}\rightarrow {%
\mathcal{V}_{g}}}\leq \lVert m\rVert _{\infty }$. In addition, it is easy to
see that if $m\geq 0$, then $M_{m}^{g}$ is a positive operator. If
$m\in L^1(\rdd )$, then $M_m^g$ is trace-class.  By \eqref{eq_ker_new} the trace of $M_m^g$ is
\begin{equation}
  \label{eq:uura}
  \mathrm{trace} (M_m^g ) = \int_\rdd   K_m(z,z) \, dz = \int _{\rdd}
  \int _{\rdd } |K^g(z,z^{\prime \prime})|^2 m(z^{\prime \prime}) \, dz
  dz^{\prime \prime} = \int _{\rdd } m(z^{\prime \prime}) \,
  dz^{\prime \prime} \, ,
\end{equation}
because the isometry property \eqref{isom9} implies that 
$$
\int _{\rdd } |K^g(z,z^{\prime \prime})|^2 \, dz = \int _{\rdd } |\langle
\pi (z^{\prime \prime})g, \pi (z) g\rangle |^2 \, dz = 1 \, .
$$
The
\emph{time-frequency  localization operator} with window $g$ and
symbol $m$ is $%
H_{m}^{g}:=V_{g}^{\ast }M_{m}^{g}V_{g}:{L^{2}({\mathbb{R}^{d}})}\rightarrow {%
L^{2}({\mathbb{R}^{d}})}$. Hence $M_{m}^{g}$ and $H_{m}^{g}$ are unitarily
equivalent. \footnote{%
The operator $H_{m}^{g}$ should not be confused with the complex Hermite
polynomial $H_{j,r}$.} The situation is depicted in the following diagram.
\begin{align}  \label{eq_diagram}
\xymatrix{ L^2(\rd) \ar[d]_{V_g} \ar[r]^{H^g_m} &L^2(\rd) \ar[d]^{V_g} \\
\gabspace_g \ar[rd]_{m \cdot} \ar[r]^{M^g_m} & \gabspace_g \\ & L^2(\rdd)
\ar[u]_{P_{\gabspace_g}} }
\end{align}

Explicitly, the time-frequency localization operator applies a mask to the
short-time Fourier transform:
\begin{equation*}
H_{m}^{g}f:=\int_{{\mathbb{R}^{2d}}}m(z)V_{g}f(z)\pi (z)g\,dz, \qquad f \in
L^2({\bR^{2d}}).
\end{equation*}
As we will use the connection between time-frequency localization on $\bR^d $
and Toeplitz operators on ${\bR^{2d}} $ in a crucial argument, we write ~%
\eqref{eq_diagram} as a formula
\begin{align}
\langle H_m^g f, u\rangle &= \langle V_g (V_g^* M_m^g V_gf), V_gu\rangle
\notag \\
&= \langle P_{\mathcal{V}_g} (m \, V_gf), V_gu\rangle  \notag \\
&= \langle m \, V_gf ,V_gu \rangle \, .  \label{eq:o12}
\end{align}
This formula makes sense for $f,u\in L^2(\bR^d)$ and $m\in L^\infty({\bR^{2d}%
} )$, but also for many other assumptions~\cite{cogr03}.

Time-frequency localization operators are useful in signal processing because they model
time-varying filters. For Gaussian windows, they have been studied in signal
processing by Daubechies \cite{Daubechies} and as Toeplitz operators on
spaces of analytic functions by Seip \cite{Seip}; see also \cite[Section 1.4]%
{AGR}. When $m\in L^{1}({\mathbb{R}^{2d}})$, $H_{m}^{g}$ is
trace-class by \eqref{eq:uura}  and
\begin{equation}  \label{eq:o9}
\mathrm{trace}(H_{m}^{g})=\int_{{\bR^{2d}} } m(z) dz \, . 
\end{equation}
For more details see  \cite{heil1, heil2, cogr03}.
When $%
m=1_{\Omega }$, the indicator function of a set $\Omega $, we write $%
M_{\Omega }^{g}$ and $H_{\Omega }^{g}$. In this case, the positivity
property implies that $0\leq M_{\Omega }^{g}\leq I$.

\subsection{The Weyl correspondence}

The \emph{Weyl transform} of a distribution $\sigma \in \mathcal{S}^{\prime
}({\mathbb{R}^{d}}\times {\mathbb{R}^{d}})$ is an operator $\sigma ^{w}$
that is formally defined on functions $f:{\mathbb{R}^{d}}\rightarrow {%
\mathbb{C}}$ as
\begin{equation*}
\sigma ^{w} f(x):=\int_{{\mathbb{R}^{d}}\times {\mathbb{R}^{d}}}\sigma
\left( \frac{x+y}{2},\xi \right) e^{2\pi i(x-y)\xi }f(y)dyd\xi ,\qquad x\in {%
\mathbb{R}^{d}}.
\end{equation*}
Every continuous linear operator $T:\mathcal{S}({\mathbb{R}^{d}})\rightarrow
\mathcal{S}^{\prime }({\mathbb{R}^{d}})$ can be represented in a unique way
as $T=\sigma ^{w}$, and $\sigma $ is called its \emph{Weyl symbol} (see \cite%
[Chapter 2]{folland89}). The Wigner distribution of a test function $g\in
\mathcal{S}({\mathbb{R}^{d}})$ and a distribution $f\in \mathcal{S}^{\prime
}({\mathbb{R}^{d}})$ is
\begin{equation*}
W(f,g)(x,\xi )=\int_{{\mathbb{R}^{2d}}}f(x+\tfrac{t}{2})\overline{g(x-\tfrac{%
t}{2})}e^{-2\pi it\xi }dt.
\end{equation*}
The integral has to be understood distributionally. The map $(f,g)\mapsto
W(f,g)$ extends to other function classes, for example, for $f,g\in L^{2}({%
\mathbb{R}^{d}})$, $W(f,g)$ is well-defined and
\begin{equation}  \label{eq_norm_wigner}
\norm{\wigfg}_2 = \norm{f}_2 \norm{g}_2.
\end{equation}
The Wigner distribution is closely related to the short-time Fourier
transform:
\begin{equation*}
W(f,g)(x,\xi )=2^{d}e^{4\pi ix\cdot \xi }V_{\tilde{g}}f(2x,2\xi ),
\end{equation*}
where $\tilde{g}(x)=g(-x)$. The action of $\sigma ^{w}$ on a distribution
can be easily described in terms of the Wigner distribution:
\begin{equation*}
\left\langle \sigma ^{w}f,g\right\rangle =\left\langle
\sigma,W(g,f)\right\rangle .
\end{equation*}
Time-frequency localization operators have the following simple description
in terms of the Weyl calculus:
\begin{equation}  \label{eq_tf_weyl}
H_{m}^{g}=\left( m\ast W(g,g) \right) ^{w}.
\end{equation}

\section{Finite Weyl-Heisenberg ensembles}

\label{sec_wh}

\subsection{Definitions}

To define finite Weyl-Heisenberg processes, we consider a domain $\Omega
\subseteq {\mathbb{R}^{2d}}$ with non-empty interior, finite measure and
finite perimeter, i.e., the characteristic function of $\Omega $ has bounded
variation (see Section \ref{sec_bv}). Since $M_{\Omega }^{g}$ is
trace-class, the Toeplitz operator $M_{\Omega }^{g}$ can be diagonalized as
\begin{equation}  \label{eq_diag_toep}
M_{\Omega }^{g}=\sum_{j\geq 1}\lambda_{j}^{\Omega }\, p_{g,j}^{\Omega }
\otimes p_{g,j}^{\Omega },\qquad f\in {L^{2}({\mathbb{R}^{2d}})},
\end{equation}
where $\left\{ {\lambda_{j}^{\Omega }}:j\geq 1\right\} $ are the non-zero
eigenvalues of $M_{\Omega }^{g}$ in decreasing order and the corresponding
eigenfunctions $\left\{ p_{g,j}^{\Omega }:j\geq 1\right\} $ are normalized
in $L^{2}$. The operator $M_{\Omega }^{g}$ may have a non-trivial kernel,
but it is known that it always has infinite rank \cite[Lemma 5.8]{doro14},
therefore, the sequences $\{ {\lambda_{j}^{\Omega }}:j\geq 1\} $ and $%
\{p_{g,j}^{\Omega }: j \geq 1\}$ are indeed infinite. In addition, as
follows from \eqref{eq:o9}, we have
\begin{equation}  \label{eq:o10}
0\leq {\lambda_{j}^{\Omega }}\leq 1,\mbox{ and }\sum_{j \geq 1}{%
\lambda_{j}^{\Omega }}=\left\vert \Omega \right\vert.
\end{equation}
We remark that the eigenvalues $\lambda_{j}^{\Omega }$ do depend on the
window function $g$. When we need to stress this dependence we write $%
\lambda_{j}(\Omega,g)$.

\emph{The finite Weyl-Heisenberg ensemble} $\mathcal{X}^g_\Omega$ is given
by Definition \ref{def_intro_wh}. For technical reasons, we will also
consider a more general class of WH ensembles depending on an extra
ingredient. Given a subset $I\subseteq {\mathbb{N}}$, we let $\mathcal{X}%
^g_{\Omega,I}$ be the determinantal point process with correlation kernel
\begin{equation*}
K_{g,\Omega ,I}(z,z^{\prime })=\sum_{j\in I}p_{g,j}^{\Omega }(z)\overline{%
p_{g,j}^{\Omega }(z^{\prime })}.
\end{equation*}
When $I=\{1,\ldots ,N_{\Omega }\}$ we obtain the finite WH ensemble $%
\mathcal{X}^g_\Omega$, while for $I={\mathbb{N}}$ we obtain the infinite
ensemble. (In the latter case, the resulting point-process is independent of
domain $\Omega $.) Later we need to analyze the properties of the ensemble $%
\mathcal{X}^g_{\Omega,I}$ with respect to  variations of the index set $I$. When no
subset $I$ is specified, we always refer to the ensemble $\mathcal{X}%
^g_\Omega$ associated with $I=\{1,\ldots,N_{\Omega }\}$.

\begin{rem}
\rm{The process $\mathcal{X}^g_{\Omega,I}$ is well-defined due to the
Macchi-Soshnikov theorem (see Section \ref{sec_det}). Indeed, since the
kernel $K_{g,\Omega ,I}$ represents an orthogonal projection, we only need
to verify that it is locally trace-class. This follows easily from the facts
that $0 \leq K_{g,\Omega ,I}(z,z) \leq {K^g}(z,z)=1$ and that the
restriction operators are positive (see Section \ref{sec_duality}).}
\end{rem}

\subsection{Universality and rates of convergence}

The one-point intensity associated with a Weyl-Heisenberg ensemble $\mathcal{%
X}^g_{\Omega,I}$ is
\begin{equation*}
\rho _{g,\Omega,I}(z):=\sum_{j \in I}\left\vert p_{g,j}^{\Omega
}(z)\right\vert ^{2}.
\end{equation*}
For $\mathcal{X}^g_\Omega$, the intensity $\rho_{g,\Omega}$ has been studied
in the realm of signal analysis, where it is known as the \emph{accumulated
spectrogram} \cite{AGR,APR}. (Another interesting connection between DPP's
and signal analysis is the completeness results of Ghosh \cite{ghosh2}.) The
results in \cite{AGR,APR} imply Theorems \ref{tl1} and \ref{th_quant_one},
which apply to the finite Weyl-Heisenberg ensembles $\mathcal{X}^g_\Omega$.
For the general ensemble $\mathcal{X}^g_{\Omega,I}$ we have the following
lemma.

\begin{lemma}
\label{lemma_one_point_I} Let $\rho_{g,\Omega,I}$ be the one-point intensity
of a WH ensemble $\mathcal{X}^g_{\Omega,I}$ with $\#I < \infty$. Then
\begin{align*}
\lVert \rho_{g,\Omega,I}-1_\Omega\rVert_{L^1({\mathbb{R}^{2d}})} =
\#I-\left| \Omega \right| + 2 \sum_{j \notin I} \lambda_{j}^{\Omega }.
\end{align*}
\end{lemma}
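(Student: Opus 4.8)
The plan is to compute the $L^1$-distance directly from the expansions, splitting the integration domain into $\Omega$ and its complement. Recall that $\rho_{g,\Omega,I}(z)=\sum_{j\in I}|p_{g,j}^\Omega(z)|^2$ is a sum of finitely many terms, and that each eigenfunction $p_{g,j}^\Omega$ lies in $\mathcal{V}_g$, hence is continuous and satisfies $\norm{p_{g,j}^\Omega}_2=1$ as well as $\int_\Omega |p_{g,j}^\Omega|^2 = \lambda_j^\Omega$ by \eqref{eq:o11}. The first observation is that $\rho_{g,\Omega,I}-1_\Omega$ is not sign-definite, but its sign is controlled on each of the two pieces once we know how $\rho_{g,\Omega,I}$ compares with $1$. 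Indeed, I would first record that $0\le \rho_{g,\Omega,I}(z)\le \rho_{g,\mathbb{N}}(z) = K^g(z,z)=1$ for all $z$ (since adding more projection terms only increases the diagonal, and the full sum gives the reproducing kernel on the diagonal, which equals $\norm{\pi(z)g}_2^2=1$). Consequently $\rho_{g,\Omega,I}-1_\Omega \le 0$ on $\Omega$ and $\rho_{g,\Omega,I}-1_\Omega = \rho_{g,\Omega,I}\ge 0$ on $\Omega^c$.

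With the sign pinned down on each piece, the $L^1$-norm becomes
\begin{align*}
\norm{\rho_{g,\Omega,I}-1_\Omega}_1 = \int_\Omega \big(1-\rho_{g,\Omega,I}\big)\,dz + \int_{\Omega^c}\rho_{g,\Omega,I}\,dz.
\end{align*}
For the first integral, $\int_\Omega 1 = |\Omega|$ and $\int_\Omega \rho_{g,\Omega,I} = \sum_{j\in I}\int_\Omega |p_{g,j}^\Omega|^2 = \sum_{j\in I}\lambda_j^\Omega$ by \eqref{eq:o11}. For the second integral, I would write $\int_{\Omega^c}|p_{g,j}^\Omega|^2 = \norm{p_{g,j}^\Omega}_2^2 - \int_\Omega |p_{g,j}^\Omega|^2 = 1-\lambda_j^\Omega$, so $\int_{\Omega^c}\rho_{g,\Omega,I} = \sum_{j\in I}(1-\lambda_j^\Omega) = \#I - \sum_{j\in I}\lambda_j^\Omega$. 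Adding the two contributions gives
\begin{align*}
\norm{\rho_{g,\Omega,I}-1_\Omega}_1 = |\Omega| - \sum_{j\in I}\lambda_j^\Omega + \#I - \sum_{j\in I}\lambda_j^\Omega = \#I - |\Omega| + 2\Big(|\Omega| - \sum_{j\in I}\lambda_j^\Omega\Big).
\end{align*}
Finally I would invoke $\sum_{j\ge 1}\lambda_j^\Omega = |\Omega|$ from \eqref{eq:o10} to rewrite $|\Omega|-\sum_{j\in I}\lambda_j^\Omega = \sum_{j\notin I}\lambda_j^\Omega$, which yields exactly the claimed identity $\#I - |\Omega| + 2\sum_{j\notin I}\lambda_j^\Omega$.

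The only genuinely delicate point is justifying the pointwise bound $\rho_{g,\Omega,I}(z)\le 1$, i.e.\ that the partial sum of $|p_{g,j}^\Omega(z)|^2$ over any index subset is dominated by the full reproducing kernel on the diagonal; this is where one uses that $\{p_{g,j}^\Omega\}_{j\ge 1}$ extends to an orthonormal basis of $\mathcal{V}_g$ and that for a reproducing kernel Hilbert space the diagonal of the kernel is $\sum_j |e_j(z)|^2$ over any orthonormal basis, here equal to $K^g(z,z)=1$. Everything else is bookkeeping with \eqref{eq:o11} and \eqref{eq:o10}, together with the interchange of the finite sum over $j\in I$ with the integral, which is unproblematic since $\#I<\infty$ and each $p_{g,j}^\Omega\in L^2$. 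One should also note that the identity requires no assumption on $g$ beyond $\norm{g}_2=1$ and no regularity of $\Omega$ beyond finite measure, since finiteness of $\#I$ makes convergence issues vacuous.
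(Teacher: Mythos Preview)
Your proof is correct and follows essentially the same approach as the paper: split the $L^1$-norm into the integrals over $\Omega$ and $\Omega^c$ using the pointwise bound $0\le \rho_{g,\Omega,I}\le 1$, evaluate each piece via \eqref{eq:o11} and the normalization of the eigenfunctions, and combine using \eqref{eq:o10}. If anything, you give a more careful justification of the bound $\rho_{g,\Omega,I}\le K^g(z,z)=1$ than the paper, which simply asserts it.
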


\begin{proof}
Using that $0\leq \rho_{g,\Omega,I} \leq 1$ and \eqref{eq:o11} and %
\eqref{eq:o10}, we first calculate
\begin{align*}
\lVert \rho_{g,\Omega,I}-1_\Omega\rVert_{L^1(\Omega)}=\int_\Omega
\left(1-\rho_{g,\Omega,I}(z)\right)\, dz = {|\Omega|} - \sum_{j \in I}
\lambda_{j}^{\Omega }= \sum_{j \notin I} \lambda_{j}^{\Omega }.
\end{align*}
Second, since the eigenfunctions are normalized and $\int _\Omega
|p_{g,j}^\Omega (z)|^2 \, dz =  \lambda _j$, we have
\begin{align*}
\lVert \rho_{g,\Omega,I}-1_\Omega\rVert_{L^1({\mathbb{R}^{2d}}
\setminus\Omega)}&=\int_{{\mathbb{R}^{2d}}\setminus\Omega} \rho_{g,\Omega,I}(z) \, dz = 
\sum _{j\in I} \int _{\mathbb{R}^{2d}}
|p_{g,j}^\Omega (z)|^2 \, dz - \int _\Omega
|p_{g,j}^\Omega (z)|^2 \, dz  \\
 &     = \sum_{j \in I} \left(1-\lambda_{j}^{\Omega
}\right) = \#I - \sum_{j \in I} \lambda_{j}^{\Omega }=\#I - \left| \Omega \right| +
\sum_{j \notin I} \lambda_{j}^{\Omega }.
\end{align*}
The conclusion follows by adding both estimates.
\end{proof}

\section{Hermite windows and polyanalytic ensembles}

\label{sec_hd}

\subsection{Eigenfunctions of Toeplitz operators}

We first investigate the eigenfunctions of Toeplitz operators with Hermite
windows $\{h_r: r \geq 0\}$ and circular domains.

\begin{prop}
\label{prop_hn_disks} Let $D_R \subseteq {\mathbb{R}}^2$ be a disk centered
at the origin. Then the family of Hermite functions is a complete set of
eigenfunctions for $H^{h_r}_{D_R}$. As a consequence, the set $\{H_{j,r}(z,%
\overline{z}) e^{-\pi |z|^2/2}: j \geq 0\}$ forms a complete set of
eigenfunctions for $\widetilde M_{D_R}^{h_r}$ (where $\widetilde
M_{D_R}^{h_r}$ is related to $M_{D_R}^{h_r}$ by \eqref{eq_tilde_1}.
\end{prop}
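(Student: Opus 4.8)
The plan is to exploit the rotational symmetry of the disk $D_R$ together with the fact that each Hermite function $h_r$ is an eigenfunction of the metaplectic rotation $\mu(R_\theta)$. The first step is to recall from \eqref{eq_tf_weyl} that $H^{h_r}_{D_R} = (1_{D_R} \ast W(h_r,h_r))^w$ is the Weyl transform of the convolution $1_{D_R}\ast W(h_r,h_r)$. Since $D_R$ is radial, $1_{D_R}\circ R_\theta = 1_{D_R}$; and since $W(h_r,h_r)$ is radial as well (this follows from $\mu(R_\theta)h_r = e^{ir\theta}h_r$ via the covariance identity \eqref{eq_cov}, as the phase factors $e^{ir\theta}$ and $e^{-ir\theta}$ cancel in the Wigner distribution $W(h_r,h_r)$), the symbol $\sigma := 1_{D_R}\ast W(h_r,h_r)$ is a radial function on $\mathbb{R}^2$. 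Therefore $H^{h_r}_{D_R}$ commutes with the metaplectic rotation $\mu(R_\theta)$ for every $\theta$, by the symplectic covariance of Weyl quantization: $\mu(R_\theta)^{-1}\sigma^w\mu(R_\theta) = (\sigma\circ R_\theta)^w = \sigma^w$.

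The second step is a standard spectral-theory argument: $H^{h_r}_{D_R}$ is a compact self-adjoint operator (by Section \ref{sec_toep}) commuting with the one-parameter unitary group $\{\mu(R_\theta):\theta\in\mathbb{R}\}$. By \eqref{rotation}, the eigenspaces of this group are exactly the one-dimensional spans $\mathbb{C}\cdot h_j$, $j\geq 0$ (the eigenvalue $e^{ij\theta}$ of $h_j$ being distinct for distinct $j$ once one notes that $\theta$ ranges over $\mathbb{R}$ — more carefully, the joint spectral decomposition forces each eigenvector of $H^{h_r}_{D_R}$ to lie in a single $\mathbb{C}\cdot h_j$, since distinct $h_j$ belong to distinct characters of the rotation group). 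Hence each $h_j$ is an eigenfunction of $H^{h_r}_{D_R}$, and since $\{h_j:j\geq 0\}$ is an orthonormal basis of $L^2(\mathbb{R})$, this is a complete set of eigenfunctions. The main obstacle here is the slight subtlety that the characters $\theta\mapsto e^{ij\theta}$ are not all distinct on $[0,2\pi)$ in the naive sense one might want — but since $\mu$ is a genuine representation of the (universal cover of the) rotation group and the $h_j$ realize pairwise inequivalent characters of the relevant torus, the diagonalization goes through; one can also argue directly that $\mu(R_\theta)$ has simple spectrum on any finite block, or invoke \eqref{eq_cov} to reduce to the known radial structure of $V_{h_r}h_j$.

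The third step transfers this to $\widetilde M^{h_r}_{D_R}$. Recall $M^{h_r}_{D_R} = V_{h_r} H^{h_r}_{D_R} V_{h_r}^*$ is unitarily equivalent to $H^{h_r}_{D_R}$ via $V_{h_r}$, so $\{V_{h_r}h_j : j\geq 0\}$ is a complete set of eigenfunctions of $M^{h_r}_{D_R}$ in $\mathcal{V}_{h_r}$. Now apply the renormalization \eqref{eq_tilde_1}–\eqref{eq_tilde_2}: using the Laguerre connection \eqref{l9b}, $V_{h_r}h_j(x,-\xi) = e^{i\pi x\xi - \frac{\pi}{2}|z|^2}H_{j,r}(z,\bar z)$, and tracking the gauge factor $m(x,\xi)=e^{-\pi i x\xi}$ together with the variable change $z\mapsto\bar z$ shows that the eigenfunctions of $\widetilde M^{h_r}_{D_R}$ are precisely $H_{j,r}(z,\bar z)e^{-\pi|z|^2/2}$, $j\geq 0$, with the same eigenvalues. (One must be a little careful to match the sign conventions in $(x,-\xi)$ versus $(x,\xi)$ and in $z$ versus $\bar z$, but these are bookkeeping matters — the substantive content is the commutation with $\mu(R_\theta)$ established in the first two steps.)
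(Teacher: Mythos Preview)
Your proof is correct and follows the same strategy as the paper's: show that $H^{h_r}_{D_R}$ commutes with every metaplectic rotation $\mu(R_\theta)$, deduce that the Hermite functions (which span the one-dimensional joint eigenspaces of the family $\{\mu(R_\theta)\}$) must be eigenfunctions, and then transfer to $\widetilde M^{h_r}_{D_R}$ via $V_{h_r}$ and \eqref{l9b}. The only minor differences are that you derive the commutation from radiality of the Weyl symbol $1_{D_R}\ast W(h_r,h_r)$ together with symplectic covariance of the Weyl calculus, whereas the paper computes $\mu(R_\theta)^*H^{h_r}_{D_R}\mu(R_\theta)=H^{h_r}_{D_R}$ directly from \eqref{eq:o12} and \eqref{eq_cov}; and your concern about distinctness of the characters is unnecessary---the paper simply fixes a single irrational $\theta$, for which the scalars $e^{ij\theta}$, $j\geq 0$, are already pairwise distinct, so $\mu(R_\theta)$ alone has simple spectrum.
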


\begin{proof}
Consider the metaplectic rotation $R_{\theta }$ with angle $\theta \in {%
\mathbb{R}}$ defined in \eqref{rotation}. For $f,u\in L^{2}({\mathbb{R}})$,
we use first \eqref{eq:o12} and then the covariance property in %
\eqref{eq_cov} and the rotational invariance of $D{_{R}}$ to compute:
\begin{align*}
\left\langle {\mu(R_\theta)}^{\ast }H_{D_R}^{h_r} {\mu(R_\theta)}%
f,u\right\rangle & =\left\langle H_{D_R}^{h_r}{\ \mu(R_\theta)}f,{%
\mu(R_\theta)}u\right\rangle =\left\langle 1_{{D_{R}}}V_{h_{r}}{\mu(R_\theta)%
}f,V_{h_{r}}{\ \mu(R_\theta)}u\right\rangle \\
& =\left\langle 1_{D{_{R}}}V_{{\mu(R_\theta)}h_{r}} \mu(R_\theta) f,V_{{%
\mu(R_\theta)}h_{r}}{\mu(R_\theta)}u\right\rangle \\
&=\left\langle 1_{D{_{R}}}V_{h_{r}}f(R_{-\theta} \,\cdot),
V_{h_{r}}u(R_{-\theta }\,\cdot)\right\rangle \\
& =\int_{D{_{R}}}V_{h_{r}}f(z)V_{h_{r}}u(z)dz =\left\langle
H_{D_R}^{h_r}f,u\right\rangle.
\end{align*}
We conclude that ${\mu(R_\theta)}^{\ast }H_{D_R}^{h_r}{\mu(R_\theta)}%
=H_{D_R}^{h_r}$, for all $\theta \in {\mathbb{R}}$. Applying this identity
to a Hermite function gives
\begin{align*}
{\mu(R_\theta)}^{\ast} H_{D_R}^{h_r} h_j &= {\mu(R_\theta)}^{\ast}
H_{D_R}^{h_r}{\mu(R_\theta)} \left( e^{-ij \theta} h_j \right) \\
&= e^{-ij \theta} {\mu(R_\theta)}^{\ast} H_{D_R}^{h_r}{\mu(R_\theta)} h_j =
e^{-ij \theta} H_{D_R}^{h_r} h_j.
\end{align*}
Thus, $H_{D_R}^{h_r} h_j$ is an eigenfunction of ${\mu(R_\theta)}^{\ast}$
with eigenvalue $e^{-ij \theta}$. For irrational $\theta$, the numbers $%
\{e^{-ij \theta}: j \geq 0\}$ are all different, and, therefore, the
eigenspaces of ${\mu(R_\theta)}^{\ast}$ are one-dimensional. Hence, $%
H_{D_R}^{h_r} h_j$ must be a multiple of $h_j$. Thus, we have shown that
each Hermite function is an eigenfunction of $H_{D_R}^{h_r}$. Since the
family of Hermite functions is complete, the conclusion follows. The
statement about the complex Hermite polynomials follows from \eqref{l9b} and %
\eqref{eq_diagram}; the extra phase-factors and conjugation bars disappear
due to the renormalization $M_{D_R}^{h_r} \mapsto \widetilde M_{D_R}^{h_r}$.
\end{proof}

\subsection{Eigenvalues of Toeplitz operators}

As a second step to identify polyanalytic ensembles as WH ensembles, we
inspect the eigenvalues of Toeplitz operators.

\begin{lemma}
\label{lemma_hn_eig} Let $R>0$. Then the eigenvalue of $H_{D_{R}}^{h_{r}}$
corresponding to $h_{j}$ and the eigenvalue of $\widetilde M_{D_R}^{h_r}$
corresponding to $H_{j,r}(z,\overline{z}) e^{-\pi |z|^2/2}$ are
\begin{equation}  \label{eq_muR}
\mu_{j,R}^{r}:= \left<H_{D_{R}}^{h_{r}} h_{j},h_j\right>
=\int_{D_{R}}\left\vert H_{r,j}(z,\bar{z})\right\vert ^{2}e^{-\pi \left\vert
z\right\vert ^{2}}dz.
\end{equation}
In particular, $\mu_{j,R}^{r}\not=0$  for all $j,r\geq 0$ and $R>0$, and
\begin{align}  \label{eq_hr}
H_{D_{R}}^{h_{r}} &= \sum_{j \geq 0} \mu^r_{j,R} \, h_j \otimes h_j.
\end{align}
\end{lemma}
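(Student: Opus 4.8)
The plan is to leverage Proposition~\ref{prop_hn_disks}, which already identifies the eigenfunctions, so that only the computation of the eigenvalues remains. Since the Hermite functions $h_j$ form a complete set of eigenfunctions for the self-adjoint operator $H_{D_R}^{h_r}$, each $h_j$ satisfies $H_{D_R}^{h_r}h_j = \mu_{j,R}^r h_j$ for some scalar $\mu_{j,R}^r$, and taking the inner product with $h_j$ (which is normalized) gives $\mu_{j,R}^r = \langle H_{D_R}^{h_r}h_j, h_j\rangle$. This immediately yields the spectral expansion \eqref{eq_hr} once we know the $h_j$ exhaust the eigenfunctions.

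The first substantive step is to rewrite $\langle H_{D_R}^{h_r}h_j, h_j\rangle$ as an integral over $D_R$. Applying formula~\eqref{eq:o12} with $f=u=h_j$ and $m=1_{D_R}$ gives
\begin{align*}
\langle H_{D_R}^{h_r}h_j, h_j\rangle = \langle 1_{D_R} V_{h_r}h_j, V_{h_r}h_j\rangle = \int_{D_R} |V_{h_r}h_j(z)|^2\, dz.
\end{align*}
The second step is to invoke the Laguerre connection \eqref{l9b}, namely $V_{h_r}h_j(x,-\xi) = e^{i\pi x\xi - \tfrac{\pi}{2}|z|^2}H_{j,r}(z,\bar z)$, whence $|V_{h_r}h_j(x,-\xi)|^2 = |H_{j,r}(z,\bar z)|^2 e^{-\pi |z|^2}$. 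Changing variables $\xi \mapsto -\xi$ in the integral over the rotationally symmetric disk $D_R$ leaves the domain invariant, so
\begin{align*}
\int_{D_R} |V_{h_r}h_j(z)|^2\, dz = \int_{D_R} |H_{j,r}(z,\bar z)|^2 e^{-\pi|z|^2}\, dz,
\end{align*}
which is exactly \eqref{eq_muR}. (One should double-check the index convention $H_{r,j}$ versus $H_{j,r}$ in \eqref{eq_muR}; the symmetry of the construction and \eqref{l9b} indicate $H_{j,r}$ is meant.)

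The third step is to verify $\mu_{j,R}^r \neq 0$ for all $j,r \geq 0$ and $R>0$: the integrand $|H_{j,r}(z,\bar z)|^2 e^{-\pi|z|^2}$ is non-negative and, since $H_{j,r}$ is a nonzero polynomial in $z,\bar z$, it vanishes only on a set of measure zero, so its integral over the open set $D_R$ is strictly positive. Finally, the statement about $\widetilde M_{D_R}^{h_r}$ follows because $\widetilde M_{D_R}^{h_r}$ is unitarily equivalent to $M_{D_R}^{h_r}$, which is in turn unitarily equivalent to $H_{D_R}^{h_r}$ via $V_{h_r}$ (as recorded in the diagram \eqref{eq_diagram}), so the eigenvalues are the same and Proposition~\ref{prop_hn_disks} matches eigenfunction $h_j$ of $H_{D_R}^{h_r}$ with eigenfunction $H_{j,r}(z,\bar z)e^{-\pi|z|^2/2}$ of $\widetilde M_{D_R}^{h_r}$. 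I do not anticipate a genuine obstacle here — the lemma is essentially a bookkeeping corollary of Proposition~\ref{prop_hn_disks} plus the Laguerre connection; the only point requiring a modicum of care is the change of variables $\xi \mapsto -\xi$ and tracking the phase factors, which cancel in modulus.
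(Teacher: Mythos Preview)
Your proof is correct and follows essentially the same route as the paper's: the paper simply says \eqref{eq_muR} ``follows immediately from the definitions,'' that $\mu_{j,R}^r \neq 0$ because $H_{r,j}$ vanishes only on a null set, and that the diagonalization follows from Proposition~\ref{prop_hn_disks}. Your version makes explicit the two ingredients the paper leaves implicit (the identity \eqref{eq:o12} and the Laguerre connection \eqref{l9b} together with the reflection $\xi \mapsto -\xi$), and your observation that $|H_{j,r}| = |H_{r,j}|$ resolves the index discrepancy you flagged.
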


\begin{proof}
\eqref{eq_muR} follows immediately  from the definitions. According to
\eqref{ComplexHermite}, $H_{r,j}$
vanishes only on a set of measure zero, thus  we
conclude that $\mu_{j,R}^{r}\not=0$. The diagonalization follows from
Proposition \ref{prop_hn_disks}.
\end{proof}

\begin{remark}
\label{rem_mult_1} \rm{ Figure \ref{fig_val} shows a plot of $\mu^1_{0,R}$
(solid, blue) and $\mu^1_{1,R}$ (dashed, red) as a function of $R$. Note
that for a certain value of $R$, the eigenvalue $\mu^1_{0,R}=\mu^1_{1,R}$ is
multiple.}
\end{remark}

\subsection{Identification as a WH ensemble}

We can now identify finite pure polyanalytic ensembles as WH ensembles.

\begin{prop}
\label{prop_ident_1} Let $J \subseteq \Nst_0$ and $R>0$, then there exist a
set $I \subseteq {\mathbb{N}}$ with $\#I=\#J$ such that
\begin{align}  \label{eq_set}
\left\{ V_{h_r} h_{j}: j \in J \right\} =\left\{p_{h_{r},j}^{D_{R}}:j\in
I\right\}.
\end{align}
\end{prop}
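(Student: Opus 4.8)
The plan is to transport the spectral decomposition of the time-frequency localization operator $H_{D_R}^{h_r}$ obtained in Proposition~\ref{prop_hn_disks} and Lemma~\ref{lemma_hn_eig} over to the Toeplitz operator $M_{D_R}^{h_r}$ via the unitary equivalence of Section~\ref{sec_toep}, and then to use the freedom in the choice of the (non-canonical) eigenbasis $\{p_{h_r,j}^{D_R}\}$. From $H_{D_R}^{h_r} = V_{h_r}^{\ast} M_{D_R}^{h_r} V_{h_r}$ and $P_{\mathcal{V}_{h_r}} = V_{h_r} V_{h_r}^{\ast}$, together with the fact that $M_{D_R}^{h_r}$ maps into $\mathcal{V}_{h_r}$, one obtains $M_{D_R}^{h_r} V_{h_r} f = V_{h_r} H_{D_R}^{h_r} f$ for all $f \in L^2(\mathbb{R})$. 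Applying $V_{h_r}$ to the diagonalization $H_{D_R}^{h_r} = \sum_{j \ge 0} \mu_{j,R}^r \, h_j \otimes h_j$ of \eqref{eq_hr} and using that $V_{h_r}$ is an isometry of $L^2(\mathbb{R})$ onto $\mathcal{V}_{h_r}$, we would get
\begin{equation*}
M_{D_R}^{h_r} = \sum_{j \ge 0} \mu_{j,R}^r \, \bigl(V_{h_r} h_j\bigr) \otimes \bigl(V_{h_r} h_j\bigr).
\end{equation*}
Since the $h_j$ are an orthonormal basis of $L^2(\mathbb{R})$, the family $\{V_{h_r} h_j : j \ge 0\}$ is then a complete orthonormal system of eigenfunctions of $M_{D_R}^{h_r}$ in $\mathcal{V}_{h_r}$, with eigenvalues $\mu_{j,R}^r$; by Lemma~\ref{lemma_hn_eig} all of these are strictly positive, so each $V_{h_r} h_j$ is genuinely an eigenfunction for a nonzero eigenvalue (hence eligible to be one of the $p_{h_r,j}^{D_R}$).

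Next I would use that $M_{D_R}^{h_r}$ is trace-class, in particular compact, so the positive numbers $\mu_{j,R}^r$ tend to $0$ and can be listed in non-increasing order: there is a bijection $\tau : \mathbb{N} \to \Nst_0$ with $k \mapsto \mu_{\tau(k),R}^r$ non-increasing. Then $\{V_{h_r} h_{\tau(k)} : k \ge 1\}$ is a complete orthonormal system of eigenfunctions of $M_{D_R}^{h_r}$ whose eigenvalues appear in non-increasing order, hence an admissible choice for the eigenbasis $\{p_{h_r,k}^{D_R} : k \ge 1\}$ in \eqref{eq_diag_toep} (there is no kernel to omit). Fixing this choice, i.e.\ $p_{h_r,k}^{D_R} := V_{h_r} h_{\tau(k)}$, and putting $I := \tau^{-1}(J) \subseteq \mathbb{N}$, one has $\#I = \#J$ since $\tau$ is a bijection, and
\begin{equation*}
\bigl\{ p_{h_r,k}^{D_R} : k \in I \bigr\} = \bigl\{ V_{h_r} h_{\tau(k)} : k \in \tau^{-1}(J) \bigr\} = \bigl\{ V_{h_r} h_j : j \in J \bigr\},
\end{equation*}
which is the claimed identity \eqref{eq_set}.

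I expect the only real obstacle to be the bookkeeping around non-uniqueness of the eigenbasis: the eigenfunctions in \eqref{eq_diag_toep} are not canonically determined when some eigenvalue $\mu_{j,R}^r$ is degenerate (cf.\ Remark~\ref{rem_mult_1}), so the content of the proposition is that there is a \emph{single} ordered eigenbasis $\{p_{h_r,j}^{D_R}\}$ which simultaneously realizes all the families $\{V_{h_r} h_j : j \in J\}$, $J \subseteq \Nst_0$. The choice $p_{h_r,k}^{D_R} := V_{h_r} h_{\tau(k)}$ above does exactly this, and it is with respect to this choice that the index map $\sigma$ and all later references to the $p_{h_r,j}^{D_R}$ are to be understood. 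Everything else is a direct transfer, through the isometry $V_{h_r}$, of Proposition~\ref{prop_hn_disks} and Lemma~\ref{lemma_hn_eig}.
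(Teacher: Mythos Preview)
Your proof is correct and follows essentially the same route as the paper: both use Proposition~\ref{prop_hn_disks} and Lemma~\ref{lemma_hn_eig} to see that each $V_{h_r}h_j$ is an eigenfunction of $M_{D_R}^{h_r}$ with nonzero eigenvalue, and then match indices. You are in fact more careful than the paper in one respect: by explicitly fixing the ordered eigenbasis $p_{h_r,k}^{D_R}:=V_{h_r}h_{\tau(k)}$ you handle the possible degeneracy of the $\mu_{j,R}^r$ (cf.\ Remark~\ref{rem_mult_1}), whereas the paper's one-line ``hence $V_{h_r}h_j$ is one of the $p_{h_r,j'}^{D_R}$'' tacitly assumes such a compatible choice has been made.
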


\begin{proof}
By Proposition \ref{prop_hn_disks} every Hermite function $h_j$ is an
eigenfunction of $H^{h_r}_{D_R}$. In addition, by Lemma \ref{lemma_hn_eig},
the corresponding eigenvalue $\mu^r_{j,R}$ is non-zero. Hence $V_{h_r} h_j$
is one of the functions $p_{h_r,j^{\prime }}^{D_R }$ in the diagonalization %
\eqref{eq_diag_toep}. The set $I := \{j^{\prime }: j \in J\}$ satisfies %
\eqref{eq_set}.
\end{proof}

As a consequence, we obtain the following.

\begin{prop}
\label{prop_ident_2} The pure polyanalytic Ginibre ensemble with kernel $%
K_{r,N}$ in \eqref{purekernel} can be identified with a finite WH ensemble
in the following way. Let $D_{R_{N}}\subset {\mathbb{C}}$ be the disk with
area $N$. Let $I_{r,N} \subseteq {\mathbb{N}}$ be a set such that
\begin{align}  \label{eq_ident_2}
\left\{ V_{h_r} h_{0}, \ldots, V_{h_r} h_{N-1} \right\}
=\left\{p_{h_{r},j}^{D_{R_N}}:j\in I_{r,N} \right\},
\end{align}
and $\# I_{r,N}=N$, whose existence is granted by Proposition \ref%
{prop_ident_1}. Then $\widetilde K_{h_r, D_{R_{N}}, I_{r,N}}=K_{r,N}$, and
the corresponding point processes coincide. In particular
\begin{align}  \label{eq_related}
\rho_{r,N}(z)=\rho_{h_{r},D_{R_{N}},I_{r,N}}(z), \qquad z \in \bC.
\end{align}
\end{prop}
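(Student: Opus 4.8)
The plan is to assemble Proposition \ref{prop_ident_2} from the pieces already in place, the only real content being to track the renormalization $K \mapsto \widetilde K$ through the identification and to check that the index set $I_{r,N}$ in \eqref{eq_ident_2} indeed has cardinality $N$. First I would apply Proposition \ref{prop_ident_1} with $J = \{0,1,\ldots,N-1\}$ and $R = R_N$ (so that $N_{D_{R_N}} = \ceil{\abs{D_{R_N}}} = N$): this produces a set $I_{r,N} \subseteq \Nst$ with $\#I_{r,N} = \#J = N$ satisfying $\{V_{h_r} h_0, \ldots, V_{h_r} h_{N-1}\} = \{p_{h_r,j}^{D_{R_N}} : j \in I_{r,N}\}$. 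The cardinality claim $\# I_{r,N} = N$ is immediate because $\# I = \# J$ in Proposition \ref{prop_ident_1}, which in turn rests on Lemma \ref{lemma_hn_eig}: since $\mu^r_{j,R} \neq 0$ for all $j$, the map $j \mapsto j'$ sending $h_j$ to its place $p_{h_r,j'}^{D_{R_N}}$ in the diagonalization \eqref{eq_diag_toep} is injective, so distinct Hermite indices give distinct eigenfunction labels.

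Next I would compute the kernel. By definition of the general WH ensemble $\mathcal{X}^g_{\Omega,I}$,
\begin{equation*}
K_{h_r, D_{R_N}, I_{r,N}}(z,z') = \sum_{j \in I_{r,N}} p_{h_r,j}^{D_{R_N}}(z) \overline{p_{h_r,j}^{D_{R_N}}(z')} = \sum_{j=0}^{N-1} V_{h_r}h_j(z) \overline{V_{h_r}h_j(z')},
\end{equation*}
where the second equality uses \eqref{eq_ident_2} to re-index the sum over the Hermite functions $h_0,\ldots,h_{N-1}$. Now I would apply the renormalization \eqref{eq_tilde_1}--\eqref{eq_tilde_2} to this kernel and invoke the Laguerre connection \eqref{l9b}, namely $V_{h_r} h_j(x,-\xi) = e^{i\pi x\xi - \tfrac{\pi}{2}|z|^2} H_{j,r}(z,\bar z)$. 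The point is that the phase factor $e^{i\pi x\xi}$ appearing in \eqref{l9b} and the conjugation/reflection $z \mapsto \bar z$ are exactly the ingredients absorbed by the passage $M_{D_R}^{h_r} \mapsto \widetilde M_{D_R}^{h_r}$ — this was already noted at the end of the proof of Proposition \ref{prop_hn_disks}. Carrying out the substitution $\widetilde K(z,z') = e^{\pi i(x'\xi' - x\xi)} K(\bar z, \bar z')$ and collecting the $e^{-\tfrac{\pi}{2}|z|^2}$ and $e^{-\tfrac{\pi}{2}|z'|^2}$ factors yields
\begin{equation*}
\widetilde K_{h_r, D_{R_N}, I_{r,N}}(z,z') = e^{-\tfrac{\pi}{2}(|z|^2 + |z'|^2)} \sum_{j=0}^{N-1} H_{j,r}(z,\bar z) \overline{H_{j,r}(z',\bar{z'})} = K_{r,N}(z,z'),
\end{equation*}
which is precisely \eqref{purekernel}. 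Finally, since the two DPPs have the same correlation kernel they coincide as point processes, and evaluating on the diagonal $z = z'$ gives \eqref{eq_related}.

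The main obstacle — really the only subtlety — is the bookkeeping of phases and the complex-conjugation/reflection in the renormalization step: one must be careful that the ordering convention (eigenvalues ordered non-increasingly, hence the permutation $\sigma$ implicit in $I_{r,N}$) does not interfere with the identity, and it does not, because summing $p_{h_r,j}^{D_{R_N}}(z)\overline{p_{h_r,j}^{D_{R_N}}(z')}$ over $j \in I_{r,N}$ is the same as summing $V_{h_r}h_j(z)\overline{V_{h_r}h_j(z')}$ over $j \in \{0,\ldots,N-1\}$ regardless of how the two index sets are matched up — a determinantal kernel built from an orthonormal family depends only on the spanned subspace, not on the chosen orthonormal basis. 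Beyond this, the proof is a direct concatenation of Proposition \ref{prop_ident_1}, the Laguerre connection \eqref{l9b}, and the definition of the renormalized kernel, with no new estimates required.
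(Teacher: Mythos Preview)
Your argument follows the paper's proof almost verbatim: invoke Proposition \ref{prop_ident_1}, re-index the kernel sum via \eqref{eq_ident_2}, and apply the renormalization \eqref{eq_tilde_2} together with the Laguerre connection \eqref{l9b} to land on $K_{r,N}$. That part is correct.

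There is, however, a small gap in your last sentence. From $\widetilde K_{h_r,D_{R_N},I_{r,N}} = K_{r,N}$, evaluating on the diagonal gives $\rho_{r,N}(z) = \widetilde K_{h_r,D_{R_N},I_{r,N}}(z,z)$, but by \eqref{eq_tilde_2} the latter equals $K_{h_r,D_{R_N},I_{r,N}}(\bar z,\bar z) = \rho_{h_r,D_{R_N},I_{r,N}}(\bar z)$, not $\rho_{h_r,D_{R_N},I_{r,N}}(z)$. To obtain \eqref{eq_related} you still need the conjugation symmetry $|H_{j,r}(z,\bar z)| = |H_{j,r}(\bar z,z)|$, which the paper states explicitly (it follows from $H_{j,r}(\bar z,z) = \overline{H_{j,r}(z,\bar z)}$, visible from \eqref{ComplexHermite}). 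Adding this one line closes the argument.
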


\begin{proof}
Since $\# I_{r,N}=N$, we can write
\begin{align*}
K_{h_r, D_{R_{N}}}(z,z^{\prime }) = \sum_{j \in I_{r,N}}
p_{h_{r},j}^{D_{R_N}}(z) \overline{p_{h_{r},j}^{D_{R_N}}(z^{\prime })} =
\sum_{j=0}^{N-1} V_{h_r} h_{j}(z) \overline{V_{h_r} h_{j}(z^{\prime })}.
\end{align*}
Using \eqref{eq_tilde_2} and \eqref{l9b} we conclude that
\begin{align*}
\widetilde K_{h_r, D_{R_{N}}}(z,z^{\prime }) = \sum_{j=0}^{N-1} H_{j,r}(z,%
\overline{z}) e^{-\pi |z|^2/2} \overline{H_{j,r}(z^{\prime },\overline{%
z^{\prime }})} e^{-\pi |z^{\prime 2}/2} = K_{r,N}(z,z^{\prime }),
\end{align*}
as desired. This implies that the point processes corresponding to $K_{h_r,
D_{R_{N}}}$ and $K_{r,N}$ are related by transformation $z \mapsto \overline{%
z}$. Since $H_{j,r}(z,\overline{z})=\overline{H_{j,r}(\overline{z},z)}$, the
intensities of the pure $(r,N)$-polyanalytic ensemble are invariant under
the map $z \mapsto \overline{z}$ and the conclusion follows.
\end{proof}

While Proposition \ref{prop_ident_2} identifies finite pure polyanalytic
ensembles with WH ensembles in the generalized sense of Section \ref{sec_wh}%
, this is just a technical step. Our final goal is to compare finite
polyanalytic ensembles with finite WH ensembles in the sense of Definition %
\ref{def_intro_wh}, where the index set is $I_{r,N}=\{1, \ldots, N\}$.
Before proceeding we note that for the Gaussian $h_0$ such comparison is in
fact an exact identification.

\begin{coro}
\label{coro_ident_3} For $r=0$, the set $I_{0,N}$ from Proposition \ref%
{prop_ident_2} is $I_{0,N}=\{0, \ldots, N-1\}$. Thus, the $N$-dimensional
Ginibre ensemble has the same distribution as the finite WH ensemble $%
\mathcal{X}^{h_0}_{D_{R_N}}$, and
\begin{align}  \label{eq_related_3}
\rho_{0,N}(z)=\rho_{h_{0},D_{R_{N}}}(z), \qquad z \in \bC.
\end{align}
\end{coro}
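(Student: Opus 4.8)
The plan is to show that for the Gaussian window $h_0$, the eigenvalues $\mu^0_{j,R}$ of the time-frequency localization operator $H^{h_0}_{D_R}$ are \emph{strictly decreasing} in $j$, so that the ordering by eigenvalue magnitude coincides with the ordering by Hermite index. Combined with Proposition \ref{prop_ident_2}, this forces $I_{0,N} = \{0,\ldots,N-1\}$.

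First I would recall from Lemma \ref{lemma_hn_eig} (specialized to $r=0$) that
\begin{align*}
\mu^0_{j,R} = \int_{D_R} \abs{H_{j,0}(z,\bar z)}^2 e^{-\pi\abs{z}^2}\,dz = \frac{\pi^j}{j!}\int_{D_R} \abs{z}^{2j} e^{-\pi\abs{z}^2}\,dz,
\end{align*}
using $H_{j,0}(z,\bar z) = (\pi^j/j!)^{1/2} z^j$. Passing to polar coordinates and substituting $t = \pi\abs{z}^2$, this becomes $\mu^0_{j,R} = \frac{1}{j!}\int_0^{\pi R^2} t^j e^{-t}\,dt = \gamma(j+1,\pi R^2)/j!$, i.e.\ the lower incomplete Gamma function normalized by $j!$, which is precisely the Poisson cumulative distribution function: $\mu^0_{j,R} = \mathbb{P}[\mathrm{Poisson}(\pi R^2) \leq j]$. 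Since a cumulative distribution function is non-decreasing in $j$ and strictly increasing at every integer in the support (and $\pi R^2 > 0$ means all non-negative integers are in the support), we get $\mu^0_{0,R} < \mu^0_{1,R} < \mu^0_{2,R} < \cdots$. Hence the eigenvalue ranked first (largest) is the one for $j = 0$ wait — this shows the eigenvalues \emph{increase} in $j$, which cannot be right for a trace-class operator; the resolution is that $\mu^0_{j,R} = \mathbb{P}[\mathrm{Poisson}(\pi R^2) \geq N-j]$ type indexing, so one must be careful: in fact $\int_0^{\pi R^2} t^j e^{-t}\,dt/j! \to 0$ as $j\to\infty$ because eventually $t^j e^{-t}/j!$ is tiny on a fixed interval. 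Indeed for fixed $R$, $\mu^0_{j,R}\to 0$, and the sequence is strictly \emph{decreasing}: $\mu^0_{j+1,R} - \mu^0_{j,R} = \frac{1}{(j+1)!}\int_0^{\pi R^2}(t - (j+1))t^j e^{-t}\,dt$, which is negative once $j+1 > \pi R^2$, but one needs strict monotonicity for all $j\geq 0$. The cleanest route is a direct ratio or integration-by-parts argument showing $\mu^0_{j,R}$ is strictly decreasing in $j$ for every fixed $R>0$; alternatively, observe $\mu^0_{j,R} = e^{-\pi R^2}\sum_{k>j}(\pi R^2)^k/k! \cdot(\text{correction})$ — actually integrating by parts $j$ times gives $\mu^0_{j,R} = 1 - e^{-\pi R^2}\sum_{k=0}^{j}(\pi R^2)^k/k!$, so $\mu^0_{j,R} = e^{-\pi R^2}\sum_{k=j+1}^\infty (\pi R^2)^k/k! = \mathbb{P}[\mathrm{Poisson}(\pi R^2) \geq j+1]$, which is a \emph{strictly decreasing} function of $j$ (each term dropped is positive since $\pi R^2>0$).

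Therefore $\mu^0_{0,R} > \mu^0_{1,R} > \cdots > \mu^0_{N-1,R} > \mu^0_{N,R} > \cdots$, so the $N$ largest eigenvalues of $M^{h_0}_{D_R}$ are exactly those indexed by $j = 0,1,\ldots,N-1$ (with the radius $R = R_N$ chosen so $N_{D_R}=N$). The permutation $\sigma$ relating the magnitude ordering to the Hermite ordering is the identity on $\{0,\ldots,N-1\}$, so $I_{0,N} = \{0,\ldots,N-1\}$. Substituting into \eqref{eq_ident_2} and invoking Proposition \ref{prop_ident_2}, we get $\widetilde K_{h_0,D_{R_N}} = \widetilde K_{h_0,D_{R_N},I_{0,N}} = K_{0,N}$, which by \eqref{finite} and the identification $H_{j,0}(z,\bar z) = (\pi^j/j!)^{1/2}z^j$ is exactly the Ginibre kernel $K_N$; and \eqref{eq_related_3} follows from \eqref{eq_related}.

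The main obstacle I anticipate is establishing strict monotonicity of $j \mapsto \mu^0_{j,R}$ cleanly and for \emph{all} $j\geq 0$ (not just for $j$ large), since a crude estimate only gives eventual decrease. The integration-by-parts identity $\mu^0_{j,R} = e^{-\pi R^2}\sum_{k\geq j+1}(\pi R^2)^k/k!$ resolves this immediately: the map $j\mapsto \sum_{k\geq j+1}(\pi R^2)^k/k!$ strictly decreases because we successively delete the strictly positive term $(\pi R^2)^{j+1}/(j+1)!$. One must double-check this incomplete-Gamma evaluation (a standard induction on $j$), and also confirm that the eigenvalue of the renormalized operator $\widetilde M^{h_0}_{D_R}$ at $H_{j,0}(z,\bar z)e^{-\pi\abs{z}^2/2}$ really is $\mu^0_{j,R}$ — but this is exactly the content of Lemma \ref{lemma_hn_eig}, so no new work is needed there.
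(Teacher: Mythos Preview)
Your approach is essentially the same as the paper's: both compute $\mu^0_{j,R}$ explicitly as the normalized lower incomplete gamma function and use the identity $\mu^0_{j,R} = 1 - e^{-\pi R^2}\sum_{k=0}^{j}(\pi R^2)^k/k!$ to conclude that the sequence is (strictly) decreasing in $j$, so that the eigenvalue ordering coincides with the Hermite ordering. The paper simply cites this identity from a handbook, whereas you rederive it (with the Poisson-tail interpretation $\mu^0_{j,R}=\mathbb{P}[\mathrm{Poisson}(\pi R^2)\geq j+1]$); just be sure to excise the false start where you momentarily misidentify it as the CDF $\mathbb{P}[\mathrm{Poisson}(\pi R^2)\leq j]$.
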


\begin{proof}
The claim amounts to saying that the eigenvalues $\mu^0_{j,R}$ in %
\eqref{eq_muR} are decreasing for all $R>0$, so that the ordering of the
eigenfunctions in \eqref{eq_diag_toep} coincides with the indexation of the
complex Hermite polynomials. The explicit formula in \eqref{eq_muR} in the
case $r=0$ gives the sequence of \emph{incomplete Gamma functions}:
\begin{align*}
\mu^0_{j,R} = \frac{1}{j!} \int_0^{\pi R^2} t^j e^{- t} dt = 1- e^{-\pi R^2}
\sum_{k=0}^{j} \frac{\pi^k}{k!} R^{2k},
\end{align*}
which is decreasing in $j$ (see for example \cite[Eq. 6.5.13]{special}).
\end{proof}

\section{Comparison between finite WH and polyanalytic ensembles}

\label{sec_comp} Having identified finite pure polyanalytic ensembles as WH
ensembles associated with a certain subset of eigenfunctions $I$, we now
investigate how much this choice deviates from the standard one $I=\{1,
\ldots, N\}$. Thus, we compare finite pure polyanalytic ensembles to the
finite WH ensembles of Definition \ref{def_intro_wh}.

\subsection{Change of quantization}

\label{sec_change} As a main technical step, we show that the change of the
window of a time-frequency localization operator affects the distribution of
the corresponding eigenvalues in a way that is controlled by the perimeter
of the localization domain. When $g$ is a Gaussian, the map $m\mapsto
H_{m}^{g}$ is called \emph{Berezin's quantization} or \emph{anti-Wick
calculus} \cite[Chapter 2]{folland89} or \cite{lerner03}. The results in
this section show that if Berezin's quantization is considered with respect
to more general windows and in ${\bR^{2d}} $, the resulting calculus enjoys
similar asymptotic spectral properties. We consider the function class
\begin{equation}  \label{eq_def_m1}
{M^{1}}({\mathbb{R}^{d}}):=\big\{\,f\in L^{2}({\mathbb{R}^{d}})\,:\,\lVert
f\rVert _{{M^{1}}}:=\lVert V_{\phi }f\rVert _{L^{1}({\mathbb{R}^{2d}}
)}<+\infty \,\big\},
\end{equation}
where $\phi (x)=2^{d/4}e^{-\pi \left\vert x\right\vert ^{2}}$. The class $%
M^{1}$ is one of the \emph{modulation spaces} used in signal processing. It
is also important as a symbol-class for pseudo-differential operators.
Indeed, the following lemma, whose proof can be found in \cite{gr96}, gives
a trace-class estimate in terms of the $M^{1}$-norm of the Weyl symbol (see
also \cite{heil1, heil2, cogr03}).

\begin{prop}
\label{prop_s1} Let $\sigma \in {M^1}({\mathbb{R}^{2d}})$. Then $\sigma ^{w}$
is a trace-class operator and
\begin{equation*}
\lVert \sigma ^{w}\rVert _{S^{1}}\lesssim \lVert \sigma \rVert _{{M^1}},
\end{equation*}
where $\norm{\cdot}_{S^{1}}$ denotes the trace-norm.
\end{prop}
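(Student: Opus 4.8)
The plan is to reduce the $S^1$ estimate for $\sigma^w$ to the $M^1$ norm of $\sigma$ by decomposing $\sigma$ into a superposition of time-frequency shifts of the Gaussian and exploiting that each such elementary symbol produces a rank-one Weyl operator. Concretely, I would use the fact that $M^1(\mathbb{R}^{2d})$ admits an atomic decomposition: any $\sigma \in M^1(\mathbb{R}^{2d})$ can be written as $\sigma = \sum_k c_k \pi(w_k)\Phi$, where $\Phi$ is a Gaussian on $\mathbb{R}^{2d}$, $w_k \in \mathbb{R}^{4d}$, and $\sum_k |c_k| \lesssim \|\sigma\|_{M^1}$. (This is the coorbit/Feichtinger-algebra atomic decomposition; alternatively one can integrate against a continuous resolution of the identity, $\sigma = \int_{\mathbb{R}^{4d}} V_\Phi\sigma(w)\, \pi(w)\Phi\, dw$, with $\|V_\Phi\sigma\|_{L^1} = \|\sigma\|_{M^1}$ up to a constant.)

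The second ingredient is the computation of the Weyl transform of a single time-frequency shifted Gaussian. Writing $\sigma = \pi(w)\Phi$ with $w = (u,v) \in \mathbb{R}^{2d}\times\mathbb{R}^{2d}$, the key identity is that $(\pi(w)\Phi)^w$ is, up to a unimodular constant, a rank-one operator $\phi_1 \otimes \phi_2$ with $\phi_1,\phi_2$ suitable time-frequency shifts of the one-dimensional Gaussian $\phi(x) = 2^{d/4}e^{-\pi|x|^2}$; this follows from the relation $\langle \sigma^w f, g\rangle = \langle \sigma, W(g,f)\rangle$ together with the fact that the cross-Wigner distribution $W(g,f)$ of two Gaussians (or their time-frequency shifts) is again a time-frequency shifted Gaussian on $\mathbb{R}^{2d}$. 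Since $\|\phi_1\otimes\phi_2\|_{S^1} = \|\phi_1\|_2\|\phi_2\|_2$ is a fixed constant independent of $w$, each atom contributes a bounded amount in trace norm.

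Putting these together: $\|\sigma^w\|_{S^1} = \big\| \sum_k c_k (\pi(w_k)\Phi)^w \big\|_{S^1} \le \sum_k |c_k|\, \|(\pi(w_k)\Phi)^w\|_{S^1} \lesssim \sum_k |c_k| \lesssim \|\sigma\|_{M^1}$, which is the claim. In the continuous version one replaces the sum by an integral and uses that the trace norm is complete to justify convergence of the Bochner integral $\int V_\Phi\sigma(w)(\pi(w)\Phi)^w\,dw$ in $S^1$.

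The main obstacle I anticipate is not any single hard estimate but the bookkeeping in the second step: correctly tracking the symplectic/metaplectic phase factors so that the Weyl symbol $\pi(w)\Phi$ on $\mathbb{R}^{2d}$ really corresponds to an honest rank-one operator built from one-dimensional Gaussians, rather than merely a Hilbert-Schmidt operator. This amounts to unwinding the correspondence between the Weyl calculus and the cross-Wigner transform and using that the Fourier-Wigner (or Weyl) quantization sends translations in the symbol to conjugation by phase-space shifts — i.e., the covariance of Weyl quantization — and that a Gaussian symbol quantizes to a Gaussian-type rank-one operator (the coherent-state projection, up to normalization). Once that computation is pinned down, the rest is a soft functional-analytic assembly. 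Since the paper cites \cite{gr96} for the proof, in the write-up I would simply reference that source and sketch only the decomposition-plus-rank-one argument above.
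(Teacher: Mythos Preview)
The paper does not give its own proof of this proposition; it simply cites \cite{gr96}. Your sketch is correct and is in fact the standard argument (essentially the one in Gr\"ochenig's cited paper and in his book): use the inversion formula $\sigma=\int V_\Phi\sigma(w)\,\pi(w)\Phi\,dw$ (or an atomic decomposition) together with the fact that $(\pi(w)\Phi)^w$ is rank-one with trace norm independent of $w$, then integrate.

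One small point worth pinning down in your write-up: not every Gaussian on $\mathbb{R}^{2d}$ quantizes to a rank-one operator. The Gaussian $e^{-a\pi|z|^2}$ gives a rank-one Weyl operator only for $a=2$, namely $\Phi=2^d e^{-2\pi|z|^2}=W(\phi,\phi)$ with $\phi(t)=2^{d/4}e^{-\pi|t|^2}$; for other $a$ one gets a thermal-type operator with infinite rank. This is harmless because the $M^1$ norm is independent of the Schwartz window, so you may take $\Phi=W(\phi,\phi)$. The cleanest way to see that $(\pi(w)\Phi)^w$ remains rank-one is to note that the map from Weyl symbol to integral kernel is a metaplectic operator on $L^2(\mathbb{R}^{2d})$ (partial Fourier transform composed with a linear coordinate change), hence conjugates $\pi(w)$ into another time-frequency shift $\pi(w')$ up to a phase; since a time-frequency shift of a tensor-product kernel $f(x)\overline{g(y)}$ is again a tensor product, rank-one is preserved and the trace norm equals $\|\phi\|_2^2$, a constant.
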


The next lemma will allow us to exploit cancellation properties in the $M^1$%
-norm. Its proof is postponed to Section \ref{sec_mod}.

\begin{lemma}[A Sobolev embedding for $M^1$]
\label{lemma_M1_sobolev} Let $f\in L^{1}({\mathbb{R}^{d}})$ be such that $%
\partial _{x_{k}}f\in {M^1}({\mathbb{R}^{d}})$, for $k=1,\ldots ,d$. Then $%
f\in {M^1}({\mathbb{R}^{d}})$ and $\lVert f\rVert _{{M^1}}\lesssim \lVert
f\rVert _{L^{1}}+\sum_{k=1}^{d}\lVert \partial _{x_{i}}f\rVert _{{M^1}}$.
\end{lemma}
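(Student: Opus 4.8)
The plan is to prove the Sobolev-type embedding $\lVert f\rVert_{M^1}\lesssim \lVert f\rVert_{L^1}+\sum_k\lVert\partial_{x_k}f\rVert_{M^1}$ by working on the Fourier side of the $M^1$-norm. Recall that, up to equivalence of norms, $\lVert f\rVert_{M^1}\asymp \int_{\rdd}|V_\phi f(x,\xi)|\,dx\,d\xi$, and that $M^1$ is invariant under the Fourier transform; moreover $\widehat{\partial_{x_k}f}(\xi)=2\pi i\xi_k\,\hat f(\xi)$. So it is equivalent to show that if $g\in L^1$ and $\xi_k g\in M^1$ for every $k$, then $g\in M^1$ with $\lVert g\rVert_{M^1}\lesssim \lVert g\rVert_{L^1}+\sum_k\lVert \xi_k g\rVert_{M^1}$; applying this to $g=\hat f$ yields the claim. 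The key analytic point is a pointwise estimate splitting the STFT according to whether the frequency variable is small or large.

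First I would fix a smooth partition of unity $1=\chi_0(\xi)^2+\sum_{k=1}^d\chi_k(\xi)^2$ (or simply $1=\chi_0+\sum_k\psi_k$ with $|\psi_k(\xi)|\le C|\xi_k|$ on $\{|\xi|\ge1\}$ and $\chi_0$ supported in $\{|\xi|\le 2\}$), so that $g=\chi_0 g+\sum_k \psi_k g$. For the low-frequency piece $\chi_0 g$, since $\chi_0\in \mathcal S$ is compactly supported, $\chi_0 g\in M^1$ with $\lVert \chi_0 g\rVert_{M^1}\lesssim \lVert g\rVert_{L^1}$: indeed $V_\phi(\chi_0 g)(x,\xi)=\int \chi_0(t)g(t)\overline{\phi(t-x)}e^{-2\pi i\xi t}dt$ is, for each $x$, the Fourier transform of the $L^1$-function $t\mapsto \chi_0(t)g(t)\overline{\phi(t-x)}$ — but one needs $L^1$ in $(x,\xi)$ jointly, which follows from the rapid decay of $\phi$ and a standard estimate (convolution of the compactly supported, essentially constant weight against the Gaussian tail), or more cleanly from the fact that multiplication by a fixed Schwartz function maps $L^1\to M^1$ boundedly, which is a standard modulation-space fact (see \cite{gr96}) and may be cited. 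For each high-frequency piece, write $\psi_k g=m_k\cdot(\xi_k g)$ where $m_k(\xi):=\psi_k(\xi)/\xi_k$ is a bounded smooth function with bounded derivatives (a Fourier multiplier of $\mathcal S$-convolutor type), and use that convolution with $\widehat{m_k}$ (a tempered distribution that is, away from the origin, Schwartz, with a mild singularity) acts boundedly on $M^1$; hence $\lVert \psi_k g\rVert_{M^1}\lesssim \lVert \xi_k g\rVert_{M^1}$.

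Summing the pieces gives $\lVert g\rVert_{M^1}\lesssim \lVert g\rVert_{L^1}+\sum_{k=1}^d\lVert \xi_k g\rVert_{M^1}$, and transporting back through the Fourier transform (which is an isometry on $M^1$ up to constants) with $g=\hat f$ yields $\lVert f\rVert_{M^1}\lesssim \lVert \hat f\rVert_{L^1}+\sum_k\lVert \partial_{x_k}f\rVert_{M^1}$. Finally, to replace $\lVert\hat f\rVert_{L^1}$ by $\lVert f\rVert_{L^1}$ in the statement, one uses the hypothesis that $\partial_{x_k}f\in M^1\subseteq L^1$ together with $f\in L^1$: on the frequency side this says $(1+|\xi|)\hat f\in \widehat{M^1}$ is controlled, so $\hat f$ is itself continuous and integrable, and in fact $\lVert \hat f\rVert_{L^1}\lesssim \lVert f\rVert_{L^1}+\sum_k\lVert\partial_{x_k}f\rVert_{L^1}\lesssim \lVert f\rVert_{L^1}+\sum_k\lVert\partial_{x_k}f\rVert_{M^1}$ by the elementary one-dimensional estimate $\int|\hat f|\lesssim \int|f|+\int|\nabla f|$ (integrate $(1+|\xi|)|\hat f(\xi)|\cdot(1+|\xi|)^{-1}$ and Cauchy–Schwarz, or split into $|\xi|\le1$ and $|\xi|>1$ as above).

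\medskip

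\emph{Main obstacle.} The routine-looking part — that multiplication by a fixed Schwartz function maps $L^1\to M^1$, and that a smooth bounded Fourier multiplier with the right symbol estimates acts boundedly on $M^1$ — is where all the real work sits, and the cleanest path is to invoke these as known modulation-space facts rather than reprove them; the genuinely delicate point is handling the singularity of $m_k(\xi)=\psi_k(\xi)/\xi_k$ at the origin (it is smooth there only because $\psi_k$ vanishes to first order, so one must set up the partition of unity so that each $\psi_k$ genuinely vanishes like $\xi_k$ near $0$, e.g. $\psi_k(\xi)=\xi_k\,\eta(\xi)$ with $\eta$ a fixed bump). I expect the bookkeeping of these multiplier mapping properties on $M^1$, and making sure the constants only depend on dimension, to be the part that requires care; everything else is the standard "Fourier transform swaps differentiation and multiplication, and $M^1$ is Fourier-invariant" reduction.
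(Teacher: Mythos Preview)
Your overall strategy---pass to the Fourier side, split into low- and high-frequency pieces with a smooth cutoff, and treat the high-frequency part via smooth multipliers acting on $\xi_k g$---is exactly the paper's approach, and the high-frequency half of your argument is correct. The gap is in the low-frequency piece and in the final ``replace $\|\hat f\|_{L^1}$ by $\|f\|_{L^1}$'' step.

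The claim that multiplication by a fixed Schwartz function maps $L^1$ into $M^1$ is false. Take $d=1$, $g=1_{[0,1]}\in L^1$, and $\chi_0\in C_c^\infty$ with $\chi_0\equiv 1$ on $[0,1]$; then $\chi_0 g=1_{[0,1]}\notin M^1$, since its Fourier transform is a sinc and $M^1\subset\mathcal{F}L^1$. So $\|\chi_0 g\|_{M^1}\lesssim\|g\|_{L^1}$ cannot hold in general, and the reference you cite does not contain such a statement. Similarly, the ``elementary one-dimensional estimate'' $\int|\hat f|\lesssim \int|f|+\int|\nabla f|$ fails for $d\ge 2$: it would force $W^{1,1}(\mathbb{R}^d)\hookrightarrow \mathcal{F}L^1\subset C_0$, contradicting the failure of the critical Sobolev embedding. (The conclusion $\|\hat f\|_{L^1}\lesssim\|f\|_{L^1}+\sum_k\|\partial_{x_k}f\|_{M^1}$ is in fact true, via $M^1\subset\mathcal{F}L^1$, but not through the intermediate inequality you wrote, and in any case this does not repair the first error.)

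The fix removes the detour through $\|\hat f\|_{L^1}$ entirely and is exactly what the paper does. With $g$ chosen so that $\hat g=f$ (your $g=\hat f$ up to a reflection), the compactly supported piece $(1-\eta)\,g$ has Fourier transform $\widehat{(1-\eta)}*f$, which lies in $L^1$ with norm $\lesssim\|f\|_{L^1}$. Now invoke the correct $M^1$ fact (Lemma~\ref{lemma_M1}(ii) in the paper): a function supported in a fixed ball whose Fourier transform is integrable belongs to $M^1$, with $\|\cdot\|_{M^1}\lesssim\|\widehat{\,\cdot\,}\|_{L^1}$. This gives $\|(1-\eta)g\|_{M^1}\lesssim\|f\|_{L^1}$ directly. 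The high-frequency part then proceeds exactly as you wrote, using that smooth functions with bounded derivatives of all orders act as pointwise multipliers on $M^1$ (Lemma~\ref{lemma_M1}(iii)).
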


We can now derive the main technical result. Its statement uses the space of
$\mathrm{BV}(\Rtdst)$ of (integrable) functions of bounded variation; see
Section \ref{sec_bv} for some background.

\begin{theorem}
\label{th_change} Let $g_{1},g_{2}\in \mathcal{S}({\mathbb{R}^d})$ with $%
\lVert g_i\rVert_{2}=1$ and $m\in \mathrm{BV}({\mathbb{R}^{2d}})$. Then
\begin{equation*}
\lVert H_m^{g_1} - H_m^{g_2}\rVert_{S^{1}}\leq C_{g_{1},g_{2}}\mathit{var}%
(m),
\end{equation*}
where $C_{g_{1},g_{2}}$ is a constant that only depends on $g_1$ and $g_2$.
In particular, when $m=1_\Omega$ we obtain that
\begin{equation*}
\lVert H_\Omega^{g_1} - H_\Omega^{g_2}\rVert_{S^{1}}\leq C_{g_{1},g_{2}}
\left| \partial \Omega \right|  _{2d-1}.
\end{equation*}
\end{theorem}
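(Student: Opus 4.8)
The plan is to transfer the problem to the Weyl calculus and exploit a cancellation coming from the normalization $\norm{g_i}_2=1$. By the identity $H^{g}_m=(m\ast W(g,g))^{w}$ from \eqref{eq_tf_weyl} and the linearity of the Weyl transform in its symbol,
\[
H^{g_1}_m-H^{g_2}_m=\bigl(m\ast w\bigr)^{w},\qquad w:=W(g_1,g_1)-W(g_2,g_2).
\]
Since $g_1,g_2\in\mathcal{S}(\mathbb{R}^{d})$ we have $W(g_i,g_i)\in\mathcal{S}(\mathbb{R}^{2d})$, hence $w\in\mathcal{S}(\mathbb{R}^{2d})\subseteq M^{1}(\mathbb{R}^{2d})$; moreover the normalization of the Wigner distribution gives $\int_{\mathbb{R}^{2d}}W(g_i,g_i)=\norm{g_i}_2^{2}=1$, so we obtain the crucial cancellation $\int_{\mathbb{R}^{2d}}w=0$. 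By Proposition \ref{prop_s1} it then suffices to show that $m\ast w\in M^{1}(\mathbb{R}^{2d})$ and to bound $\norm{m\ast w}_{M^{1}}$ by $C_{g_1,g_2}\var(m)$.

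To estimate $\norm{m\ast w}_{M^{1}}$ I would apply the Sobolev-type embedding of Lemma \ref{lemma_M1_sobolev} on $\mathbb{R}^{2d}$, which reduces matters to bounding $\norm{m\ast w}_{L^{1}}$ and $\norm{\partial_{z_j}(m\ast w)}_{M^{1}}$ for $j=1,\dots,2d$. For the first term one uses the zero-mean property of $w$ to write $m\ast w(\zeta)=\int_{\mathbb{R}^{2d}}\bigl(m(\zeta-z)-m(\zeta)\bigr)w(z)\,dz$; combined with the standard $L^{1}$-modulus-of-continuity estimate for functions of bounded variation, $\norm{\tau_z m-m}_{L^{1}}\le\abs{z}\,\var(m)$, this yields $\norm{m\ast w}_{L^{1}}\le\var(m)\int_{\mathbb{R}^{2d}}\abs{z}\,\abs{w(z)}\,dz$, and the last integral is finite since $w$ is Schwartz. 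For the derivative terms one writes $\partial_{z_j}(m\ast w)=(D_j m)\ast w$, where $D_j m$ is the $j$-th component of the distributional gradient of $m$, a finite Borel measure with $\norm{D_j m}_{\mathrm{TV}}\le\var(m)$; recalling that $M^{1}$ is a Banach module under convolution with finite measures, $\norm{\mu\ast v}_{M^{1}}\le\norm{\mu}_{\mathrm{TV}}\norm{v}_{M^{1}}$, we get $\norm{(D_j m)\ast w}_{M^{1}}\le\var(m)\,\norm{w}_{M^{1}}$. Adding the estimates gives $\norm{m\ast w}_{M^{1}}\lesssim\var(m)\bigl(\int_{\mathbb{R}^{2d}}\abs{z}\,\abs{w(z)}\,dz+\norm{w}_{M^{1}}\bigr)=:C_{g_1,g_2}\var(m)$.

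Combining with Proposition \ref{prop_s1} gives $\norm{H^{g_1}_m-H^{g_2}_m}_{S^{1}}\lesssim\norm{m\ast w}_{M^{1}}\le C_{g_1,g_2}\var(m)$, where $C_{g_1,g_2}$ depends only on $W(g_1,g_1)$ and $W(g_2,g_2)$, hence only on $g_1$ and $g_2$. The particular case is then immediate: a set $\Omega\subseteq\mathbb{R}^{2d}$ of finite perimeter has $1_\Omega\in\mathrm{BV}(\mathbb{R}^{2d})$ with $\var(1_\Omega)=\perimtd{\Omega}$. The main obstacle is the middle paragraph. Estimating $\norm{m\ast w}_{M^{1}}$ naively — e.g.\ $\norm{m\ast w}_{L^{1}}\le\norm{m}_{L^{1}}\norm{w}_{L^{1}}$ and, transferring the derivative onto $w$, $\norm{\partial_{z_j}(m\ast w)}_{M^{1}}\le\norm{m}_{L^{1}}\norm{\partial_{z_j}w}_{M^{1}}$ — would only produce a bound in terms of $\norm{m}_{L^{1}}$, which is not controlled by $\var(m)$ (for instance $\norm{1_\Omega}_{L^{1}}=\abs{\Omega}$). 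The point is to spend the BV regularity of $m$ in both terms: via the gradient measure $D_j m$ in the $M^{1}$-part, and via the BV modulus of continuity together with $\int w=0$ in the $L^{1}$-part.
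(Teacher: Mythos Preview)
Your proof is correct and follows the same strategy as the paper: reduce to the Weyl calculus via \eqref{eq_tf_weyl}, invoke Proposition~\ref{prop_s1}, and then estimate $\norm{m\ast w}_{M^1}$ through Lemma~\ref{lemma_M1_sobolev}, using the zero-mean of $w=W(g_1,g_1)-W(g_2,g_2)$ for the $L^1$-part and the BV gradient for the derivative part. The only difference is tactical: the paper first assumes $m\in C_c^\infty$, uses the mean value theorem and $\partial_{x_i}m\in L^1$ directly, and then passes to general $m\in\mathrm{BV}$ by approximation (smooth $m_k\to m$ in $L^1$ with $\var(m_k)\to\var(m)$), whereas you work with $m\in\mathrm{BV}$ from the outset by invoking the $L^1$-modulus of continuity $\norm{\tau_z m-m}_{L^1}\le|z|\,\var(m)$ and the module property $\norm{\mu\ast w}_{M^1}\le\norm{\mu}_{\mathrm{TV}}\norm{w}_{M^1}$ for the measure-valued derivatives $D_j m$---both standard facts that let you bypass the approximation step.
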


\begin{proof}[Proof of Theorem \protect\ref{th_change}]
Let us assume first that $m$ is smooth and compactly supported. We use the
description of time-frequency localization operators as Weyl operators. By %
\eqref{eq_tf_weyl}, $H_{m}^{g_{i}}=(m\ast W(g_{i},g_i))^{w}$. Now, let $%
h:=W(g_{1},g_{1})-W(g_{2},g_{2})$. Then $h\in \mathcal{S}$ - see, e.g., \cite%
[Proposition 1.92]{folland89} - and $\int h=\lVert g_{1}\rVert
_{2}^{2}-\lVert g_{2}\rVert _{2}^{2}=0$ by~\eqref{eq_norm_wigner}. Hence, by
Proposition \ref{prop_s1},
\begin{equation*}
\lVert H_{m}^{g_{1}}-H_{m}^{g_{2}}\rVert _{S^{1}}=\lVert (m\ast h)^{w}\rVert
_{S^{1}}\lesssim \lVert m\ast h\rVert _{M^1},
\end{equation*}
Therefore, it suffices to prove that $\lVert m\ast h\rVert _{M^1}\lesssim
\mathit{var}(m)$. We apply Lemma \ref{lemma_M1_sobolev} to this end. First
note that $\partial _{x_{i}}(m\ast h)=\partial _{x_{i}}m\ast h$ and,
consequently,
\begin{equation*}
\lVert \partial _{x_{i}}(m\ast h)\rVert _{M^1}\lesssim \lVert \partial
_{x_{i}}m\rVert _{L^{1}}\lVert h\rVert _{M^1}\lesssim \mathit{var}(m).
\end{equation*}
Second, we exploit the fact that $\int h=0$ to get
\begin{align*}
(m\ast h)(z)& =\int_{\mathbb{R}^d}m(z^{\prime })h(z-z^{\prime })dz^{\prime
}=\int_{{\mathbb{R}^d}}(m(z^{\prime })-m(z))h(z-z^{\prime })dz^{\prime } \\
& =\int_{{\mathbb{R}^d}}\int_{0}^{1}\left\langle \nabla (m)(tz^{\prime
}+(1-t)z),z^{\prime }-z\right\rangle dt\,h(z-z^{\prime })dz^{\prime },
\end{align*}
and consequently
\begin{align*}
\int_{{\mathbb{R}^d}}\left\vert m\ast h(z)\right\vert dz& \leq
\int_{0}^{1}\int_{{\mathbb{R}^d}}\int_{{\mathbb{R}^d}}\left\vert \nabla
(m)(tz^{\prime }+(1-t)z)\right\vert \left\vert z^{\prime }-z\right\vert
\left\vert h(z-z^{\prime })\right\vert dz^{\prime }dzdt \\
& =\int_{0}^{1}\int_{{\mathbb{R}^d}}\int_{{\mathbb{R}^d}}\left\vert
\nabla(m)(tw+z)\right\vert \left\vert w\right\vert \left\vert
h(-w)\right\vert dwdzdt \\
& =\lVert \nabla m\rVert_{L^{1}}\int_{0}^{1}\int_{{\mathbb{R}^d}}\left\vert
w\right\vert \left\vert h(w)\right\vert dwdt= \lVert \nabla
m\rVert_{L^{1}}\int_{{\mathbb{R}^d}}\left\vert w\right\vert \left\vert
h(w)\right\vert dw.
\end{align*}
Since $h\in \mathcal{S}$ the last integral is finite. We conclude that $%
\lVert m*h\rVert_{L^{1}}\lesssim \lVert \nabla m\rVert_{L^{1}}=\mathit{var}%
(m)$, providing the argument for smooth, compactly supported $m$. For
general $m \in \mathrm{BV}({\mathbb{R}^d})$, there exists a sequence of
smooth, compactly supported functions $\left \{ m_k: k \geq 0 \right \} $
such that $m_k \rightarrow m$ in $L^1$, and $\mathit{var}(m_k) \rightarrow
\mathit{var}(m)$, as $k \rightarrow +\infty$ (see for example \cite[%
Sec.~5.2.2, Theorem 2]{evga92}.) By Proposition \ref{prop_s1}, $%
H^{g_i}_{m_k} \rightarrow H^{g_i}_m$ in trace norm, and the conclusion
follows by a continuity argument.
\end{proof}

\subsection{Comparison of correlation kernels}

We now state and prove the main result on the comparison between finite
WH ensembles associated with different subsets of eigenfunctions.

\begin{theorem}
\label{th_main_a} Consider the identification of the $(r,N)$-pure
polyanalytic ensemble as a finite WH ensemble with parameters $(h_r,
D_{R_N}, I_{r,N})$ given by Proposition \ref{prop_ident_2}. Let $K_{h_r,
D_{R_N},I_{r,N}}$ be the corresponding correlation kernel, and let $K_{h_r,
D_{R_N}}$ be the correlation kernel of the finite Weyl-Heisenberg ensemble
associated with the Hermite window $h_r$ and the disk $D_{R_N}$. Then
\begin{align}  \label{eq_tne_2}
\bignorm{K_{h_r, D_{R_N}} - K_{h_r, D_{R_N},I_{r,N}}}_{S^1} \lesssim \left|
\partial D_{R_N} \right|  _{1} \asymp \sqrt{N},
\end{align}
where $\norm{\cdot}_{S^1}$ denotes the trace-norm of the corresponding
integral operators.
\end{theorem}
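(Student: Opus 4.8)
The plan is to reduce the trace--norm estimate to a counting bound for the eigenvalues of the single Toeplitz operator $M^{h_r}_{D_{R_N}}$, and then to control that count by combining the change--of--quantization estimate of Theorem~\ref{th_change} with the sharp one--point bound of Theorem~\ref{th_quant_one} applied to the Gaussian window. Both kernels in the statement are kernels of rank--$N$ orthogonal projections onto spans of eigenfunctions of the \emph{same} operator $M^{h_r}_{D_{R_N}}$: the kernel $K_{h_r,D_{R_N}}$ corresponds to the $N$ eigenfunctions with largest eigenvalue, while $K_{h_r,D_{R_N},I_{r,N}}$ corresponds, by Proposition~\ref{prop_ident_2} and Lemma~\ref{lemma_hn_eig}, to the eigenfunctions $V_{h_r}h_0,\dots,V_{h_r}h_{N-1}$, which carry the eigenvalues $\mu^r_{0,R_N},\dots,\mu^r_{N-1,R_N}$. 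As the eigenfunctions $\{p_{h_r,j}^{D_{R_N}}\}_j$ form an orthonormal system in $L^2(\mathbb{R}^2)$, the difference of the two (projection) operators has all its nonzero singular values equal to $1$, so that
\begin{equation*}
\bignorm{K_{h_r,D_{R_N}}-K_{h_r,D_{R_N},I_{r,N}}}_{S^1}=2k,\qquad
k:=\#\bigl\{\,0\le j\le N-1:\ \mu^r_{j,R_N}\ \text{is not among the $N$ largest eigenvalues}\,\bigr\};
\end{equation*}
if the $N$-th largest eigenvalue is repeated, the bound below holds for any admissible choice of the top $N$. It therefore suffices to show $k\lesssim|\partial D_{R_N}|_1$.

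Two inputs feed into the bound on $k$. First, $H^{h_0}_{D_{R_N}}$ and $H^{h_r}_{D_{R_N}}$ are both diagonalized by the Hermite basis $\{h_j\}_{j\ge 0}$ (Proposition~\ref{prop_hn_disks}), with eigenvalues $\mu^0_{j,R_N}$ and $\mu^r_{j,R_N}$ respectively (Lemma~\ref{lemma_hn_eig}), so Theorem~\ref{th_change} gives
\begin{equation*}
\sum_{j\ge 0}\bigl|\mu^0_{j,R_N}-\mu^r_{j,R_N}\bigr|
=\bignorm{H^{h_0}_{D_{R_N}}-H^{h_r}_{D_{R_N}}}_{S^1}\le C_r\,|\partial D_{R_N}|_1.
\end{equation*}
Second, $\mu^0_{\cdot,R_N}$ is non-increasing (as computed in the proof of Corollary~\ref{coro_ident_3}), so Lemma~\ref{lemma_one_point_I} with $g=h_0$, $\Omega=D_{R_N}$, $I=\{1,\dots,N\}$, together with Theorem~\ref{th_quant_one}, yields $(N-|D_{R_N}|)+2\sum_{j\ge N}\mu^0_{j,R_N}\le C\,|\partial D_{R_N}|_1$; in particular $\sum_{j\ge N}\mu^0_{j,R_N}$ and $\sum_{j<N}(1-\mu^0_{j,R_N})=(N-|D_{R_N}|)+\sum_{j\ge N}\mu^0_{j,R_N}$ are both $\lesssim|\partial D_{R_N}|_1$. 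In words, only $\lesssim|\partial D_{R_N}|_1$ of the numbers $\mu^0_{j,R_N}$ fail to be close to $1$ for $j<N$ or close to $0$ for $j\ge N$, and the first displayed bound transports this to the $\mu^r_{j,R_N}$.

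I would then split $k$ at the level $1/4$. For indices $j<N$ with $\mu^r_{j,R_N}\le 1/4$: from $1-\mu^r_{j,R_N}=(1-\mu^0_{j,R_N})+(\mu^0_{j,R_N}-\mu^r_{j,R_N})$ and the two inputs, $\sum_{j<N}(1-\mu^r_{j,R_N})\lesssim|\partial D_{R_N}|_1$, so (recalling $0<\mu^r_{j,R_N}\le 1$) Markov's inequality bounds their number by $\lesssim|\partial D_{R_N}|_1$. For indices $j<N$ with $\mu^r_{j,R_N}>1/4$ that are nevertheless not among the top $N$: the set $T:=\{l\ge 0:\mu^r_{l,R_N}>1/4\}$ is, by construction, exactly the set of the $\#T$ largest eigenvalues, hence is nested with the top-$N$ set, so the number of such indices is at most $\bigl|\,\#T-N\,\bigr|$; and a standard two--threshold comparison --- squeezing $\#T$ between $\#\{j:\mu^0_{j,R_N}>3/8\}$ and $\#\{j:\mu^0_{j,R_N}>1/8\}$ up to $\#\{j:|\mu^0_{j,R_N}-\mu^r_{j,R_N}|>1/8\}\le 8\sum_j|\mu^0_{j,R_N}-\mu^r_{j,R_N}|$, and bounding the number of $\mu^0_{j,R_N}$ near $0$ and near $1$ with the second input --- shows $\bigl|\,\#T-N\,\bigr|\lesssim|\partial D_{R_N}|_1$. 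Adding the two contributions gives $k\lesssim|\partial D_{R_N}|_1$, and since $\pi R_N^2=|D_{R_N}|\in(N-1,N]$ one has $|\partial D_{R_N}|_1=2\pi R_N\asymp\sqrt N$, which completes the argument.

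The step I expect to be the main obstacle is the second half of the splitting. A priori the $N$-th largest eigenvalue of $M^{h_r}_{D_{R_N}}$ is not known to be bounded away from $0$ and $1$ uniformly in $N$, so one cannot simply place the ``misplaced'' indices inside a plunge region at a fixed threshold; the point that makes the argument work is that $\{l:\mu^r_{l,R_N}>1/4\}$ is itself an initial segment of the spectrum in decreasing order, hence automatically nested with the top-$N$ set, so that the entire discrepancy between the two index sets is controlled by the single integer $\#T-N$, which the perturbation estimate and the Gaussian Weyl--type information pin down. The remaining ingredients --- the singular--value computation for a difference of spectral projections, the two Markov-type estimates, and the two--threshold comparison --- are routine.
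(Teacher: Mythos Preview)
Your argument is correct, but it takes a genuinely different route from the paper's.

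The paper does not compute $\|P_1-P_2\|_{S^1}=2k$ exactly. Instead it inserts the Toeplitz operator $M^{h_r}_{D_{R_N}}$ as an intermediary and uses the triangle inequality: since both projections are diagonal in the eigenbasis of $M^{h_r}_{D_{R_N}}$, one has
\[
\bignorm{T_{K_{h_r,D_{R_N}}}-M^{h_r}_{D_{R_N}}}_{S^1}
=\sum_{j\le N}\bigl(1-\lambda_j(D_{R_N},h_r)\bigr)+\sum_{j>N}\lambda_j(D_{R_N},h_r)
=2\sum_{j>N}\lambda_j(D_{R_N},h_r),
\]
using $\sum_j\lambda_j=|D_{R_N}|=N$, and similarly
$\bignorm{T_{K_{h_r,D_{R_N},I_{r,N}}}-M^{h_r}_{D_{R_N}}}_{S^1}\le 2\sum_{j\ge N}\mu^r_{j,R_N}$.
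The two tail sums are then bounded by Lemma~\ref{lemma_one_point_I} combined with Theorem~\ref{th_quant_one} (for $h_r$ directly), and the $\ell^1$ comparison $\|\mu^0-\mu^r\|_{\ell^1}\lesssim|\partial D_{R_N}|_1$ from Theorem~\ref{th_change} is used only to pass from $\sum_{j\ge N}\mu^r_{j,R_N}$ to $\sum_{j\ge N}\mu^0_{j,R_N}$. No threshold splitting, no nesting argument, and no Markov inequalities are needed.

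Your approach trades this for an exact identity $\|P_1-P_2\|_{S^1}=2k$ followed by a combinatorial bound on the number $k$ of ``misplaced'' Hermite indices. This has the merit of giving a transparent interpretation of the trace norm as a count, and your nesting observation (that $T=\{l:\mu^r_{l,R_N}>1/4\}$ and the top-$N$ index set are both initial segments of the same eigenvalue ordering, hence comparable) is the right way to handle the second case. The cost is the extra bookkeeping: two Markov estimates, the two-threshold sandwich for $\#T$, and the appeal to the Weyl-type law~\eqref{eq:l4} for the Gaussian window. The paper's triangle-inequality detour through $M^{h_r}_{D_{R_N}}$ is shorter and sidesteps all of this, at the price of a slightly looser (but still $\lesssim\sqrt N$) bound and a less explicit constant.
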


\begin{proof}
\noindent \emph{Step 1: Comparison of different polyanalytic levels.} We
consider two eigen-expansions of the Toeplitz operator $M^{h_r}_{D_{R_N}}$:
\begin{align}  \label{eq_exp2}
M^{h_r}_{D_{R_N}} &= \sum_{j \geq 1} \lambda_j(D_{R_N}, h_r) \,
p^{D_{R_N}}_{h_r, j} \otimes p^{D_{R_N}}_{h_r, j}, \\
M^{h_r}_{D_{R_N}} &=\sum_{j \geq 0} \mu^r_{j, R_N} \, V_{h_r} h_j \otimes
V_{h_r} h_j. \label{eq_exp1}
\end{align}
Recall that, while the eigenvalues in \eqref{eq_exp1} are ordered
non-increasingly, the eigenvalues in \eqref{eq_exp2} follow the indexation
of Hermite functions. When $r=0$, according to Corollary \ref{coro_ident_3},
the two expansions coincide: the sequence $\mu^0_{j,R_N}$ is decreasing, and
\begin{align}  \label{eq_exp3}
\lambda_{j+1}(D_{R_N}, h_0)=\mu^0_{j,R_N}, \qquad j \geq 0.
\end{align}
We now quantify the deviation between the two eigen-expansions for general $r
$. To this end, we use the unitary equivalence between $M^{h_r}_{D_{R_N}}$
and the time-frequency localization operator $H^{h_r}_{D_{R_N}}$ - cf. %
\eqref{eq_diagram}. By \eqref{eq_hr},
\begin{equation*}
H_{D_{R_{N}}}^{h_{r}}=\sum_{j\geq 0}\mu_{j,R_N}^{r} \, h_j \otimes h_j.
\end{equation*}
While the operators $M^{h_r}_{D_{R_N}}$ act on mutually orthogonal subspaces
of $L^2(\Rtdst)$ for different values of $r$, their counterparts $%
H_{D_{R_{N}}}^{h_{r}}$ act on configuration space and so can readily be
compared by means of Theorem \ref{th_change}. We obtain
\begin{equation}
\lVert \mu_{\cdot, R_N} ^{0}-\mu_{\cdot, R_N} ^{r}\rVert _{\ell ^{1}}=\lVert
H_{D_{R_{N}}}^{h_{0}}-H_{D_{R_{N}}}^{h_{r}}\rVert _{S^{1}}\leq C_{r} \left|
\partial D_{R_{N}} \right|  _{1} \asymp R_N \asymp \sqrt{N}.  \label{eq_mus}
\end{equation}

\noindent \emph{Step 2. Estimates for the spectral truncations.} According
to Proposition \ref{prop_ident_2},
\begin{align}  \label{eq_xas}
K_{h_{r},D_{R_{N}},I_{r,N}} = \sum_{j=0}^{N-1} V_{h_r} h_j \otimes V_{h_r}
h_j.
\end{align}
For clarity, in what follows we denote by $T_K$ the operator with integral
kernel $K$. Let $L_j := 1$ for $1 \leq j \leq N$ and $L_j := 0$, for $j>N$.
Using the expansion in \eqref{eq_exp1} and \eqref{eq_diag_toep}, we estimate
the trace-norm:
\begin{align*}
&\norm{T_{K_{h_r, D_{R_N}}} - M^{h_r}_{D_{R_N}}}_{S^1} = \bignorm{ \sum_{j
\geq 1} \big(L_j - \lambda_j(D_{R_N}, h_r)\big) \, p^{D_{R_N}}_{h_r, j}
\otimes p^{D_{R_N}}_{h_r, j} }_{S^1} \\
&\qquad \leq \sum_{j \geq 1} \left| L_j - \lambda_j(D_{R_N}, h_r) \right|
= \sum_{j=1}^N \left[ 1-\lambda_j(D_{R_N}, h_r) \right] + \sum_{j>N}
\lambda_j(D_{R_N}, h_r) \\
&\qquad = N - \sum_{j \geq 1} \lambda_j(D_{R_N}, h_r) + 2 \sum_{j>N}
\lambda_j(D_{R_N}, h_r) = 2 \sum_{j>N} \lambda_j(D_{R_N}, h_r) \, ,
\end{align*}
as $\sum _j \lambda _j = |D_{{R_N}}| = N$ by \eqref{eq:o10}. Since $%
\mu^r_{j,R_N}$ is a rearrangement of $\lambda_j(D_{R_N}, h_r)$, we can use %
\eqref{eq_exp2} and \eqref{eq_xas} to mimic the argument. Thus, a similar
calculation gives
\begin{align*}
\norm{T_{K_{h_r, D_{R_N}, I_{r,N}}} - M^{h_r}_{D_{R_N}}}_{S^1} \leq 2
\sum_{j>N-1} \mu^r_{j,R_N},
\end{align*}
and consequently,
\begin{align}  \label{eq_almost}
\norm{T_{K_{h_r, D_{R_N}}} - T_{K_{h_r, D_{R_N}, I_{r,N}}}}_{S^1} \lesssim
\sum_{j>N} \lambda_j(D_{R_N}, h_r) + \sum_{j>N-1} \mu^r_{j,R_N}.
\end{align}
\noindent \emph{Step 3. Final estimates.} Combining \eqref{eq_almost} with %
\eqref{eq_exp3} and \eqref{eq_mus} we obtain
\begin{align}  \label{eq_almost_2}
\norm{T_{K_{h_r, D_{R_N}}} - T_{K_{h_r, D_{R_N}, I_{r,N}}}}_{S^1} \lesssim
\sum_{j>N} \lambda_j(D_{R_N}, h_r) + \sum_{j>N} \lambda_j(D_{R_N}, h_0) +  
\sqrt{N}.
\end{align}
We now invoke Lemma \ref{lemma_one_point_I} and Theorem \ref{th_quant_one}
to estimate
\begin{align}  \label{eq_almost_3}
\sum_{j>N} \lambda_j(D_{R_N}, h_r) \asymp \norm{\rho_{h_r, D_{R_N}} -
1_{D_{R_N}}}_{L^1} \lesssim \left| \partial D_{R_{N}} \right|  _{1} \asymp
\sqrt{N}.
\end{align}
Finally, \eqref{eq_tne_2} follows by combining \eqref{eq_almost_2} and %
\eqref{eq_almost_3}.
\end{proof}

\subsection{Transference to finite pure polyanalytic ensembles}

\begin{proof}[Proof of Theorem \protect\ref{th_main}]
We use Proposition \ref{prop_ident_2} to identify the $(r,N)$-polyanalytic
ensemble with a Weyl-Heisenberg ensemble with parameters $%
(h_{r},D_{R_{N}},I_{r,N})$, with correlation $K_{h_{r},D_{R_{N}},I_{r,N}}$
as in Theorem \ref{th_main_a}. By Proposition \ref{prop_ident_2}, $%
\widetilde K_{h_{r},D_{R_{N}},I_{r,N}} = K_{r,N}$. Therefore, the conclusion
follows from \eqref{eq_tne_2}.
\end{proof}

\subsection{The one-point intensity of finite polyanalytic ensembles}

\label{sec_xxx}

\begin{proof}[Proof of Theorem \protect\ref{asy_pure}]
We use the notation of Theorem \ref{th_main_a}; in particular $R_N = \sqrt{%
\tfrac{N}{\pi}}$. By \eqref{eq_related}, $\rho_{r,N}=%
\rho_{h_{r},D_{R_{N}},I_{r,N}}$, and we can estimate
\begin{align*}
\lVert \rho_{r,N}-1_{D_{R_N}}\rVert _{1} \leq \norm{%
\rho_{h_{r},D_{R_{N}},I_{r,N}} - \rho_{h_r,D_{R_N}}}_1 + \norm{%
\rho_{h_r,D_{R_N}} - 1_{D_{R_N}}}_1.
\end{align*}
By Theorem \ref{th_quant_one}, $\norm{\rho_{h_r,D_{R_N}} - 1_{D_{R_N}}}_1
\lesssim \sqrt{N}$. In addition, by Lemma \ref{lemma_trace} in the appendix,
\begin{align*}
\norm{\rho_{h_{r},D_{R_{N}},I_{r,N}} - \rho_{h_r,D_{R_N}}}_1 &= \int_{\Rtdst%
} \left| K_{h_{r},D_{R_{N}},I_{r,N}}(z,z) - K_{h_r,D_{R_N}}(z,z) \right|   dz
\\
&\leq \bignorm {K_{h_{r},D_{R_{N}},I_{r,N}} - K_{h_r,D_{R_N}}}_{S^1}.
\end{align*}
Hence, the conclusion follows from Theorem \ref{th_main_a}.
\end{proof}

Note that the proofs of Theorems~\ref{th_main_a} and \ref{asy_pure} combine
our main insights: the identification of the finite polyanalytic ensembles
with certain WH ensembles, the analysis of the spectrum of time-frequency
localization operators and Toeplitz operators, and the non-asymptotic
estimates of the accumulated spectrum.

\section{Double orthogonality}

\label{sec_do}

\subsection{Restriction versus localization}

\label{sec_duality} Let $\mathcal{X}^g$ be an infinite WH ensemble on $\Rtdst
$ and $\Omega \subseteq \Rtdst$ of finite measure and non-empty interior. We
consider the \emph{restriction operator} $T^g_\Omega: L^2(\Rtdst) \to L^2(%
\Rtdst)$,
\begin{align*}
T^g_\Omega F := 1_\Omega P_{\mathcal{V}_g} (1_\Omega \cdot F),
\end{align*}
and the \emph{inflated Toeplitz operator} $S^g_\Omega: L^2(\Rtdst) \to L^2(%
\Rtdst)$,
\begin{align*}
S^g_\Omega F := P_{\mathcal{V}_g} (1_\Omega \cdot P_{\mathcal{V}_g} F).
\end{align*}
In view of the decomposition $L^2(\Rtdst) = \mathcal{V}_{g} \oplus \mathcal{V%
}_{g}^\perp$, $S^g_\Omega$ and $M_{\Omega}^{g}$ are related by
\begin{align*}
S^g_{\Omega} = \left[
\begin{array}{cc}
M_{\Omega}^{g} & 0 \\
0 & 0%
\end{array}
\right],
\end{align*}
and therefore share the same non-zero eigenvalues, and the corresponding
eigenspaces coincide. The integral representation of $S^g_\Omega $ is given
by \eqref{Toe}. Since $P_{\mathcal{V}_g}$ and $F \mapsto F \cdot 1_\Omega$
are  orthogonal projections, both $T^g_\Omega$ and $S^g_\Omega$ are 
self-adjoint operators with spectrum contained in $[0,1]$. The integral
kernel of $T^g_\Omega$ is given by \eqref{eq_intro_ker_o} and $\int {K^g}%
_{|\Omega}(z,z) dz = \left| \Omega \right|   <+\infty$. Therefore, $%
T^g_\Omega$ is trace-class (see e.g. \cite[Theorems 2.12 and 2.14]%
{simonsbook}). It is an elementary fact that $T^g_\Omega$ and $S^g_\Omega$
have the same non-zero eigenvalues with the same multiplicities (this is
true for $PQP$ and $QPQ$ whenever $P$ and $Q$ are orthogonal projections).
Morever, for $\lambda \not=0$, the map
\begin{equation*}
F \longmapsto \frac{1}{\sqrt{\lambda}} 1_\Omega F
\end{equation*}
is an isometry between the eigenspaces
\begin{equation*}
\left \{ F \in L^2(\Rtdst): S^g_\Omega F = \lambda F \right \}
\longrightarrow  \left \{ F \in L^2(\Rtdst): T^g_\Omega F = \lambda F \right
\}  \, .
\end{equation*}
Therefore, if $M^g_\Omega$ is diagonalized as in \eqref{eq_eigenexp}, then $%
T^g_\Omega$ can be expanded as in \eqref{eq_intro_ker_o2}. This justifies
the discussion in Section \ref{sec_intro_cons}.

\subsection{Simultaneous observability}

\label{sec_so} Let $\mathcal{X}$ be a determinantal point process\ (with a
Hermitian locally trace-class correlation kernel). We say that a family of
sets $\left\{ \Omega_\gamma:\gamma \in \Gamma\right\}$ is \emph{%
simultaneously observable} for $\mathcal{X}$, if the following happens. Let $%
\Omega=\bigcup_{\gamma \in \Gamma }\Omega_\gamma$. There is an orthogonal
basis $\{\varphi_{j}:j\in J\}$ of the closure of the range of the
restriction operator $T_\Omega$ consisting of eigenfunctions of $T_{\Omega}$
such that for each $\gamma \in \Gamma$, the set $\{\varphi_{j}|_{\Omega_%
\gamma}:j\in J\}$ of the restricted functions is orthogonal. This is a
slightly relaxed version of the notion in \cite[pg. 69]{DetPointRand}: in
the situation of the definition, the functions $\{\varphi
_{j}|_{\Omega_\gamma}:j\in J\} \setminus \left \{ 0 \right \} $ form an
orthogonal basis of the closure of the range of $T_{\Omega_\gamma}$, but we
avoid making claims about the kernel of $T_\Omega$. As explained in \cite[%
pg. 69]{DetPointRand}, the motivation for this terminology comes from
quantum mechanics, where two physical quantities can be measured
simultaneously if the corresponding operators commute (or, more concretely,
if they have a basis of common eigenfunctions).

\begin{theorem}
\label{th_gab_sym} Let $\mathcal{D} = \left \{ D_R: R \in {\mathbb{R}}^+
\right \} $ be the family of all disks of ${\mathbb{R}}^2$ centered at the
origin and $r\in \bN$. Then

\begin{itemize}
\item[(i)] $\mathcal{D}$ is simultaneously observable for the infinite
Weyl-Heisenberg ensemble with window $h_{r}$.

\item[(ii)] Let $D_{R_0}$ be a disk and $I\subseteq {\mathbb{N}}$. Then $%
\mathcal{D}$ is simultaneously observable for the Weyl-Heisenberg ensemble $%
\mathcal{X}^{h_r}_{D_{R_0}, I}$.
\end{itemize}
\end{theorem}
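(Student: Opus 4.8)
\textbf{Proof plan for Theorem \ref{th_gab_sym}.}
The key input is Proposition \ref{prop_hn_disks}, which shows that the Hermite functions $\{h_j : j \geq 0\}$ form a complete set of eigenfunctions of $H^{h_r}_{D_R}$ \emph{simultaneously for every} $R>0$, and equivalently that the renormalized complex Hermite functions $e_{j,r}(z) := H_{j,r}(z,\bar z)\,e^{-\pi|z|^2/2}$ are common eigenfunctions of all the Toeplitz operators $\widetilde M^{h_r}_{D_R}$, $R>0$. The plan is to turn this simultaneous diagonalization of the \emph{inflated Toeplitz operators} $S^{h_r}_{D_R}$ into a statement about the \emph{restriction operators} $T^{h_r}_{D_R}$ using the duality set up in Section \ref{sec_duality}, namely that for $\lambda \neq 0$ the map $F \mapsto \lambda^{-1/2} 1_{D_R} F$ is an isometry from the $\lambda$-eigenspace of $S^{h_r}_{D_R}$ onto the $\lambda$-eigenspace of $T^{h_r}_{D_R}$, and that $\mathcal V_{h_r}$-orthogonality of the $p^{D_R}_{h_r,j}$ is equivalent (via \eqref{eq:o11}) to $L^2(D_R)$-orthogonality of their restrictions.

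For part (i): the infinite WH ensemble with window $h_r$ has correlation kernel $K^{h_r}$, whose integral operator is $P_{\mathcal V_{h_r}}$; its restriction to $\Omega := \bigcup_{R>0} D_R = \mathbb R^2$ is all of $\mathcal V_{h_r}$, with orthonormal basis $\{\varphi_j := V_{h_r} h_j : j \geq 0\}$. I would verify this family does the job: fix any $D_R$. By Proposition \ref{prop_hn_disks} and Lemma \ref{lemma_hn_eig}, each $\varphi_j = V_{h_r} h_j$ equals (up to the renormalization) a common eigenfunction of $\widetilde M^{h_r}_{D_R}$, hence of $S^{h_r}_{D_R}$, with eigenvalue $\mu^r_{j,R} \neq 0$. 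Then the computation
\begin{align*}
\langle 1_{D_R}\varphi_j, 1_{D_R}\varphi_{j'}\rangle_{L^2(\mathbb R^2)} = \langle S^{h_r}_{D_R}\varphi_j, \varphi_{j'}\rangle = \mu^r_{j,R}\,\delta_{j,j'}
\end{align*}
shows the restricted functions $\{\varphi_j|_{D_R} : j \geq 0\}$ are mutually orthogonal, and all nonzero since $\mu^r_{j,R} > 0$; they are also eigenfunctions of $T^{h_r}_{D_R}$ by the isometry of Section \ref{sec_duality}. This is exactly the definition of simultaneous observability for $\mathcal D$, proving (i). (This is essentially Theorem \ref{th_intro_sym}.)

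For part (ii): the ensemble $\mathcal X^{h_r}_{D_{R_0}, I}$ has correlation kernel $K_{h_r, D_{R_0}, I}(z,z') = \sum_{j \in I} p^{D_{R_0}}_{h_r,j}(z)\overline{p^{D_{R_0}}_{h_r,j}(z')}$, the kernel of the orthogonal projection onto $W := \overline{\mathrm{span}}\{p^{D_{R_0}}_{h_r,j} : j \in I\} \subseteq \mathcal V_{h_r}$. By Proposition \ref{prop_hn_disks}, after relabeling by the permutation $\sigma$ each $p^{D_{R_0}}_{h_r,j}$ is (up to renormalization) one of the $\varphi_{j}$; so $W$ is the closed span of a sub-collection $\{\varphi_j : j \in J\}$ of the common eigenbasis from part (i). Set $\Omega := \bigcup_{R>0} D_R = \mathbb R^2$; the restriction operator of this DPP to $\Omega$ is just the projection onto $W$, with the orthogonal eigenbasis $\{\varphi_j : j \in J\}$. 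Running the same computation as above — now with the Toeplitz operator $S^{h_r}_{D_R}$ restricted to $W$, which preserves $W$ precisely because $W$ is spanned by common eigenfunctions — shows $\{\varphi_j|_{D_R} : j \in J\}$ is orthogonal in $L^2(D_R)$ for every $R>0$, establishing (ii). The one point requiring care, and the main (mild) obstacle, is checking that $W$ is invariant under each $S^{h_r}_{D_R}$ and that the restriction operator of the truncated DPP to $\mathbb R^2$ is genuinely the projection onto $W$ rather than onto a larger space; both follow from the fact that the $\varphi_j$ diagonalize $S^{h_r}_{D_R}$ for \emph{all} $R$ simultaneously, so no new mixing of eigenfunctions occurs when one restricts to $W$.
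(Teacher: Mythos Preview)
Your proposal is correct and follows essentially the same approach as the paper: both proofs hinge on Proposition~\ref{prop_hn_disks} to obtain a common eigenbasis $\{V_{h_r}h_j\}$ for all the Toeplitz operators $M^{h_r}_{D_R}$ (equivalently $H^{h_r}_{D_R}$), and then use the duality of Section~\ref{sec_duality} between $S^{h_r}_{D_R}$ and $T^{h_r}_{D_R}$ to transfer this to the restriction operators. Your write-up is somewhat more explicit than the paper's in spelling out the orthogonality computation $\langle 1_{D_R}\varphi_j,1_{D_R}\varphi_{j'}\rangle=\langle S^{h_r}_{D_R}\varphi_j,\varphi_{j'}\rangle=\mu^r_{j,R}\delta_{j,j'}$ and in identifying $\Omega=\mathbb{R}^2$, but the argument is the same.
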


\begin{proof}
Let us prove (i). Since the definition of simultaneous observability
involves the orthogonal complement of the kernels of the restriction
operators $T^g_{D_R}$, $\overline{\mathrm{ran}} (T_{D_R}^g) = (\mathrm{ker}%
\, T_{D_R}^g)^\perp $, the discussion in Section \ref{sec_duality} implies
that it suffices to show that the Toeplitz operators $M_{D_R}^{h_r}$ have a
common basis of eigenfunctions. Since $V_{h_r}^{\ast
}M_{D_R}^{h_r}V_{h_r}=H_{D_R}^{h_r}$, and, by Proposition \ref{prop_hn_disks}%
, the Hermite basis diagonalizes $H_{D_R}^{h_r}$ for all $R>0$, the
conclusion follows.

Let us now prove (ii). The ensemble $\mathcal{X}^{h_r}_{D_{R_0}, I}$ is
constructed by selecting the eigenfunctions of the Toeplitz operator $%
M_{D_{R_0}}^{h_r}:{\mathcal{V}_{h_r}}\rightarrow {\mathcal{V}_{h_r}}$
corresponding to the indices in $I$:
\begin{equation*}
K^{h_r}_{D_{R_0},I}(z,z^{\prime })=\sum_{j\in I}p^{D_{R_0}}_{h_r,j}(z)%
\overline{p^{D_{R_0}}_{h_r,j}(z^{\prime })}.
\end{equation*}
Since, by part (i), the functions $p^\Omega_{g,j}$ are orthogonal when
restricted to disks, the conclusion follows.
\end{proof}

As a consequence, we obtain Theorem \ref{th_intro_sym}, which we restate for
convenience.

\begin{reptheorem}{th_intro_sym}
The family $\mathcal{D} = \left \{ D_R: r \in {\mathbb{R}}^+ \right \} $ of all disks of
${\mathbb{C}}$ centered at the
origin is
simultaneously observable for every finite and infinite pure-type
polyanalytic ensemble.
\end{reptheorem}

\begin{proof}
This follows immediately from Proposition \ref{prop_ident_2} and Theorem \ref%
{th_gab_sym}. (This slightly extends a result originally derived by Shirai
\cite{SHIRAI}.)
\end{proof}

\subsection{An extension of Kostlan's theorem}

Theorem \ref{eq_thm4} is a consequence of the following slightly more
general result.

\begin{theorem}
\label{th_poly_kostlan} Let $\mathcal{X}$ be the determinantal point process
associated with the $(r,J)$-pure polyanalytic ensemble, with $J\subseteq \Nst%
_{0}$ finite. Then the point process on $[0,+\infty )$ of absolute values $%
\left\vert \mathcal{X}\right\vert $ has the same distribution as the process
generated by $\{Y_{j}: j\in J\}$ where the $Y_{j}$'s are independent random
variables with density
\begin{equation*}
f_{Y_{j}}(x):=2 \frac{\pi^{j-r+1} r!}{j!} x^{2(j-r)+1}\left[ L_{r}^{j-r}(\pi
x^2)\right]^{2}e^{-\pi x^2}.
\end{equation*}
(Hence, $Y^2_j$ is distributed according to $f_{Y_j^2}(x) = \frac{\pi^{j-r+1} r!}{j!} x^{j-r}\left[ L_{r}^{j-r}(\pi
x)\right]^{2}e^{-\pi x}$.)
\end{theorem}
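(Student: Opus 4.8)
The plan is to follow Kostlan's original strategy for the Ginibre ensemble \cite{Kostlan}, namely to prove that the point process of squared absolute values of a determinantal process with a \emph{radial} projection kernel is itself a (non-determinantal) point process whose points are independent, with explicit densities coming from the radial profiles of the eigenfunctions. First I would record that, by Theorem~\ref{th_intro_sym} (equivalently, by Proposition~\ref{prop_ident_2} combined with Theorem~\ref{th_gab_sym}), the $(r,J)$-pure polyanalytic ensemble has a correlation kernel built from the orthonormal functions $\varphi_j(z) = H_{j,r}(z,\overline z)\,e^{-\pi|z|^2/2}$, $j \in J$, which are mutually orthogonal on every centered disk $D_R$ and hence on every annulus $\{R_1 \le |z| \le R_2\}$. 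The key structural point is that each $\varphi_j$ has the form $\varphi_j(z) = c_j\,|z|^{|j-r|}\,L^{|j-r|}_{\min(j,r)}(\pi|z|^2)\,e^{-\pi|z|^2/2}\,e^{i\,\mathrm{sign}(j-r)(j-r)\arg z}$, so $|\varphi_j(z)|^2$ is a radial function $g_j(|z|^2)$.

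Next I would carry out the change of variables $z \mapsto t = \pi|z|^2$ (or $|z|^2$; the constant only rescales densities). Under this map, a determinantal process on $\bC$ with kernel $K(z,z') = \sum_{j \in J}\varphi_j(z)\overline{\varphi_j(z')}$ pushes forward to a point process on $[0,\infty)$. Because the $\varphi_j$ are orthogonal on all centered annuli and each $|\varphi_j|^2$ is radial, integrating the angular variable out decouples the system: the pushed-forward kernel, with respect to the new variable, is again of projection type with orthonormal ``radial profile'' functions $\psi_j(t)$ satisfying $|\psi_j(t)|^2 = $ (up to the Jacobian) the claimed density $f_{Y_j^2}$. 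The standard fact (Kostlan's lemma, see also \cite[\S4.7]{DetPointRand}) is that a determinantal process with a projection kernel of rank $\#J$ whose eigenfunctions are, after the radial reduction, supported on ``disjoint'' angular modes — more precisely, for which $\int \psi_j \overline{\psi_k}\,\mathbf 1_{[0,s]} = 0$ for $j \ne k$ and all $s$ — has the property that the set of radial coordinates is distributed as $\{Y_j : j \in J\}$ with the $Y_j$ independent and $Y_j$ having density proportional to $|\psi_j|^2$. I would verify this disjointness from Theorem~\ref{th_intro_sym}, then compute the densities explicitly: $f_{Y_j^2}(x) = \tfrac{r!}{j!}\pi^{j-r+1}x^{j-r}[L_r^{j-r}(\pi x)]^2 e^{-\pi x}$ is exactly $|H_{j,r}(z,\overline z)|^2 e^{-\pi|z|^2}$ written in the variable $x = |z|^2$ after accounting for the factor $\pi$ from $dz = \pi\,d(|z|^2)\,\tfrac{d\theta}{2\pi}\cdot$const; and then $f_{Y_j}$ follows from $f_{Y_j}(x) = 2x\,f_{Y_j^2}(x^2)$, giving the stated $2\tfrac{\pi^{j-r+1}r!}{j!}x^{2(j-r)+1}[L_r^{j-r}(\pi x^2)]^2 e^{-\pi x^2}$. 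Normalization to a probability density is automatic from $\|\varphi_j\|_{L^2(\bC,\,e^{-\pi|z|^2})} = 1$ (the complex Hermite orthogonality relation).

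The main obstacle I anticipate is the careful justification of the radial reduction and of the independence claim: one must check that the measure-theoretic pushforward of a determinantal process under $z \mapsto |z|$ is well-defined and that the resulting process on $[0,\infty)$ inherits enough structure to invoke Kostlan's lemma — in particular that the ``simultaneous observability'' on centered disks is precisely the hypothesis needed (disjoint filtration of the eigenfunctions with respect to the exhausting family of annuli). This is where Theorem~\ref{th_intro_sym} does the real work: it is exactly the statement that the restriction operators $T_{D_R}$ commute, so their common eigenbasis $\{\varphi_j\}$ realizes each $\varphi_j$ as ``living at a definite radial scale'' in the distributional sense, which is what makes the radii independent. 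A secondary, purely computational point is matching the powers of $\pi$ and the Jacobian factor correctly so that the densities come out exactly as stated; this is routine but must be done with care. Once these are in place, Theorem~\ref{eq_thm4} is the special case $J = \{0,1,\dots,N-1\}$, with $Y_{j,r} := Y_j$ and $n = N$.
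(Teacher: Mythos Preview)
Your proposal is correct and follows essentially the same route as the paper: both arguments use the simultaneous observability of centered disks (Theorem~\ref{th_intro_sym}) as the structural input, then invoke the standard result from \cite{DetPointRand} (the paper cites Proposition~4.5.9; you point to \S4.7) that turns simultaneous observability into independence of the radial counts, and finally identify the densities by reading off $|H_{j,r}(z,\overline z)|^2 e^{-\pi|z|^2}$ in polar coordinates. The only point the paper makes explicit that you leave implicit is the Laguerre identity $\tfrac{(-x)^k}{k!}L_r^{k-r}(x)=\tfrac{(-x)^r}{r!}L_k^{r-k}(x)$, needed to see that your expression with $L_{\min(j,r)}^{|j-r|}$ matches the stated density $x^{j-r}[L_r^{j-r}(\pi x)]^2$ in the case $j<r$.
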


\begin{proof}
We want to show that the point processes $\left\vert \mathcal{X}\right\vert
:=\sum_{x\in \mathcal{X}}\delta _{\left\vert x\right\vert }$ on
$\mathbb{R}$  and $\mathcal{Y}
:=\sum_{j\in J}\delta _{Y_{j}}$ on $\mathbb{C}$  have the same distribution. Let $%
I_{k}=[r_{k},R_{k}]$, $k=1,\ldots N$, be a disjoint family of subintervals
of $[0,+\infty )$. Then
\begin{equation*}
\left( \mathcal{Y}(I_{1}),\ldots ,\mathcal{Y}(I_{N})\right) \overset{d}{=}
\sum_{j\in J}\zeta_{j},
\end{equation*}
where the $\zeta_{j}$ are independent, $\mathbb{P}(\zeta_{j}=e_{k})=%
\int_{r_{k}}^{R_{k}}f_{Y_{j}}(x)dx$, and $\mathbb{P}(\zeta_{j}=0)=\int_{\Rst %
\setminus \cup_{k}[r_{k},R_{k}]}f_{Y_{j}}(x)dx$. On the other hand, Theorem %
\ref{th_intro_sym} implies that the annuli $A_k:= \left \{ z \in \bC : r_k
\leq \left| z \right|   \leq R_k \right \} $ are simultaneously observable
for $\mathcal{X}$. Hence, by \cite[Proposition 4.5.9]{DetPointRand} - which
is still applicable for the slightly more general definition of simultaneous
observability in Section \ref{sec_so}, we have
\begin{equation*}
\left( \left\vert \mathcal{X}\right\vert (I_{1}),\ldots ,\left\vert \mathcal{%
\ X}\right\vert (I_{N})\right) =\left( \mathcal{X}(A_{1}),\ldots ,\mathcal{X}
(A_{N})\right) \overset{d}{=}\sum_{j\in J}\zeta _{j}^{\prime },
\end{equation*}
where the $\zeta_{j}^{\prime }$ are independent, $\mathbb{P}%
(\zeta_{j}^{\prime }=e_{k})= \int_{A_k}\left\vert H_{j,r}(z,\overline{z}%
)\right\vert^{2} e^{-\pi \left\vert z\right\vert^{2}} dz$, and $\mathbb{P}%
(\zeta_{j}^{\prime }=0) = \int_{\bC \setminus \cup_k A_k}\left\vert
H_{j,r}(z,\overline{z})\right\vert^{2} e^{-\pi \left\vert z\right\vert^{2}}
dz$. A direct calculation, together with the identity
\begin{equation*}
\frac{(-x)^{k}}{k!}L_{r}^{k-r}(x)=\frac{(-x)^{r}}{r!}L_{k}^{r-k}(x)
\end{equation*}
shows that $\left(\zeta_{j}: j \in J \right) \overset{d}{=}
\left(\zeta_{j}^{\prime}: j \in J \right)$ and the conclusion follows.
\end{proof}

\begin{rem}
\rm{Let $n(R)$ denote the number of points of a point process in the
disk of radius $R$ centered at the origin. An immediate consequence of
Theorem \ref{th_poly_kostlan} is the following formula for the probability
of finding such a disk void of points, when the points are distributed
according to the a polyanalytic Ginibre ensemble of the pure type:
\begin{equation*}
\mathbb{P}\left[ n(R)=0\right] =\prod_{j}P\left( Y_{j} \geq R \right)
\end{equation*}
This is known as the hole probability (see \cite[Section 7.2]{DetPointRand}
for applications in the case of the Ginibre ensemble).}
\end{rem}

\appendix

\section{Additional background material}

\label{sec_app}

\subsection{Determinantal point processes and intensities}
\label{sec_det}

We follow the presentation of \cite{Bor00, DetPointRand}.
Let $K: \Rst^d \times \Rst^d \to \bC$ be a locally trace-class Hermitian
kernel with spectrum contained in $[0,1]$, and consider the functions
\begin{align}
\label{eq_rho}
\rho_{n}(x_1, \ldots, x_n) := \det \left( K(x_j, x_k)
\right)_{j,k=1,\ldots,d}, \qquad x_1, \ldots, x_n \in \Rdst.
\end{align}
The Macchi-Soshnikov theorem implies that there exists a point process $\mathcal{X}$ on $\Rst^d$ such that for every family of disjoint measurable
sets $\Omega_1, \ldots \Omega_n \subseteq \Rst^d$,
\begin{align*}
\mathbb{E} \left[ \prod_{j=1}^n \mathcal{X}(\Omega_j) \right] =
\int_{\prod_j \Omega_j} \rho_{n}(x_1,\ldots,x_n) dx_{1} \ldots dx_n,
\end{align*}
where $\mathcal{X}(\Omega)$ denotes the number of points of $\mathcal{X}$ to
be found in $\Omega$. The functions $\rho_{n}$ are known as correlation
functions or intensities and $\mathcal{X}$ is called a determinantal point
process. The one-point intensity $\rho$ is simply the diagonal of the
correlation kernel
\begin{equation*}
\rho(x) = \rho_{1}(x)=K(x,x),
\end{equation*}
and allows one to compute the expected number of points to be found on a
domain $\Omega$:
\begin{equation*}
\mathbb{E}\left[ \mathcal{X}(\Omega)\right] =\int_{\Omega}\rho (x)dx.
\end{equation*}
The one-point intensity can also be used to evaluate expectations of linear
statistics:
\begin{equation*}
\tfrac{1}{n}\mathbb{E} \left[f(x_{1})+\ldots+f(x_{n})\right]=\mathbb{E}\left[
f(x_{1})\right] =\int_{\Rdst}f(x)\rho(x)dx.
\end{equation*}

A DPP can be represented by different kernels. If $m: \Rst^d \to \bC$ is unimodular (i.e., $\abs{m(z)}=1$), then the kernel
\begin{align*}
K_m(x,x') = \overline{m}(x) K(x,x') m(x'),
\end{align*}
produces the same intensities in \eqref{eq_rho} as $K$ does. (This is a so-called \emph{gauge transformation}). The integral operator with kernel $K_m$ is related to the one with kernel $K$ by
\begin{align*}
\overline{m}(x) T_K (m f) (x)
= \int_{\Rdst} \overline{m}(x) K(x,x') m(x') f(x') dx'
= T_{K_m} f(x).
\end{align*}
Similarly, a linear transformation of a  DPP corresponds to a linear change of variables in the kernel $K$.

\subsection{Functions of bounded variation}

\label{sec_bv} A real-valued function $f\in L^{1}({\mathbb{R}^{d}})$ is said
to have \emph{bounded variation}, $f\in \mathrm{BV}({\mathbb{R}^{d}})$, if
its distributional partial derivatives are finite Radon measures. The
variation of $f$ is defined as
\begin{equation*}
\mathit{var}(f):=\sup \left\{ \int_{\mathbb{R}^{d}}f(x)\, \mathrm{div}\,
\phi (x)dx:\phi \in C_{c}^{1}({\mathbb{R}^{d}},{\mathbb{R}^{d}}),\left\vert
\phi (x)\right\vert _{2}\leq 1\right\},
\end{equation*}
where $C_{c}^{1}({\mathbb{R}^{d}},{\mathbb{R}^{d}})$ denotes the class of
compactly supported $C^{1}$-vector fields and $\mathrm{div}$ is the
divergence operator. If $f$ is continuously differentiable, then $f\in
\mathrm{BV}({\mathbb{R}^{d}})$ simply means that $\partial _{x_{1}}f,\ldots $%
, $\partial_{x_{d}}f\in L^{1}({\mathbb{R}^{d}})$, and $\mathit{var}(f)=\int_{%
\mathbb{R}^{d}}\left\vert \nabla f(x)\right\vert _{2}dx=\lVert \nabla
f\rVert _{L^{1}}$. A set $\Omega \subseteq {\mathbb{R}^{d}}$ is said to have
\emph{finite perimeter} if its characteristic function $1_{\Omega}$ is of
bounded variation, and the perimeter of $\Omega$ is defined as $\left|
\partial \Omega \right|  _{d-1} :=\mathit{var}(1_{\Omega}) $. If $\Omega$
has a smooth boundary, then $\left| \partial \Omega \right|  _{d-1}$ is just
the $(d-1)$-Hausdorff measure of the topological boundary. See \cite[Chapter
5]{evga92} for an extensive discussion of $\mathrm{BV}$.

\subsection{Trace-class operators}

\begin{lemma}
\label{lemma_trace} Let $K: \Rdst \times \Rdst \to \bC$ be a continuous
function and assume that the integral operator
\begin{align*}
T_K f(x) = \int_\Rdst K(x,y) f(y) dy, \qquad f \in L^2(\Rdst),
\end{align*}
is well-defined, bounded, and trace-class. Then $\int_\Rdst \left| K(x,x)
\right|   dx \leq \norm{T_K}_{S^1}$, where $\norm{\cdot}_{S^1}$ denotes the
trace-norm.
\end{lemma}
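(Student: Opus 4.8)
The plan is to avoid touching the diagonal $K(x,x)$ directly: since $T_K$ is trace-class (hence Hilbert--Schmidt), it has an $L^2$-kernel that agrees with $K$ only up to a null set, so the pointwise diagonal is not a priori controlled by $\norm{T_K}_{S^1}$. Instead I will replace evaluation at $(x,x)$ by a mollified pairing $\langle T_K g_x^\epsilon, g_x^\epsilon\rangle$, on which the singular value (Schmidt) decomposition of $T_K$ can be used term by term, and then let the mollification parameter go to $0$ using continuity of $K$.

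Concretely, fix a mollifier $\varphi\in C_c^\infty(\Rdst)$ with $\varphi\ge 0$ and $\int\varphi=1$, set $\varphi_\epsilon(y):=\epsilon^{-d}\varphi(y/\epsilon)$ and $g_x^\epsilon(y):=\varphi_\epsilon(x-y)$, so that $\norm{g_x^\epsilon}_1=1$. Since $K$ is continuous and $g_x^\epsilon$ is bounded with compact support, Fubini gives
\[
\langle T_K g_x^\epsilon, g_x^\epsilon\rangle
=\int_{\Rdst}\int_{\Rdst}K(u,v)\,\varphi_\epsilon(x-u)\,\varphi_\epsilon(x-v)\,du\,dv
=:K_\epsilon(x),
\]
and, by continuity of $K$ at $(x,x)$, $K_\epsilon(x)\to K(x,x)$ as $\epsilon\to0$ for every $x$. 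On the other hand, write $T_K=\sum_n s_n\,\langle\cdot,\phi_n\rangle\psi_n$ with $s_n\ge0$, $\{\phi_n\},\{\psi_n\}$ orthonormal, and $\sum_n s_n=\norm{T_K}_{S^1}$. Expanding the pairing in this decomposition yields $|K_\epsilon(x)|\le\sum_n s_n\,|\langle g_x^\epsilon,\phi_n\rangle|\,|\langle\psi_n,g_x^\epsilon\rangle|$. Integrating in $x$ (Tonelli), applying the Cauchy--Schwarz inequality in the $x$-variable term by term, and recognizing $\langle g_x^\epsilon,\phi_n\rangle=(\varphi_\epsilon*\overline{\phi_n})(x)$ and $\langle\psi_n,g_x^\epsilon\rangle=(\varphi_\epsilon*\psi_n)(x)$, Young's inequality $\norm{\varphi_\epsilon*f}_2\le\norm{\varphi_\epsilon}_1\norm{f}_2=\norm{f}_2$ bounds each factor $\big(\int_{\Rdst}|\langle g_x^\epsilon,\phi_n\rangle|^2dx\big)^{1/2}$ and $\big(\int_{\Rdst}|\langle\psi_n,g_x^\epsilon\rangle|^2dx\big)^{1/2}$ by $\norm{\phi_n}_2=\norm{\psi_n}_2=1$. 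Thus $\int_{\Rdst}|K_\epsilon(x)|\,dx\le\sum_n s_n=\norm{T_K}_{S^1}$, uniformly in $\epsilon$, and Fatou's lemma applied to $|K_\epsilon(x)|\to|K(x,x)|$ gives $\int_{\Rdst}|K(x,x)|\,dx\le\norm{T_K}_{S^1}$.

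The only genuinely delicate point is the one the mollification is built to circumvent: one cannot simply insert the diagonal into the formal kernel $\sum_n s_n\psi_n(x)\overline{\phi_n(x)}$, since this representative is defined only almost everywhere and the series of diagonal products need not converge pointwise. Smearing $(x,x)$ into $\langle T_K g_x^\epsilon, g_x^\epsilon\rangle$ is what makes the singular-value bound applicable, while continuity of $K$ ensures that the limit $\epsilon\to0$ recovers exactly $K(x,x)$; the remaining ingredients are only Young's inequality, Cauchy--Schwarz, Tonelli, and Fatou. (Alternatively, one could invoke the known structure results on diagonals of trace-class integral operators, cf.\ \cite{simonsbook}, but the argument above is short and self-contained.)
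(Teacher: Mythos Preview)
Your proof is correct and is essentially the approach the paper takes: both use the singular-value decomposition of $T_K$ together with Cauchy--Schwarz to bound the diagonal by $\sum_j \mu_j$. The paper writes the formal computation $\int|K(x,x)|\,dx\le\sum_j\mu_j\int|\varphi_j(x)||\psi_j(x)|\,dx\le\sum_j\mu_j$ and then merely remarks that ``an approximation argument using the continuity of $K$ is needed to justify the computations with the restriction of $K$ to the diagonal''; your mollification argument is precisely such an approximation argument, carried out in full, so your proof is in fact a rigorous completion of the paper's sketch rather than a different route.
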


\begin{proof}
Let $T_K = \sum_j \mu_j \varphi_j \otimes \psi_j$, with $\mu_j \geq 0$ and $%
\{\varphi_j: j \geq 1\}$, $\{\psi_j: j \geq 1\}$ orthonormal. Then $K(x,y) =
\sum_j \mu_j \varphi_j(x) \overline{\psi_j(y)}$ for almost every $(x,y)$,
and we can formally compute
\begin{align*}
\int_\Rdst \left| K(x,x) \right|   dx &\leq \sum_j \mu_j \int_\Rdst \left|
\varphi_j(x) \right|   \left| \psi_j(x) \right|   dx \\
&\leq \sum_j \mu_j \left(\int_\Rdst \left| \varphi_j(x) \right|  ^2 dx
\right)^{1/2} \left(\int_\Rdst \left| \psi_j(x) \right|  ^2 dx \right)^{1/2}
\\
&=\sum_j \mu_j = \norm{T_K}_{S^1}.
\end{align*}
An approximation argument using the continuity of $K$ is needed to justify
the computations with the restriction of $K$ to the diagonal - see \cite[%
Chapters 1,2,3]{simonsbook} for related arguments.
\end{proof}

\subsection{Properties of modulation spaces}

\label{sec_mod} Recall the definition of the modulation space $M^1$ in %
\eqref{eq_def_m1}. It is well-known that, instead of the Gaussian function $%
\phi$, any non-zero Schwartz function can be used to define ${M^{1}}$,
giving an equivalent norm ~\cite{fe81-2}, \cite[Chapter 9]{Charly}. Using
this fact, the following lemma follows easily.

\begin{lemma}
\label{lemma_M1} Let $f\in {L^{2}({\mathbb{R}^{d}})}$. Then:

\begin{itemize}
\item[(i)] $f\in {M^1}({\mathbb{R}^{d}})$ if and only if $\hat{f}\in {M^1}({%
\mathbb{R}^{d}})$, where $\hat{f}$ is the Fourier transform of $f$. In this
case: $\lVert f\rVert _{{M^1}}\asymp \lVert \hat{f}\rVert _{{M^1}}$.

\item[(ii)] If $f$ is supported on $D_1(0)=\{x:\left| x \right|  \leq 1\}$
and $\hat{f}\in L^{1}({\mathbb{R}^{d}})$, then $f\in {M^1}({\mathbb{R}^{d}})$
and $\lVert f\rVert _{{M^1}}\lesssim \lVert \hat{f}\rVert _{L^{1}}$.

\item[(iii)] If $f\in {M^1}({\mathbb{R}^{d}})$ and $m\in C^{\infty }({%
\mathbb{R}^{d}})$ has bounded derivatives of all orders, then $m\cdot f\in {%
M^1}({\mathbb{R}^{d}})$, and $\lVert m\cdot f\rVert _{{M^1}}\leq C_{m}\lVert
f\rVert _{{M^1}}$, where $C_{m}$ is a constant that depends on $m $.
\end{itemize}
\end{lemma}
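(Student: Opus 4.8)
The plan is to establish the three parts in turn, using throughout the window-independence of the $M^1$-norm recalled above: for any non-zero $\psi\in\mathcal{S}(\mathbb{R}^d)$ one has $\norm{f}_{M^1}\asymp\norm{V_\psi f}_{L^1(\mathbb{R}^{2d})}$. For part (i) I would invoke the fundamental identity of time-frequency analysis, $V_g f(x,\xi)=e^{-2\pi i x\xi}\,V_{\widehat g}\widehat f(\xi,-x)$, which follows from \eqref{eq_stft} together with the behaviour of translations and modulations under the Fourier transform. Taking moduli removes the phase, and the map $(x,\xi)\mapsto(\xi,-x)$ is measure-preserving on $\mathbb{R}^{2d}$, so $\norm{V_g f}_{L^1}=\norm{V_{\widehat g}\widehat f}_{L^1}$; choosing $g=\phi$ and using $\widehat\phi=\phi$ yields $\norm{f}_{M^1}=\norm{\widehat f}_{M^1}$, in fact with equality.

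For part (ii) I would argue directly with a compactly supported window. Fix $g\in C_c^\infty(\mathbb{R}^d)$, $g\ne0$, with $\supp g\subseteq D_1(0)$, so that $\norm{f}_{M^1}\asymp\norm{V_g f}_{L^1}$. For fixed $x$, the function $\xi\mapsto V_g f(x,\xi)$ is the Fourier transform of $t\mapsto f(t)\overline{g(t-x)}$, whose support lies in $D_1(0)\cap D_1(x)$; hence $V_g f(x,\cdot)\equiv0$ unless $\abs x\le2$. For such $x$, $\int_{\mathbb{R}^d}\abs{V_g f(x,\xi)}\,d\xi=\norm{\widehat f * \mathcal{F}[\overline{g(\cdot-x)}]}_{L^1}\le\norm{\widehat f}_{L^1}\norm{\widehat g}_{L^1}$ by Young's inequality, since $\abs{\mathcal{F}[\overline{g(\cdot-x)}]}=\abs{\widehat g(-\cdot)}$. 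Integrating over $\{\abs x\le2\}$ gives $\norm{V_g f}_{L^1}\lesssim\norm{\widehat g}_{L^1}\norm{\widehat f}_{L^1}$ (with constant depending only on $d$), i.e.\ $\norm{f}_{M^1}\lesssim\norm{\widehat f}_{L^1}$. (One may alternatively first pass to $\widehat f$ via part (i).)

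Part (iii) is where I expect the main obstacle. The idea is to localize on a lattice. Fix $\psi\in C_c^\infty$ with $\supp\psi$ in a fixed ball and $\sum_{k\in\mathbb{Z}^d}\psi(\cdot-k)\equiv1$, and set $g_k:=\psi(\cdot-k)f$, so that $f=\sum_k g_k$ with $\supp g_k$ contained in a fixed-size ball about $k$. Pick $\widetilde\psi\in C_c^\infty$ equal to $1$ on that ball; then $m g_k=\mu_k g_k$ with $\mu_k:=m\cdot\widetilde\psi(\cdot-k)$. By Leibniz and the hypothesis that all derivatives of $m$ are bounded, the $\mu_k$ are supported in balls of fixed radius and satisfy $\norm{\partial^\alpha\mu_k}_\infty\le C_\alpha$ uniformly in $k$; by a rescaled, translated version of part (ii), $\norm{\mu_k}_{M^1}\lesssim\sum_{\abs\alpha\le d+1}\norm{\partial^\alpha\mu_k}_\infty\le C$ uniformly in $k$. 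Since $M^1$ is a Banach algebra under pointwise multiplication \cite{fe81-2, Charly}, $\norm{m g_k}_{M^1}=\norm{\mu_k g_k}_{M^1}\lesssim\norm{\mu_k}_{M^1}\norm{g_k}_{M^1}\le C\norm{g_k}_{M^1}$. Finally, the $\ell^1$-summability of the localized pieces, $\sum_k\norm{\psi(\cdot-k)f}_{M^1}\lesssim\norm{f}_{M^1}$ — equivalently the identification of $M^1$ with the Wiener amalgam space $W(\mathcal{F}L^1,\ell^1)$ \cite{fe81-2} — yields $\norm{mf}_{M^1}\le\sum_k\norm{m g_k}_{M^1}\lesssim\sum_k\norm{g_k}_{M^1}\lesssim\norm{f}_{M^1}$.

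The delicate points in (iii) are precisely the two uniformity statements: the bound $\norm{\mu_k}_{M^1}\le C$ independent of $k$ (which is why the hypothesis is on \emph{all} derivatives of $m$, so the local $\mathcal{F}L^1$-norms are uniformly controlled), and the $\ell^1$-summability of $\norm{\psi(\cdot-k)f}_{M^1}$, i.e.\ the local structure of $M^1$. If one prefers to bypass amalgam spaces entirely, all of (iii) can instead be cited from pseudodifferential-operator theory: multiplication by $m$ is the Weyl quantization of the symbol $(x,\xi)\mapsto m(x)$, which lies in Sjöstrand's class $M^{\infty,1}(\mathbb{R}^{2d})$, and operators with such symbols are bounded on $M^1$.
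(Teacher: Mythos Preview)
Your argument is correct. The paper does not actually give a proof of this lemma: it simply remarks that any non-zero Schwartz window defines an equivalent $M^1$-norm and that the lemma then ``follows easily.'' Your proofs of (i) and (ii) are precisely the one- or two-line applications of that window-independence the authors have in mind (choose the Gaussian, respectively a compactly supported window), so there is nothing to compare there.

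For (iii) you go further than the paper does. Window-independence alone does not quite suffice, because after writing $V_\psi(mf)(x,\xi)=\int f(t)\,\overline{\,\overline{m(t)}\,\psi(t-x)\,}\,e^{-2\pi i\xi t}\,dt$ the effective window $\overline{m(x+\cdot)}\psi$ depends on the base point $x$; one needs some uniform control, which you supply via the amalgam identification $M^1=W(\mathcal{F}L^1,\ell^1)$ together with the Banach-algebra property. This is a standard route and is correct; your alternative remark that $m\in S^0_{0,0}\subset M^{\infty,1}$ acts boundedly on $M^1$ is also valid and arguably the cleanest formulation. Either way, you have filled in details the paper omits, and the constants produced depend only on finitely many derivatives of $m$ (through $\widetilde\psi$ and the choice $N=d+1$), consistent with the statement.
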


We now prove the Sobolev embedding lemma that was used in Section \ref%
{sec_change}.

\begin{proof}[Proof of Lemma \protect\ref{lemma_M1_sobolev}]
Let $g$ be such that $\hat{g}=f$. By Lemma \ref{lemma_M1}, it suffices to
show that $g\in {M^{1}}(\mathbb{R})$ and satisfies a suitable norm estimate.
Let $\eta \in C^{\infty }(\mathbb{R})$ be such that $\eta (\xi )\equiv 0$
for $\left\vert \xi \right\vert \leq 1/2$ and $\eta (\xi )\equiv 1$ for $%
\left\vert \xi \right\vert >1$. We write $\eta (\xi )=\sum_{k=1}^{d}\xi
_{k}\eta _{k}(\xi )$, where $\eta _{k}\in C^{\infty }(\mathbb{R})$ has
bounded derivatives of all orders. We set $g_{1}:=\eta \cdot g$ and $%
g_{2}:=(1-\eta )\cdot g$. Then $g_{1}(\xi )=\sum_{k=1}^{d}\eta _{k}(\xi )\xi
_{k}g(\xi )$. Since $\xi _{k}g(\xi )=\tfrac{1}{2\pi i}\widehat{\partial
_{x_{k}}f}(\xi )$ is in $M^1$ by Lemma~\ref{lemma_M1}(i) and $\eta _{k}$ has
bounded derivatives of all orders, we conclude from Lemma \ref{lemma_M1}%
(iii) that $g_{1}\in {M^{1}}(\mathbb{R})$ and that
\begin{equation*}
\norm{g_1}_{M^1} \asymp \norm{\widehat{g}_1}_{M^1} \lesssim \sum_{k=1}^{d} %
\norm{\xi_k \widehat{g}}_{M^1} \asymp \sum_{k=1}^{d}\lVert \partial
_{x_{k}}f\rVert _{{M^{1}}}.
\end{equation*}
On the other hand, since $g$ has an integrable Fourier transform, so does $%
g_{2}=(1-\eta )\cdot g$ and $\lVert \widehat{g_{2}}\rVert _{L^{1}}\lesssim
\lVert f\rVert _{L^{1}}$. In addition, $g_{2}$ is supported on $D_{1}(0)$.
Therefore, by Lemma \ref{lemma_M1}, $g_{2}\in {M^{1}}$ and $\lVert
g_{2}\rVert _{{M^{1}}}\lesssim \lVert f\rVert _{L^{1}}$. Hence $%
g=g_{1}+g_{2}\in {M^{1}}$, and it satisfies the stated estimate.
\end{proof}

\subsection{Polyanalytic Bargmann-Fock spaces}

\label{app_poly}

A complex-valued function $F(z,\overline{z})$ defined on a subset of $%
\mathbb{C}$ is said to be \emph{polyanalytic of order }$q-1$, if it
satisfies the generalized Cauchy-Riemann equations
\begin{equation}
\left( \partial _{\overline{z}}\right) ^{q}F(z,\overline{z})=\frac{1}{2^{q}}
\left( \partial _{x}+i\partial _{\xi }\right) ^{q}F(x+i\xi ,x-i\xi )=0 \, .
\label{eq:c1}
\end{equation}
Equivalently, $F$ is a polyanalytic function of order $q-1$ if it can be
written as
\begin{equation}
F(z,\overline{z})=\sum_{k=0}^{q-1}\overline{z}^{k}\varphi _{k}(z),
\label{polypolynomial}
\end{equation}
where the coefficients $\{\varphi _{k}(z)\}_{k=0}^{q-1}$ are analytic
functions. The \emph{polyanalytic Fock space} $\mathbf{F}^{q}(\mathbb{C})$
consists of all the polyanalytic functions of order $q-1$ contained in the
Hilbert space  $L^{2}(\mathbb{C},e^{-\pi \left\vert z\right\vert ^{2}})$.
The reproducing kernel of the polyanalytic Fock space $\mathbf{F}^{q}(%
\mathbb{C}) $ is
\begin{equation*}
\mathbf{K}^{q}(z,z^{\prime })=L_{q}^{1}(\pi \left\vert z-z^{\prime
}\right\vert ^{2})e^{\pi z \overline{z^{\prime }}}.
\end{equation*}
Polyanalytic Bargmann-Fock spaces appear naturally in vector-valued
time-frequency analysis \cite{Abreusampling}, \cite{CharlyYurasuper} and
signal multiplexing \cite{balan00, balan99}. Within $\mathbf{F}^{q}(\mathbb{C})$ we
distinguish the \emph{\ polynomial subspace}
\begin{equation*}
Pol_{\pi ,q,N}=span\{z^{j}\overline{z}^{l}:0\leq j\leq N-1,0\leq l\leq q-1\},
\end{equation*}
with the Hilbert space structure of $L^{2}(\mathbb{C},e^{-\pi \left\vert
z\right\vert ^{2}})$. The polyanalytic Ginibre ensemble, introduced in \cite%
{HendHaimi}, is the DPP with correlation kernel corresponding to the
orthogonal projection onto $Pol_{\pi ,q,N}$ (weighted with the Gaussian
measure). In \cite[Proposition 2.1]{HendHaimi} it is shown that
\begin{equation*}
Pol_{\pi ,q,N}=span\{H_{j,r}(z,\overline{z}):0\leq j\leq N-1,0\leq r\leq
q-1\},
\end{equation*}
where $H_{j,r}$ are the complex Hermite polynomials \eqref{ComplexHermite}.
Thus, the reproducing kernel of $Pol_{\pi ,q,N}$ can be written as
\begin{equation}
\mathbf{K}_{\pi ,N}^{q}(z,z^{\prime
})=\sum_{r=0}^{q-1}\sum_{j=0}^{N-1}H_{j,r}(z,\overline{z}) \overline{%
H_{j,r}(z^{\prime },\overline{z^{\prime }})}.  \label{eq_ker_poly_fock_qn}
\end{equation}

\subsection{Pure polyanalytic-Fock spaces}

\label{sec_pure_pol}

The pure polyanalytic Fock spaces $\mathcal{F}^{r}(\mathbb{C})$ have been
introduced by Vasilevski in \cite{VasiFock}, under the name of true
polyanalytic spaces. They are spanned by the complex Hermite polynomials of
fixed order $r$\ and can be defined as the set of polyanalytic functions $F$
integrable in $L^{2}(\mathbb{C},e^{-\pi \left\vert z\right\vert ^{2}})$ and
such that, for some entire function $H$ \cite{Abreusampling},
\begin{equation*}
F(z)=\left( \frac{\pi ^{r}}{r!}\right) ^{\frac{1}{2}}e^{\pi \left\vert
z\right\vert ^{2}}\left( \partial _{z}\right) ^{r}\left[ e^{-\pi \left\vert
z\right\vert ^{2}}H(z)\right].
\end{equation*}
Vasilevski~\cite{VasiFock} obtained the following decomposition of the
polyanalytic Fock space $\mathbf{F}^{q}(\mathbb{C})$ into pure components
\begin{equation}
\mathbf{F}^{q}(\mathbb{C})=\mathcal{F}^{0}(\mathbb{C})\oplus ...\oplus
\mathcal{F}^{q-1}(\mathbb{C}).  \label{orthogonal}
\end{equation}
Pure polyanalytic spaces are important in signal analysis \cite%
{Abreusampling} and in connection to theoretical physics \cite{AoP,
HendHaimi}. Indeed, they parameterize the so-called \emph{Landau levels},
which are the eigenspaces of the Landau Hamiltonian and model the
distribution of electrons in high energy states (see e.g. \cite[Section 2]%
{SHIRAI}, \cite[Section 4.1]{APRT}).

The complex Hermite polynomials \eqref{ComplexHermite} provide a natural way
of defining a polynomial subspace of the true polyanalytic space:
\begin{equation*}
\mathcal{P}ol_{\pi ,r,N}=span\{H_{j,r}(z,\overline{z}):0\leq j\leq N-1\}.
\end{equation*}
Thus,
\begin{equation*}
Pol_{\pi ,q,N}=\mathcal{P}ol_{\pi ,0,N}\oplus ...\oplus \mathcal{P}ol_{\pi
,q-1,N}.
\end{equation*}
The reproducing kernel of $Pol_{\pi ,r,N}$ is therefore
\begin{equation*}
\mathcal{K}_{r,\pi ,N}(z,z^{\prime })=\sum_{j=0}^{N-1}H_{j,r}(z,\overline{z}%
) \overline{H_{j,r}(z^{\prime },\overline{z^{\prime }})},
\end{equation*}
and the corresponding determinantal point processes have been introduced in
\cite{HendHaimi}.

\end{document}